\documentclass[journal,twoside,web]{ieeecolor}

\usepackage{generic}
\usepackage{cite}
\usepackage{amsmath}
\usepackage{amssymb}
\usepackage{amsfonts}
\usepackage{ntheorem}
\usepackage{mathtools}
\usepackage{commath}
\usepackage{graphicx}
\usepackage{epsfig}
\usepackage{picture}
\usepackage{algorithm}
\usepackage{algpseudocode}
\usepackage{hyperref}
\usepackage{textcomp}
\usepackage{float}
\usepackage{verbatimbox}
\usepackage{booktabs}
\usepackage{listings}
\usepackage{url}
\usepackage{grffile}
\usepackage{times}
\usepackage[cmintegrals, cmbraces]{newtxmath}
\usepackage{tikz}
\usetikzlibrary{shapes.symbols}
\usetikzlibrary{3d, arrows.meta}
\usepackage{wasysym}

\makeatletter
\renewtheoremstyle{plain}
{\item{\theorem@headerfont ##1\ ##2\theorem@separator}~}
{\item{\theorem@headerfont ##1\ ##2\ (##3)\theorem@separator}~}
\theoremseparator{.}
\makeatother

\newtheorem{assumption}{Assumption}
\newtheorem{corollary}{Corollary}
\newtheorem{definition}{Definition}
\newtheorem{lemma}{Lemma}
\newtheorem{proposition}{Proposition}
\newtheorem{remark}{Remark}
\newtheorem{theorem}{Theorem}

\DeclareMathOperator{\col}{col}
\DeclareMathOperator{\diag}{diag}
\DeclareMathOperator{\bdiag}{bdiag}
\DeclareMathOperator{\rank}{rank}

\newcommand{\N}{{\mathbb{N}}}
\newcommand{\R}{{\mathbb{R}}}
\newcommand{\C}{{\mathbb{C}}}
\newcommand{\Ts}{\tau_{\textup{s}}}

\renewcommand{\QED}{\QEDopen}

\def\BibTeX{{\rm B\kern-.05em{\sc i\kern-.025em b}\kern-.08em
    T\kern-.1667em\lower.7ex\hbox{E}\kern-.125emX}}
\markboth{\hskip25pc IEEE TRANSACTIONS AND JOURNALS TEMPLATE}
{Bosso \MakeLowercase{\textit{et al.}}: Data-Driven Control of Continuous-Time LTI Systems via Non-Minimal Realizations}
\begin{document}
\title{Data-Driven Control of Continuous-Time LTI Systems via Non-Minimal Realizations}
\author{Alessandro Bosso, \IEEEmembership{Member, IEEE}, Marco Borghesi, \IEEEmembership{Student Member, IEEE}, Andrea Iannelli, \IEEEmembership{Member, IEEE}, Giuseppe Notarstefano, \IEEEmembership{Member, IEEE}, and Andrew R. Teel, \IEEEmembership{Fellow, IEEE}
\thanks{The research leading to these results has received funding from the European Union's Horizon Europe research and innovation program under the Marie Sk{\l}odowska-Curie Grant Agreement No. 101104404 - \mbox{IMPACT4Mech}.}
\thanks{A. Bosso, M. Borghesi, and G. Notarstefano are with the Department of Electrical, Electronic, and Information Engineering, University of Bologna, Italy. Email: {\tt\small \{alessandro.bosso, m.borghesi, giuseppe.notarstefano\}@unibo.it}}
\thanks{A. Iannelli is with the Institute for System Theory and Automatic Control, University of Stuttgart, Germany. Email: {\tt\small andrea.iannelli@ist.uni-stuttgart.de}
}
\thanks{A. R. Teel is with the Department of Electrical and Computer Engineering, University of California, Santa Barbara, CA, USA. Email: {\tt\small teel@ece.ucsb.edu}}
}

\maketitle

\begin{abstract}
This article proposes an approach to design output-feedback controllers for unknown continuous-time linear time-invariant systems using only input-output data from a single experiment.
To address the lack of state and derivative measurements, we introduce non-minimal realizations whose states can be observed by filtering the available data.
We first apply this concept to the disturbance-free case, formulating linear matrix inequalities (LMIs) from batches of sampled signals to design a dynamic, filter-based stabilizing controller.
The framework is then extended to the problem of asymptotic tracking and disturbance rejection—in short, output regulation—by incorporating an internal model based on prior knowledge of the disturbance/reference frequencies.
Finally, we discuss tuning strategies for a class of multi-input multi-output systems and illustrate the method via numerical examples.
\end{abstract}

\begin{IEEEkeywords}
Data-driven control, output regulation, linear systems, linear matrix inequalities,  uncertain systems.
\end{IEEEkeywords}

\section{Introduction}
\label{sec:introduction}
\IEEEPARstart{D}{ata-driven} methods are rapidly emerging as a central paradigm in automatic control.
Learning controllers from data is crucial for systems with uncertain or unavailable models, and has become increasingly practical thanks to recent advances in computational power and optimization techniques.
While this shift is now accelerating, its roots trace back several decades.
Some of the earliest and still active research areas that exploit data for control purposes include system identification \cite{ljung1999system} and adaptive control \cite{ioannou2012robust}.
More recently, reinforcement learning \cite{sutton2018reinforcement} has gained widespread attention, lying at the intersection of adaptive and optimal control.

Data-driven control techniques can be classified as indirect or direct, depending on whether they rely on an intermediate identification step.
Among direct methods, a popular recent approach involves using linear matrix inequalities (LMIs) to compute control policies offline from a given dataset.
Below, we provide an overview of key contributions in this area.

\subsubsection*{Direct Data-Driven Control via LMIs}
An influential contribution for discrete-time linear time-invariant (LTI) systems is \cite{de2019formulas}, which formulates LMIs where the model is replaced by a batch of data, leveraging state-space results inspired by Willems et al.'s fundamental lemma \cite{willems2005note}.
Another key work in this setting is \cite{van2020data}, which introduces the framework of data informativity.
Alternative LMIs that account for noisy data rely on tools such as the matrix S-lemma \cite{van2020noisy} or Petersen’s lemma \cite{bisoffi2022data}, where the latter is also applied to bilinear systems.
Most related literature focuses on discrete-time systems.
Recent developments in discrete time are dedicated, e.g., to the linear quadratic regulator problem \cite{dorfler2023certainty}, time-varying systems \cite{nortmann2023direct}, and multi-input multi-output (MIMO) systems \cite{li2024controller, alsalti2025notes}.

In contrast, the theory for continuous-time systems remains less developed.
Extensions of the fundamental lemma \cite{schmitz2024continuous, lopez2024input} have emerged only recently.
From a design perspective, LMIs can still be constructed from batches of sampled data.
These methods have been explored in various settings, e.g., linear \cite{de2019formulas, berberich2021data, bisoffi2022data}, switched \cite{bianchi2025data}, and polynomial systems \cite{guo2021data}, whereas \cite{eising2024sampling} studied the impact of sampling continuous-time signals on informativity.
In \cite{hu2025data}, the authors deal with output regulation, a longstanding problem combining asymptotic disturbance rejection and reference tracking \cite{isidori2003robust, isidori2017lectures}.
Notably, all cited works are limited to the state-feedback case and assume access to state derivatives.
This requirement is impractical due to noise sensitivity and poses major challenges for output-feedback design, which would involve multiple differentiations.
In the state-feedback case, the only derivative-free approaches rely on processing state and input signals via integrals \cite{de2023event}, orthogonal polynomial bases \cite{rapisarda2023orthogonal}, or more general linear functionals \cite{ohta2024sampling}.
To the best of the authors’ knowledge, the only work addressing derivative-free output-feedback control is \cite{bosso2024derivative}, which employs linear filters of the input and output signals in both the state-feedback and single-input single-output (SISO) output-feedback settings.

\subsubsection*{Filters and Non-Minimal Realizations}
The filters in \cite{bosso2024derivative} are not to be confused with signal processing tools. Rather, they define the observer dynamics of a non-minimal realization of the plant.
Non-minimal realizations have a longstanding role in the control literature, and the filters used in \cite{bosso2024derivative} date back to classical schemes for adaptive identification \cite{anderson1974adaptive} and adaptive observer design \cite{kreisselmeier1977adaptive}.
Similar filters have also played a central role in model reference adaptive control \cite{narendra1980stable}, and more recently in output-feedback policy and value iteration methods \cite{rizvi2019reinforcement}.
Their connection with non-minimal realizations is well established \cite[Ch. 4]{narendra1989stable}, although the classical analysis is based on transfer function or transfer matrix arguments.

\subsubsection*{Article Contribution}
This article proposes a data-driven control framework for continuous-time LTI systems based on non-minimal realizations.
Building on the state-space approach of \cite{bosso2024derivative}, we address the stabilization and output regulation problems in a general MIMO setting, without state or derivative measurements.
In particular, we adopt a perspective related to nonlinear Luenberger observers \cite{andrieu2006existence, bernard2022observer}, where the initial approach of \cite{luenberger1964observing} is extended to the nonlinear case by using output filters that reconstruct a higher-dimensional system with contracting error dynamics.
Similarly, our framework lifts the original plant onto a higher-dimensional, input-output equivalent system that is fully user-defined except for the output equation, which contains all uncertainties.
This class of non-minimal realizations is inspired by the canonical parameterizations of internal models in output regulation \cite{isidori2003robust}, and we accordingly refer to them as \emph{canonical non-minimal realizations}.
The contributions of this article are as follows:

\subsubsection*{1)}
We present a direct data-driven algorithm that computes a stabilizing output-feedback controller for MIMO systems from an input-output trajectory collected in a single experiment.
Assuming a canonical non-minimal realization is available, the procedure involves filtering the trajectory and using batches of sampled data to define an LMI akin to those first introduced in \cite{hu2025data}.
From the gains computed with the LMI, we obtain a dynamic, observer-based controller.

\subsubsection*{2)} 
We extend the above method to the case of output regulation, where the plant and the available trajectory are both affected by an unknown disturbance assumed to be a sum of constants and sinusoids with known frequencies.
Our approach integrates the previous filters with an internal model unit and solves an LMI to design an observer-based regulator.
    
\subsubsection*{3)} 
We show how to construct a canonical non-minimal realization for any MIMO system with uniform observability index across all outputs.
Furthermore, we prove that, under informative data, the observability index can be directly estimated from the given input-output trajectory.
The centrality of observability indices for MIMO data-driven control has been recognized and exploited in the discrete-time literature \cite{li2024controller}, \cite{alsalti2025notes}.
Similarly, this article develops notions in continuous time for the special case of uniform observability index.

\subsubsection*{Article Organization}
In Section \ref{sec:problem}, we state the data-driven control problems addressed in this article.
In Section \ref{sec:realization}, we introduce the canonical non-minimal realizations.
Then, Sections \ref{sec:stabilization} and \ref{sec:out_reg} describe the algorithms for data-driven stabilization and output regulation, respectively.
The design of canonical non-minimal realizations is presented in Section~\ref{sec:tuning}.
Finally, Section \ref{sec:simulations} showcases some numerical examples and Section \ref{sec:conclusion} concludes the article.
Some auxiliary tools and the more technical proofs are left in the Appendix.

\subsubsection*{Notation}
We use $\N$, $\R$, and $\C$ to denote the sets of natural, real, and complex numbers.
We denote with $I_j$ the identity matrix of dimension $j$ and with $0_{j \times k}$ the zero matrix of dimension $j \times k$.
Given a symmetric matrix $M = M^\top$, $M \succ 0$ (resp. $M \prec 0$) denotes that it is positive definite (resp. negative definite).
We use $\otimes$ to denote the Kronecker product of matrices.
Finally, for any square matrix $M$, we indicate with $\sigma(M)$ its spectrum, i.e., the set of all its eigenvalues.

\section{Problem Statement}\label{sec:problem}
\subsection{Data-Driven Stabilization}\label{sec:problem1}
Consider a continuous-time linear time-invariant system of the form
\begin{equation}\label{eq:plant}
    \begin{split}
         \dot{x} &= Ax + Bu\\
         y &= Cx,
    \end{split}
\end{equation}
where $x \in \R^n$ is the system state, $u \in \R^m$ is the control input, $y \in \R^p$ is the measured output, while $A \in \R^{n \times n}$, $B \in \R^{n \times m}$, and $C \in \R^{p \times n}$ are unknown matrices that satisfy the following assumption, used throughout the article.
\begin{assumption}\label{hyp:ctrb_obs}
    The pair $(A, B)$ is controllable and the pair $(C, A)$ is observable.
\end{assumption}
Suppose that a single experiment is performed on system \eqref{eq:plant} and the resulting input-output data are collected over an interval of length $\tau > 0$:
\begin{equation}\label{eq:dataset}
    (u(t), y(t)), \quad \forall t \in [0, \tau].
\end{equation}
The first problem that we consider is to find a dynamic, output-feedback stabilizing controller for system \eqref{eq:plant} of the form
\begin{equation}\label{eq:ctrl_stabilization}
    \begin{split}
        \dot{\xi} &= A_{\text{c}}\xi + B_{\text{c}}y\\
        u&= C_{\text{c}}\xi + D_{\text{c}}y,
    \end{split}
\end{equation}
where the design of matrices $A_{\text{c}}$, $B_{\text{c}}$, $C_{\text{c}}$, and $D_{\text{c}}$ is solely based on the dataset \eqref{eq:dataset}, without any intermediate identification step.

\subsection{Data-Driven Output Regulation}\label{sec:problem2}
We now consider a more general scenario where we also want to track an output reference and reject a disturbance affecting the plant and the measurements.
This problem, known in the literature as output regulation, is modeled by adding to \eqref{eq:plant} an unknown input $w \in \R^{l}$ as shown below:
\begin{equation}\label{eq:plant_w}
    \begin{split}
        \dot{x} &= Ax + Bu + Pw\\
        y &= Cx + Qw, 
    \end{split}
\end{equation}
where $P \in \R^{n \times l}$ and $Q \in \R^{p \times l}$ are unknown matrices and, as before, $A$, $B$, and $C$ are unknown and satisfy Assumption \ref{hyp:ctrb_obs}.
In system \eqref{eq:plant_w}, the exogenous input $w$ is used to model all reference and disturbance signals that can be obtained from the sum of constant signals and sinusoids with known frequencies but unknown amplitudes and phases.
More precisely, we suppose that $w$ is generated by the autonomous exosystem
\begin{equation}\label{eq:exo}
    \dot{w} = Sw,
\end{equation}
where the matrix $S \in \R^{l \times l}$ is known and neutrally stable, i.e., its minimal polynomial has simple roots on the imaginary axis.
Note that, even if $S$ is known, $w$ is not available as the initial condition $w(0)$ is unknown.

To formulate the tracking objective of output regulation, we suppose that the measured output $y$ in \eqref{eq:plant_w} contains $q \leq p$ components that we want to steer to zero.
In particular, we let:
\begin{equation}\label{eq:CQ}
    C = \begin{bmatrix}
        C_e\\
        C_{\text{r}}
    \end{bmatrix}, \qquad Q = \begin{bmatrix}
        Q_e\\
        Q_{\text{r}}
    \end{bmatrix}
\end{equation}
with $C_e \in \R^{q \times n}$ and $Q_e \in \R^{q \times l}$.
Then, we can split $y$ as
\begin{equation}
    e \coloneqq C_e x + Q_e w, \qquad y_{\text{r}} \coloneqq C_{\text{r}} x + Q_{\text{r}} w,
\end{equation}
where $e \in \R^{q}$ is the \emph{regulated output} (to be steered to zero), while $y_{\text{r}} \in \R^{p - q}$ is the \emph{residual output}, which is used as auxiliary measurement for output-feedback control design.

For the solvability of the problem, we make the following assumption, usually referred to as \emph{non-resonance condition}.
\begin{assumption}\label{hyp:non-resonance}
    It holds that
    \begin{equation}
        \rank\begin{bmatrix}
            A - s I_n & B\\
            C_e & 0
        \end{bmatrix} = n + q,
    \end{equation}
    for all $s \in \sigma(S)$.
\end{assumption}
Loosely speaking, Assumption \ref{hyp:non-resonance} requires that no transmission zero of the triple $(C_e, A, B)$ is also an eigenvalue of $S$.
It is well known that Assumption \ref{hyp:non-resonance} is a necessary condition for the solvability of the output regulation problem \cite[Ch. 4]{isidori2017lectures}.

Suppose that a single experiment is performed on system \eqref{eq:plant_w}, while influenced by exosystem \eqref{eq:exo}, and the resulting input-output data are collected over an interval of length $\tau > 0$:
\begin{equation}\label{eq:dataset_w}
    (u(t), e(t), y_{\text{r}}(t)), \quad \forall t \in [0, \tau].
\end{equation}
Note that, compared with the previous case, the data are affected by the unknown exogenous signal $w(t)$.
Our goal is to find a dynamic, output-feedback controller of the form
\begin{equation}\label{eq:ctrl_w}
    \begin{split}
        \dot{\xi} &= A_{\text{c}}\xi + B_{\text{c}}\begin{bmatrix}
            e \\ y_{\text{r}}
        \end{bmatrix}\\
        u&= C_{\text{c}}\xi + D_{\text{c}}\begin{bmatrix}
            e \\ y_{\text{r}}
        \end{bmatrix},
    \end{split}
\end{equation}
where the matrices $A_{\text{c}}$, $B_{\text{c}}$, $C_{\text{c}}$, and $D_{\text{c}}$ are designed, using only the dataset \eqref{eq:dataset_w} and the prior knowledge of $S$, to ensure the following properties during the online deployment of \eqref{eq:ctrl_w}:
\begin{itemize}
    \item If $w = 0$, the origin $\col(x, \xi) = 0$ is globally exponentially stable for the feedback interconnection of the plant \eqref{eq:plant_w} and the controller \eqref{eq:ctrl_w}.
    \item For any solution $w(t)$ of the exosystem \eqref{eq:exo} and any initial condition of the plant \eqref{eq:plant_w} and the controller \eqref{eq:ctrl_w}, it holds that
    \begin{equation}
        \lim_{t \to \infty}e(t) = 0.
    \end{equation}
\end{itemize}

\section{Canonical Non-Minimal Realizations}\label{sec:realization}
To develop our data-driven control framework, we introduce a non-minimal realization of system \eqref{eq:plant} of the following form:
\begin{equation}\label{eq:non-minimal_plant}
    \begin{split}
        \dot{\zeta} &= (F + LH)\zeta + G u\\
        y &= H\zeta,
    \end{split}
\end{equation}
where $u \in \R^m$ and $y \in \R^p$ are the same of \eqref{eq:plant}, $\zeta \in \R^\mu$, with $\mu \geq n$, is the non-minimal state, $F \in \R^{\mu \times \mu}$, $G \in \R^{\mu \times m}$, and $L \in \R^{\mu \times p}$ are user-defined matrix gains, while $H \in \R^{p \times \mu}$ is an unknown matrix depending on the parameters of \eqref{eq:plant} according to the following novel result, which generalizes \cite[Lem. 3]{bosso2024derivative}.
\begin{lemma}\label{lem:Pi}
    Let Assumption \ref{hyp:ctrb_obs} hold.
    Suppose that, given the plant matrices $A$, $B$, $C$, and the design matrices $F$, $G$, $L$, there exist matrices $\Pi \in \R^{n \times \mu}$ and $H \in \R^{p \times \mu}$ such that
    \begin{equation}\label{eq:Pi}
        \begin{split}
            \Pi(F + LH) &= A\Pi, \quad \Pi G = B\\
            H &= C\Pi.
        \end{split}
    \end{equation}
    Then, $\Pi$ has full-row rank and the controllable and observable subsystem of \eqref{eq:non-minimal_plant} obeys dynamics \eqref{eq:plant}, with state $x = \Pi \zeta$.
\end{lemma}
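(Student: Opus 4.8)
The plan is to exploit the three intertwining identities in \eqref{eq:Pi} and propagate them to arbitrary matrix powers, then read off both claims. Writing $\bar F \coloneqq F + LH$ for the closed-loop filter matrix, the first identity reads $\Pi \bar F = A\Pi$; a straightforward induction then gives $\Pi \bar F^k = A^k \Pi$ for every $k \in \N$. Combining this with $\Pi G = B$ and $H = C\Pi$ yields the two derived relations $\Pi \bar F^k G = A^k B$ and $H\bar F^k = C A^k \Pi$, which carry all the structural information needed below.

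For the full-row-rank claim, I would left-multiply the controllability matrix of the pair $(\bar F, G)$ by $\Pi$. Using $\Pi \bar F^k G = A^k B$, this gives $\Pi\,[\,G\ \bar F G\ \cdots\ \bar F^{\mu-1} G\,] = [\,B\ AB\ \cdots\ A^{\mu-1} B\,]$. By the Cayley--Hamilton theorem the terms $A^k B$ with $k \geq n$ lie in the span of $\{A^j B : j < n\}$, so the right-hand side has rank $n$ because $(A,B)$ is controllable under Assumption \ref{hyp:ctrb_obs}. Hence the product has rank $n$, which forces $\rank \Pi \geq n$; since $\Pi$ has only $n$ rows, $\rank \Pi = n$, i.e. $\Pi$ has full row rank.

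Next I would establish the state correspondence by the direct substitution $x = \Pi \zeta$. Differentiating along \eqref{eq:non-minimal_plant} and inserting the identities \eqref{eq:Pi} gives $\dot x = \Pi \bar F \zeta + \Pi G u = A \Pi \zeta + B u = A x + B u$ together with $y = H\zeta = C\Pi\zeta = Cx$, so every trajectory of the non-minimal realization projects through $\Pi$ onto a trajectory of \eqref{eq:plant}.

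The main obstacle, and the final step, is to certify that this projection singles out precisely the \emph{controllable and observable} subsystem, rather than merely some quotient that happens to obey the plant equations. Here I would characterize the unobservable subspace $\mathcal{N}$ of $(\bar F, H)$ as $\mathcal{N} = \ker \Pi$: the inclusion $\ker \Pi \subseteq \mathcal{N}$ is immediate from $H\bar F^k = C A^k \Pi$, while the reverse inclusion uses that $v \in \mathcal{N}$ implies $C A^k (\Pi v) = 0$ for all $k$, whence $\Pi v = 0$ because $(C,A)$ is observable. Since $\rank \Pi = n$, this gives $\dim \mathcal{N} = \mu - n$, so the observable part has dimension exactly $n$, and the reachability computation above shows $\Pi$ maps the reachable subspace onto $\R^n$, ruling out any observable-but-uncontrollable modes. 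Equivalently---and perhaps more transparently for the write-up---the relation $\Pi(sI_\mu - \bar F)^{-1} = (sI_n - A)^{-1}\Pi$, obtained directly from $(sI_n - A)\Pi = \Pi(sI_\mu - \bar F)$, yields $H(sI_\mu - \bar F)^{-1} G = C(sI_n - A)^{-1} B$, so the two realizations share a transfer function whose McMillan degree is $n$ by minimality of $(A,B,C)$. The controllable and observable part of \eqref{eq:non-minimal_plant} therefore has dimension $n$ and, in the coordinate $x = \Pi\zeta$, realizes \eqref{eq:plant}, which completes the argument.
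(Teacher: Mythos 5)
Your proposal is correct and follows essentially the same route as the paper: the full-row-rank claim uses exactly the paper's argument (multiply the controllability-type matrix built from $(F+LH,G)$ by $\Pi$ and invoke controllability of $(A,B)$), and the identification of the plant as the controllable-and-observable part rests on the same substitution $x=\Pi\zeta$ together with observability of $(C,A)$. The only difference is in packaging: where the paper completes $\Pi$ to a non-singular matrix $\bigl[\,\Xi^\top\ \Pi^\top\bigr]^\top$ and reads the conclusion off the resulting block-triangular Kalman decomposition, you characterize the unobservable subspace as $\ker\Pi$ and count dimensions (with a McMillan-degree variant), which is an equivalent, slightly more abstract rendering of the same argument.
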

\begin{proof}
    We first prove that any solution $\Pi$ to \eqref{eq:Pi} satisfies $\rank \Pi = n$.
    By pre-multiplying by $A$ the second equation of \eqref{eq:Pi}, we obtain
    \begin{equation}
        AB = A \Pi G = \Pi(F + LH)G.
    \end{equation}
    Repeat this process and stack the resulting vectors to obtain:
    \begin{equation}\label{eq:Pi_full_rank}
        \Pi M = \begin{bmatrix}
            B & AB & \cdots & A^{n-1}B
        \end{bmatrix},
    \end{equation}
    where:
    \begin{equation}
        M \coloneqq \begin{bmatrix}
            G & (F + LH)G & \cdots & (F + LH)^{n-1}G
        \end{bmatrix}.
    \end{equation}
    Since $(A, B)$ is controllable by Assumption \ref{hyp:ctrb_obs}, from \eqref{eq:Pi_full_rank} we obtain that $\rank \Pi M = n$, which implies that $\rank \Pi = n$.

    We now focus on system \eqref{eq:non-minimal_plant}.
    Define
    \begin{equation}\label{eq:Kalman_obs}
        \begin{bmatrix}
            x_{\bar{o}} \\ x
        \end{bmatrix} \coloneqq \begin{bmatrix}
            \Xi \\ \Pi
        \end{bmatrix}\zeta,
    \end{equation}
    where $\Xi$ contains $\mu - n$ linearly independent rows such that $[\,\Xi^\top \; \Pi^\top]$ is non-singular.
    From \eqref{eq:non-minimal_plant} and \eqref{eq:Pi}, we obtain
    \begin{equation}\label{eq:Pi_zeta_dyn}
        \begin{split}
            \dot{x} &= \Pi(F + LH)\zeta + \Pi G u = Ax + Bu\\
            y&= H\zeta = C\Pi\zeta = Cx,
        \end{split}
    \end{equation}
    and, thus, we achieve the following Kalman decomposition:
    \begin{equation}\label{eq:Kalman_decomposition}
        \begin{split}
            \begin{bmatrix}
                \dot{x}_{\bar{o}} \\ \dot{x}
            \end{bmatrix} &= \begin{bmatrix}
                A_{\bar{o}} & A_{\times}\\
                0_{n \times (\mu - n)} & A
            \end{bmatrix}\begin{bmatrix}
                x_{\bar{o}} \\ x
            \end{bmatrix} + \begin{bmatrix}
                B_{\bar{o}} \\ B
            \end{bmatrix}u\\
            y &= \begin{bmatrix}
                \, 0_{p \times (\mu - n)} \, & C\,
            \end{bmatrix} \begin{bmatrix}
                x_{\bar{o}} \\ x 
            \end{bmatrix},
        \end{split}
    \end{equation}
    for some matrices $A_{\bar{o}}$, $A_{\times}$, $B_{\bar{o}}$.
    The statement follows by noticing that the $x$-subsystem is controllable and observable by Assumption \ref{hyp:ctrb_obs}.
\end{proof}

We postpone to Section \ref{sec:tuning} the design of $F$, $G$, and $L$ to ensure the existence of $\Pi$ and $H$ for a class of MIMO systems.
\begin{remark}\label{rem:transfer}
    Lemma \ref{lem:Pi} ensures that the transfer matrices of \eqref{eq:plant} and \eqref{eq:non-minimal_plant} coincide, i.e., $C(sI_n - A)^{-1} B = H(sI_\mu - F - LH)^{-1}G$.
    This property can be verified by using $(sI_n - A)^{-1} = I_n/s + A/s^2 + A^2/s^3 + \ldots$ and the same series expansion for $(sI_\mu - F - LH)^{-1}$, then applying \eqref{eq:Pi} recursively to each term.
\end{remark}
The next result further characterizes the Kalman decomposition obtained with Lemma \ref{lem:Pi}.
\begin{lemma}\label{lem:eig_F+LH}
    Under the hypotheses of Lemma \ref{lem:Pi}, it holds that
    \begin{equation}
        \sigma(F + LH) \subset (\sigma(A)\cup\sigma(F)).
    \end{equation}
    In particular, the unobservable eigenvalues of $F + LH$ are eigenvalues of $F$, i.e., $\sigma(A_{\bar{o}}) \subset \sigma(F)$, with $A_{\bar{o}}$ given in \eqref{eq:Kalman_decomposition}.
\end{lemma}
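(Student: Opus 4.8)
The plan is to exploit the Kalman decomposition \eqref{eq:Kalman_decomposition} already obtained in Lemma \ref{lem:Pi}. Since the change of coordinates $[\,\Xi^\top \; \Pi^\top]^\top$ is a similarity transformation, the matrix $F + LH$ is similar to the block-triangular matrix in \eqref{eq:Kalman_decomposition}, whence
\begin{equation}
    \sigma(F + LH) = \sigma(A_{\bar{o}}) \cup \sigma(A).
\end{equation}
Consequently, once the ``in particular'' claim $\sigma(A_{\bar{o}}) \subset \sigma(F)$ is established, the first inclusion follows at once by taking the union with $\sigma(A)$. The whole proof therefore reduces to characterizing the unobservable dynamics $A_{\bar{o}}$.

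First I would identify the unobservable subspace of the pair $(H, F + LH)$ with $\ker \Pi$. Using $H = C\Pi$ together with the intertwining relation $\Pi(F + LH) = A\Pi$ from \eqref{eq:Pi}, a short induction gives $H(F + LH)^k = C A^k \Pi$ for all $k \geq 0$. Hence a vector $v$ is unobservable if and only if $C A^k \Pi v = 0$ for every $k$, which by observability of $(C, A)$ (Assumption \ref{hyp:ctrb_obs}) forces $\Pi v = 0$. This confirms that the unobservable subspace is exactly $\mathcal{N} \coloneqq \ker \Pi$, of dimension $\mu - n$, and that $A_{\bar{o}}$ is the matrix representation of the restriction $(F + LH)|_{\mathcal{N}}$ in the coordinates induced by $\Xi$, so that $\sigma(A_{\bar{o}}) = \sigma\big((F + LH)|_{\mathcal{N}}\big)$.

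The key observation is that on $\mathcal{N}$ the feedback term $LH$ vanishes. Indeed, for any $v \in \ker \Pi$ we have $Hv = C\Pi v = 0$, so $(F + LH)v = Fv$. Moreover $\mathcal{N}$ is $(F + LH)$-invariant, since $\Pi(F + LH)v = A\Pi v = 0$, and therefore $Fv = (F + LH)v \in \mathcal{N}$. This shows that $\mathcal{N}$ is also $F$-invariant and that the two restrictions coincide, $(F + LH)|_{\mathcal{N}} = F|_{\mathcal{N}}$. Thus $\sigma(A_{\bar{o}}) = \sigma\big(F|_{\mathcal{N}}\big) \subset \sigma(F)$, and combining with the first display yields $\sigma(F + LH) \subset \sigma(A) \cup \sigma(F)$.

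I expect the main subtlety to lie not in any single computation but in the bookkeeping that links the abstract restriction $(F + LH)|_{\mathcal{N}}$ to the concrete block $A_{\bar{o}}$ of the decomposition, and in arguing the $F$-invariance of $\mathcal{N}$ cleanly enough that ``$F$ and $F + LH$ agree on $\mathcal{N}$'' can be upgraded from an equality of actions to an equality of spectra. Everything else is a direct consequence of \eqref{eq:Pi} and Assumption \ref{hyp:ctrb_obs}.
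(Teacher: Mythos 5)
Your proof is correct, and it follows the paper's skeleton in its first half: both arguments start from the Kalman decomposition \eqref{eq:Kalman_decomposition} of Lemma \ref{lem:Pi} to obtain $\sigma(F + LH) = \sigma(A_{\bar{o}}) \cup \sigma(A)$, and both then exploit the single key fact that $H$ annihilates unobservable directions, so that $F + LH$ acts as $F$ on them. Where you genuinely diverge is in how this fact is deployed. The paper works eigenvalue-by-eigenvalue: it invokes the PBH-type characterization of unobservable eigenvalues (citing \cite[Ch.~3, Cor.~4.6]{antsaklis1997linear}) to produce, for each $s \in \sigma(A_{\bar{o}})$, an eigenvector $v \neq 0$ with $(F + LH)v = sv$ and $Hv = 0$, whence $Fv = sv$ in one line. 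You instead characterize the entire unobservable subspace: the identity $H(F+LH)^k = CA^k\Pi$ together with observability of $(C,A)$ identifies it as $\ker\Pi$, which you then show is invariant under both $F+LH$ and $F$ with coinciding restrictions, so $\sigma(A_{\bar{o}}) = \sigma\bigl(F|_{\ker\Pi}\bigr) \subset \sigma(F)$. Your route costs more bookkeeping (linking the abstract restriction to the block $A_{\bar{o}}$, which you handle correctly since $\Xi$ restricted to $\ker\Pi$ is a bijection onto the $x_{\bar{o}}$-coordinates), but it is self-contained—no external corollary is needed—and it delivers strictly more than the lemma asks: the unobservable subspace of $(H, F+LH)$ is exactly $\ker\Pi$, and the unobservable dynamics are literally governed by $F$ restricted to that subspace, not merely spectrally contained in $\sigma(F)$. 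Either argument proves the statement.
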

\begin{proof}
    From the decomposition \eqref{eq:Kalman_decomposition}, it holds that $\sigma(F + LH) = \sigma(A_{\bar{o}}) \cup \sigma(A)$.
    By \cite[Ch. 3, Cor. 4.6]{antsaklis1997linear}, any eigenvalue $s \in \sigma(A_{\bar{o}})$ is such that there exists a vector $v \in \C^\mu$, with $v \neq 0$, satisfying:
    \begin{equation}
        (F + LH)v = s v, \quad Hv = 0.
    \end{equation}
    Using the second equation in the first one, we obtain
    \begin{equation}
        Fv = s v,
    \end{equation}
    which proves the statement.
\end{proof}

Note that system \eqref{eq:non-minimal_plant} can be rewritten as
\begin{equation}\label{eq:non-minimal_plant_filter}
    \begin{split}
        \dot{\zeta} &= F\zeta + Gu + Ly\\
        y &= H\zeta,
    \end{split}
\end{equation}
where the unknown matrix $H$ appears only in the output equation, while the matrices $F$, $G$, and $L$ of the differential equation are user-defined, thus they are known.
If we further require that $F$ is Hurwitz, we finally obtain the following novel definition, which plays a central role in this article.
\begin{definition}\label{def:canonical_realization}
    Under Assumption \ref{hyp:ctrb_obs}, system \eqref{eq:non-minimal_plant} is said to be a \emph{canonical non-minimal realization} of the plant \eqref{eq:plant} if $F$, $G$, $L$, and $H$ are such that:
    \begin{itemize}
        \item There exists $\Pi$ such that $\Pi$ and $H$ solve equation \eqref{eq:Pi}.
        \item $F$ is Hurwitz.
    \end{itemize}
    Furthermore, the realization is said to be \emph{strong} if the pair $(F + LH, G)$ is controllable.
    It is said to be \emph{weak} otherwise.
\end{definition}
The next result is an immediate consequence of Definition \ref{def:canonical_realization}.
\begin{corollary}\label{lem:st_det}
    Under Assumption \ref{hyp:ctrb_obs}, let \eqref{eq:non-minimal_plant} be a canonical non-minimal realization of the plant \eqref{eq:plant}.
    Then, the pair $(F + LH, G)$ is stabilizable and the pair $(H, F + LH)$ is detectable.
\end{corollary}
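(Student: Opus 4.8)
The plan is to derive both properties from the Kalman decomposition \eqref{eq:Kalman_decomposition} established in Lemma \ref{lem:Pi}, combined with the spectral inclusion $\sigma(A_{\bar{o}}) \subset \sigma(F)$ from Lemma \ref{lem:eig_F+LH} and the fact that $F$ is Hurwitz by Definition \ref{def:canonical_realization}. Since stabilizability and detectability are invariant under the similarity transformation \eqref{eq:Kalman_obs}, I would carry out the whole argument in the block-triangular coordinates of \eqref{eq:Kalman_decomposition}.

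For detectability I would argue directly. In these coordinates the $x$-subsystem carries the pair $(C, A)$, which is observable by Assumption \ref{hyp:ctrb_obs}, and the output does not depend on $x_{\bar{o}}$; hence the unobservable subspace of $(H, F+LH)$ is $\{x = 0\}$ and its unobservable modes coincide with $\sigma(A_{\bar{o}})$. By Lemma \ref{lem:eig_F+LH} these modes lie in $\sigma(F)$, which is contained in the open left-half plane because $F$ is Hurwitz. Therefore every unobservable mode is stable and $(H, F+LH)$ is detectable.

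For stabilizability I would use the Popov--Belevitch--Hautus test on the decomposed pair. Suppose $s$ is an uncontrollable mode, i.e., there is a nonzero left eigenvector $w = \col(w_1, w_2)$ with $w^\top(F+LH) = s\,w^\top$ and $w^\top G = 0$; in the coordinates of \eqref{eq:Kalman_decomposition} this reads $w_1^\top A_{\bar{o}} = s\,w_1^\top$, $w_1^\top A_\times + w_2^\top A = s\,w_2^\top$, and $w_1^\top B_{\bar{o}} + w_2^\top B = 0$. The key step is to rule out $w_1 = 0$: if $w_1 = 0$, the remaining relations give $w_2^\top A = s\,w_2^\top$ and $w_2^\top B = 0$ with $w_2 \neq 0$, contradicting controllability of $(A, B)$. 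Hence $w_1 \neq 0$ is a left eigenvector of $A_{\bar{o}}$, so that $s \in \sigma(A_{\bar{o}}) \subset \sigma(F)$ by Lemma \ref{lem:eig_F+LH}, and $\mathrm{Re}(s) < 0$ since $F$ is Hurwitz. Thus all uncontrollable modes are stable and $(F+LH, G)$ is stabilizable.

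The routine portion is merely translating the PBH conditions into the block-triangular coordinates; the only substantive step, which I expect to be the main point rather than a real obstacle, is the observation that controllability of the minimal pair $(A, B)$ forces any uncontrollable mode into the unobservable block $A_{\bar{o}}$, after which stability is immediate from Lemma \ref{lem:eig_F+LH} and the Hurwitz property of $F$.
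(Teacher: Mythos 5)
Your proposal is correct and follows essentially the same route as the paper: both rest on the Kalman decomposition \eqref{eq:Kalman_decomposition} from Lemma \ref{lem:Pi}, the inclusion $\sigma(A_{\bar{o}}) \subset \sigma(F)$ from Lemma \ref{lem:eig_F+LH}, and the Hurwitz property of $F$. The paper compresses this into the one-line observation that $x_{\bar{o}}$ contains all unobservable and uncontrollable states; your explicit PBH argument (ruling out $w_1 = 0$ via controllability of $(A,B)$) is precisely the justification of that claim, spelled out.
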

\begin{proof}
    Note that $x_{\bar{o}}$ in \eqref{eq:Kalman_decomposition} contains all unobservable and uncontrollable states.
    Thus, stabilizability and detectability follow from Lemma \ref{lem:eig_F+LH} with $F$ Hurwitz.
\end{proof}

We now use Definition \ref{def:canonical_realization} to solve the data-driven stabilization and output regulation problems of Section \ref{sec:problem}.

\section{Data-Driven Stabilization}\label{sec:stabilization}

The approach to solve the stabilization problem of Section \ref{sec:problem1} is summarized in Algorithm \ref{alg:stabilization}, where we use the dataset \eqref{eq:dataset} (reported in \eqref{eq:dataset_1} for convenience) to construct the controller \eqref{eq:controller}.
In Fig. \ref{fig:stabilization}, we show the closed-loop interconnection of the plant \eqref{eq:plant} with the controller.
In the following, we illustrate the design of Algorithm \ref{alg:stabilization} and its theoretical guarantees.

Let Assumption \ref{hyp:ctrb_obs} hold, and suppose that $F$, $G$, and $L$ have been designed so that \eqref{eq:non-minimal_plant} is a canonical non-minimal realization of \eqref{eq:plant}, with $H$ and $\Pi$ solving \eqref{eq:Pi}.
Since \eqref{eq:non-minimal_plant} can be written as \eqref{eq:non-minimal_plant_filter}, we introduce a replica of its dynamics:
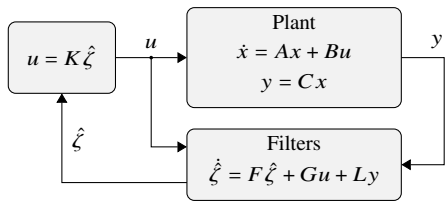
\begin{figure}[b!]

    \vspace{-5pt}
    
    \centering
    \begin{tikzpicture}[scale = 0.95]
		\draw[-{Triangle}] (-1, 0) -- (0, 0);
		\node[anchor = south] at (-0.5, 0) {\footnotesize $u$};
        \draw[-{Triangle}] (-0.5, 0) -- (-0.5, -1.25) -- (0, -1.25);
        \filldraw[fill=black] (-0.5, 0) circle (0.5pt);
		\filldraw[fill=gray!10, rounded corners=1mm](0, -0.7) rectangle (3, 0.7);
        \node[anchor = center] at (1.5, 0.5) {\footnotesize Plant};
        \node[anchor = center] at (1.5, 0.1) {\footnotesize $\dot{x} = Ax + Bu$};
        \node[anchor = center] at (1.5, -0.35) {\footnotesize $y = C x$};
        \draw[-{Triangle}] (3, 0) -- (3.6, 0) -- (3.6, -1.5) -- (3, -1.5);
        \node[anchor = south] at (3.5, 0) {\footnotesize $y$};
        \filldraw[fill=gray!10, rounded corners=1mm](0, -2) rectangle (3, -1);
        \node[anchor = center] at (1.5, -1.2) {\footnotesize Filters};
        \node[anchor = center] at (1.5, -1.6) {\footnotesize $\dot{\hat{\zeta}} = F\hat{\zeta} + G u + Ly$};
        \filldraw[fill=gray!10, rounded corners=1mm](-2.5, -0.5) rectangle (-1, 0.5);
        \node[anchor = center] at (-1.75, 0) {\footnotesize $u = K\hat{\zeta}$}; 
        \draw[-{Triangle}] (0, -1.75) -- (-1.75, -1.75) -- (-1.75, -0.5);
        \node[anchor = west] at (-1.75, -1.15) {\footnotesize $\hat{\zeta}$};
	\end{tikzpicture}
    
    \caption{Implementation of the controller \eqref{eq:controller} designed in Algorithm \ref{alg:stabilization}.}
    \label{fig:stabilization}
    
\end{figure}
\begin{equation}\label{eq:obs}
    \dot{\hat{\zeta}} = F\hat{\zeta} + Gu + Ly,
\end{equation}
which can be seen as an observer of $\zeta$ or, equivalently, a filter of the input-output data.
Note that $\Pi$ and $H$ are not available for design and are only used for analysis.

\begin{algorithm}[t]
    \caption{Data-Driven Stabilization}\label{alg:stabilization}
    \begin{algorithmic}
    \State \hspace{-0.47cm} \textbf{Initialization} 
    \State \emph{Dataset:}
    \begin{equation}\label{eq:dataset_1}
        (u(t), y(t)), \qquad \forall t \in [0, \tau].
    \end{equation}
    \State \emph{Tuning:} $F$, $G$, $L$ such that \eqref{eq:non-minimal_plant} is a canonical non-minimal realization of \eqref{eq:plant}; $(F_0, G_0)$ as in \eqref{eq:Delta}; number of samples $N \in \N$, $N \geq 1$, $\Ts \coloneqq \tau/N$.
    \State \hspace{-0.47cm} \textbf{Data Batches Construction} 
    \State \emph{Filter of the data:}
    \begin{equation}\label{eq:filter}
        \dot{\hat{\zeta}}(t) = F\hat{\zeta}(t) + Gu(t) + Ly(t),\quad \forall t \in [0, \tau].
    \end{equation}
    Initialization: $\hat{\zeta}(0) = 0 \in \R^{\mu}$.
    \State \emph{Auxiliary Dynamics:}
    \begin{equation}\label{eq:aux}
        \dot{\chi}(t) = F_0\chi(t),\quad \forall t \in [0, \tau].
    \end{equation}
    Initialization: $\chi(0) = G_0 \in \R^\delta$.
    \State \emph{Sampled data batches:}
    \begin{equation}\label{eq:batches}
        \begin{split}
            \mathcal{U} &\coloneqq \begin{bmatrix}
                u(0) & u(\Ts) & \cdots & \,u((N - 1)\Ts)
            \end{bmatrix} \in \R^{m \times N}\\
            \mathcal{X} &\coloneqq \begin{bmatrix}
                \chi(0) & \chi(\Ts) & \!\cdots & \!\chi((N-1)\Ts)
            \end{bmatrix} \in \R^{\delta \times N}\\
            \mathcal{Z}& \coloneqq \begin{bmatrix}
                \hat{\zeta}(0) & \hat{\zeta}(\Ts) & \cdots & \hat{\zeta}((N-1)\Ts)
            \end{bmatrix} \in \R^{\mu \times N}\\
            \dot{\mathcal{Z}}& \coloneqq \begin{bmatrix}
                \dot{\hat{\zeta}}(0) & \dot{\hat{\zeta}}(\Ts) & \cdots & \dot{\hat{\zeta}}((N-1)\Ts)
            \end{bmatrix} \in \R^{\mu \times N}
        \end{split}
    \end{equation}
    \State \hspace{-0.47cm} \textbf{Stabilizing Gain Computation}
    \State \emph{LMI:} find $\mathcal{P} \in \R^{\mu \times \mu}$, $\mathcal{Q} \in \R^{N \times \mu}$ such that:
    \begin{equation}\label{eq:LMI}
        \begin{cases}
            \mathcal{P} = \mathcal{P}^\top \succ 0\\
            \dot{\mathcal{Z}}\mathcal{Q} + \mathcal{Q}^\top\dot{\mathcal{Z}}^\top \prec 0\\
            \begin{bmatrix}
                0_{\delta \times \mu}\\
                \mathcal{P}
            \end{bmatrix} = \begin{bmatrix}
                \mathcal{X}\\
                \mathcal{Z}
            \end{bmatrix} \mathcal{Q}.
        \end{cases}
    \end{equation}
    \State \emph{Control gain:}
    \begin{equation}\label{eq:K}
        K = \mathcal{U}\mathcal{Q}\mathcal{P}^{-1}.
    \end{equation}
    \State \hspace{-0.47cm} \textbf{Control Deployment}
    \State \emph{Control law:}
    \begin{equation}\label{eq:controller}
        \dot{\xi} = (F + GK)\xi + Ly, \qquad  u = K\xi.
    \end{equation}
    Initialization: $\xi(0) \in \R^{\mu}$ arbitrary.
    \end{algorithmic}
\end{algorithm}
Define the error
\begin{equation}\label{eq:epsilon}
    \epsilon \coloneqq x - \Pi \hat{\zeta}.
\end{equation}
Using equation \eqref{eq:Pi} and the coordinates \eqref{eq:epsilon}, the interconnection of \eqref{eq:plant} and \eqref{eq:obs} can be written as follows:
\begin{equation}\label{eq:eps_zeta}
    \begin{bmatrix}
        \dot{\epsilon}\\
        \dot{\hat{\zeta}}
    \end{bmatrix} = \begin{bmatrix}
        A - \Pi L C & 0_{n \times \mu}\\
        LC & F + LH
    \end{bmatrix}\begin{bmatrix}
        \epsilon\\
        \hat{\zeta}
    \end{bmatrix} + \begin{bmatrix}
        0_{n \times m} \\ G
    \end{bmatrix} u.
\end{equation}
Note that, since the pair $(F + LH, G)$ is stabilizable by Corollary \ref{lem:st_det}, also system \eqref{eq:eps_zeta} is stabilizable thanks to the next result.
The proof is given in Appendix \ref{sec:proofs}.
\begin{lemma}\label{lem:min_poly}
    In system \eqref{eq:eps_zeta}, the roots of the minimal polynomial of $A - \Pi L C$ are a subset of the roots of the minimal polynomial of $F$.
    As a consequence, $A - \Pi L C$ is Hurwitz.
\end{lemma}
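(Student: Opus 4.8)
The plan is to extract from \eqref{eq:Pi} a Sylvester-type intertwining relation between $F$ and $A - \Pi L C$, and then to transfer the minimal polynomial of $F$ onto $A - \Pi L C$ by cancelling $\Pi$ on the right. This cancellation is legitimate precisely because Lemma \ref{lem:Pi} guarantees that $\Pi$ has full row rank.

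First I would substitute $H = C\Pi$ into the identity $\Pi(F + LH) = A\Pi$ from \eqref{eq:Pi}, obtaining $\Pi F + \Pi L C \Pi = A\Pi$, that is,
\begin{equation}
    \Pi F = (A - \Pi L C)\Pi.
\end{equation}
Writing $\bar{A} \coloneqq A - \Pi L C$, a one-line induction then gives $\Pi F^k = \bar{A}^k \Pi$ for all $k \in \N$, and hence $\Pi\, p(F) = p(\bar{A})\, \Pi$ for every polynomial $p$. I would then take $p = m_F$, the minimal polynomial of $F$; since $m_F(F) = 0$, this yields $m_F(\bar{A})\, \Pi = 0$.

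The decisive step is the cancellation of $\Pi$. Since $\Pi \in \R^{n \times \mu}$ has full row rank $n$ by Lemma \ref{lem:Pi}, the Gram matrix $\Pi \Pi^\top$ is invertible and $\Pi^\top(\Pi\Pi^\top)^{-1}$ is a right inverse of $\Pi$. Right-multiplying $m_F(\bar{A})\,\Pi = 0$ by this right inverse gives $m_F(\bar{A}) = 0$, so the minimal polynomial of $\bar{A}$ divides $m_F$; in particular the roots of the minimal polynomial of $A - \Pi L C$ form a subset of the roots of the minimal polynomial of $F$, which is the claim. The Hurwitz conclusion is then immediate: the eigenvalues of $\bar{A}$ coincide with the roots of its minimal polynomial, hence they lie among the roots of $m_F$, and $F$ is Hurwitz by Definition \ref{def:canonical_realization}, so all eigenvalues of $A - \Pi L C$ have negative real part.

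I expect the cancellation of $\Pi$ to be the only genuinely delicate point: $\Pi$ is rectangular, so the argument cannot appeal to invertibility and must instead rely specifically on its full row rank (equivalently, on the existence of a right inverse) coming from Lemma \ref{lem:Pi}. The intertwining identity and the polynomial induction are routine once the substitution $H = C\Pi$ has been carried out.
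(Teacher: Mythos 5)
Your proof is correct, and while it opens exactly as the paper does, its decisive step is genuinely different. Both arguments begin by substituting $H = C\Pi$ into \eqref{eq:Pi} to get $\Pi F = (A - \Pi L C)\Pi$ and iterating to obtain the intertwining identity $\Pi\,\rho(F) = \rho(A - \Pi L C)\,\Pi$ for every polynomial $\rho$. From there the paper takes a spectral route: it considers an arbitrary generalized left eigenvector $v$ of $A - \Pi L C$ and uses $\rank \Pi = n$ to show that $\Pi^\top v$ is a generalized left eigenvector of $F$ with the same eigenvalue and index, and that linear independence is preserved, thereby constructing an explicit matching between the generalized eigenspaces. You instead take a purely algebraic route: apply the identity with $\rho = m_F$, so that $m_F(F) = 0$ forces $m_F(A - \Pi L C)\,\Pi = 0$, and then cancel $\Pi$ on the right via the right inverse $\Pi^\top(\Pi\Pi^\top)^{-1}$, whose existence is exactly the full-row-rank property from Lemma \ref{lem:Pi}. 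This yields $m_F(A - \Pi L C) = 0$, i.e., the minimal polynomial of $A - \Pi L C$ divides $m_F$ --- a conclusion that is, if anything, slightly sharper than root containment, since it also bounds the index of each eigenvalue; that divisibility is precisely what is used downstream in Lemma \ref{lem:chi}, where the modes of $A - \Pi L C$ (Jordan block sizes included) must be reproducible by the companion matrix $F_0$ of $m_F$. Your argument is shorter and more elementary; the paper's longer eigenvector-level construction buys an explicit correspondence of eigenstructures, which makes the mode-matching intuition more transparent. Both proofs ultimately rest on the same two pillars: the intertwining relation and the full row rank of $\Pi$.
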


Thanks to Lemma \ref{lem:min_poly} and the cascade structure of \eqref{eq:eps_zeta}, we conclude that the interconnection can be stabilized by a feedback law of the form $u = K\hat{\zeta}$.
In our scenario, $K$ cannot be designed via model-based techniques because $F + LH$, containing the plant parameters, is unknown.
Therefore, in the following, we compute $K$ via data-driven techniques.
\begin{remark}
    The philosophy followed by this article can be regarded as the data-driven equivalent of classical controllers based on the interconnection of an observer and a feedback law using the observer states.
    To see the parallelism with the model-based literature, if we let $\mu = n$, we obtain that \eqref{eq:obs} has the same structure as originally proposed by Luenberger in \cite{luenberger1964observing}.
    If $\Pi = I_n$ (so that $\zeta = x$), \eqref{eq:obs} becomes the observer commonly known in the literature:
    \begin{equation}\label{eq:Luenberger}
        \dot{\hat{x}} = (A - LC)\hat{x} + Bu + Ly,
    \end{equation}
    where $A - LC = F$ is Hurwitz.
    In the data-driven setting, we rely on canonical non-minimal realizations because \eqref{eq:Luenberger} cannot be implemented since $A$, $B$, and $C$ are unknown.
\end{remark}

Given the dataset \eqref{eq:dataset_1}, we filter the input-output signals by simulating \eqref{eq:filter} over $[0, \tau]$.
By looking at \eqref{eq:eps_zeta}, one may wonder if the signals $\hat{\zeta}(t)$ and $\dot{\hat{\zeta}}(t)$ could be used to obtain a data-based representation of the dynamics to be controlled.
The main obstacle to this idea is that $\dot{\hat{\zeta}}(t)$ is affected by $\epsilon(t)$, which is unknown because $x(t)$ and $\Pi$ in \eqref{eq:epsilon} are not available.

However, it turns out that, with some manipulations of \eqref{eq:eps_zeta}, it is possible to replace $\epsilon(t)$ with a user-generated signal $\chi(t)$.
Related ideas can be found in \cite{hu2025data} and \cite{bosso2024derivative}.
To show this notable feature, let
\begin{equation}
    m_F(s) \coloneqq s^\delta + \theta_{f, \delta-1}s^{\delta - 1} + \theta_{f, 1}s + \theta_{f, 0}
\end{equation}
be the minimal polynomial of $F$, with degree $\delta$.
Then, define
\begin{equation}\label{eq:Delta}
    F_0 \! \coloneqq \! \begin{bmatrix}
        0 & 1 & \cdots & 0\\
        \vdots & & \ddots & \vdots\\
        0 & 0 & \cdots & 1\\
        -\theta_{f, 0} & -\theta_{f,1} & \cdots & -\theta_{f, \delta-1}
    \end{bmatrix}\!, \qquad G_0 \! \coloneqq \! \begin{bmatrix}
        0\\
        \vdots\\
        0\\
        \omega_f
    \end{bmatrix}\!,
\end{equation}
with $\omega_f \neq 0$ arbitrary.
We obtain the following property, whose proof is given in Appendix \ref{sec:proofs}.
\begin{lemma}\label{lem:chi}
    Consider system \eqref{eq:eps_zeta}.
    There exists a matrix $D \in \R^{\mu \times \delta}$ such that $LC\epsilon(t) = D\chi(t)$ for all $t \in [0, \tau]$, where $\epsilon(t)$ is the solution of $\dot{\epsilon}(t) = (A - \Pi L C)\epsilon(t)$ with $\epsilon(0) = x(0)$, and $\chi(t)$ is the solution of $\dot{\chi}(t) = F_0\chi(t)$, with $\chi(0) = G_0$.
\end{lemma}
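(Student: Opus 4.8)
The plan is to exploit Lemma~\ref{lem:min_poly}, which guarantees that the minimal polynomial of $A - \Pi L C$ divides $m_F$. First I would write the autonomous $\epsilon$-dynamics explicitly as $\epsilon(t) = e^{(A - \Pi L C)t}\epsilon(0)$, with $\epsilon(0) = x(0)$ since $\hat{\zeta}(0) = 0$, so that $LC\epsilon(t) = LC\, e^{(A - \Pi L C)t}\epsilon(0)$. Because the minimal polynomial of $A - \Pi L C$ divides $m_F$, the Cayley--Hamilton-type identity $m_F(A - \Pi L C) = 0$ holds. Using $\frac{d^k}{dt^k}\epsilon(t) = (A - \Pi L C)^k\epsilon(t)$ and left-multiplying by the constant matrix $LC$, this identity shows that each of the $\mu$ scalar entries of $LC\epsilon$ satisfies the $\delta$-th order linear ODE
\begin{equation*}
    \frac{d^\delta}{dt^\delta}(LC\epsilon) + \theta_{f,\delta-1}\frac{d^{\delta-1}}{dt^{\delta-1}}(LC\epsilon) + \cdots + \theta_{f,0}\,LC\epsilon = 0,
\end{equation*}
i.e.\ the scalar ODE with characteristic polynomial $m_F$.

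Next I would analyze the companion system generating $\chi$. By the structure of $F_0$ in \eqref{eq:Delta}, the components of $\chi = e^{F_0 t}G_0$ satisfy $\chi_{k+1} = \dot{\chi}_k$, and the first component $\chi_1$ obeys exactly the scalar ODE with characteristic polynomial $m_F$; hence every $\chi_i$ is a solution of that same ODE. The key step is to show that the $\delta$ functions $\chi_1, \dots, \chi_\delta$ are linearly independent, so that they form a basis of the $\delta$-dimensional solution space. This follows from the controllability of the pair $(F_0, G_0)$: if $c^\top \chi(t) = c^\top e^{F_0 t}G_0 \equiv 0$, then repeated differentiation at $t = 0$ yields $c^\top F_0^k G_0 = 0$ for all $k \geq 0$, and controllability—guaranteed because $(F_0, G_0)$ is in controllable canonical form with $\omega_f \neq 0$—forces $c = 0$.

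Finally, since each of the $\mu$ scalar components of $LC\epsilon$ belongs to the solution space spanned by $\chi_1, \dots, \chi_\delta$, I can expand it with (time-invariant) constant coefficients. Assembling these coefficients row by row produces a constant matrix $D \in \R^{\mu \times \delta}$ with $LC\epsilon(t) = D\chi(t)$, an identity that, once it matches $\delta$ derivatives at a point, holds for all $t$ and in particular on $[0,\tau]$.

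I expect the main obstacle to be the spanning argument of the second paragraph: the whole construction hinges on the specific initial condition $\chi(0) = G_0 = \omega_f\, e_\delta$ together with the companion structure of $F_0$, which are precisely what make $(F_0, G_0)$ controllable and thus make the $\chi_i$ a genuine \emph{basis} of the solution space rather than merely $\delta$ solutions. Tying this spanning property back to Lemma~\ref{lem:min_poly}—which certifies that $LC\epsilon$ lives in exactly that space—is what allows the representation by a single, time-invariant $D$.
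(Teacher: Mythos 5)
Your proof is correct, and it reaches the conclusion by a genuinely different mechanism than the paper. The paper's proof uses the same two ingredients (Lemma \ref{lem:min_poly} and controllability of $(F_0, G_0)$) but executes them in state-space, matrix-equation form: it first invokes the mode inclusion to write $LC\epsilon(t) = E\tilde{\epsilon}(t)$ for some matrix $E$ and some solution $\tilde{\epsilon}(t)$ of $\dot{\tilde{\epsilon}} = F_0\tilde{\epsilon}$, and then maps $\tilde{\epsilon}$ onto $\chi$ by solving the commuting-matrix equation $YF_0 = F_0Y$, $YG_0 = \tilde{\epsilon}_0$, whose solvability is guaranteed by the auxiliary Lemma \ref{lem:X} of Appendix \ref{sec:X}; the final matrix is $D = EY$. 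You instead work in the solution space of the scalar ODE with characteristic polynomial $m_F$: the annihilation identity $m_F(A - \Pi LC) = 0$ shows that each entry of $LC\epsilon$ lies in that $\delta$-dimensional space, and controllability of $(F_0, G_0)$ shows that the entries of $\chi$ are linearly independent solutions, hence a basis, so $D$ is simply the matrix of coordinates. Your route bypasses Lemma \ref{lem:X} entirely and gives $D$ an explicit interpretation; the paper's route keeps the argument in the matrix-equation style of the rest of the development and reuses Lemma \ref{lem:X}, which it needs anyway for Theorem \ref{thm:tuning_nu}, so within the paper it buys economy of tools.

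One point you should make explicit: what your first step actually needs is that the minimal polynomial of $A - \Pi LC$ divides $m_F$ \emph{counting multiplicities}, i.e., $m_F(A - \Pi LC) = 0$. The literal statement of Lemma \ref{lem:min_poly} (``the roots \ldots are a subset of the roots'') would not suffice on its own: a root of multiplicity two for $A - \Pi LC$ but multiplicity one in $m_F$ would satisfy the set inclusion while breaking the annihilation identity. The divisibility does hold, because the proof of Lemma \ref{lem:min_poly} matches each generalized left eigenvector of $A - \Pi LC$ of order $i$ to a generalized left eigenvector of $F$ of the same order $i$; note that the paper's own proof of Lemma \ref{lem:chi} leans on this same strengthening when it asserts that all the modes of $A - \Pi LC$ are also modes of $F_0$. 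So when you claim $m_F(A - \Pi LC) = 0$, cite the eigenstructure-matching argument inside the proof of Lemma \ref{lem:min_poly}, not just its statement.
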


Given Lemma \ref{lem:chi} and system \eqref{eq:eps_zeta}, we can combine the simulations of \eqref{eq:filter} and \eqref{eq:aux} to obtain that the dataset \eqref{eq:dataset_1} and the simulated trajectories satisfy, for $t \in [0, \tau]$,
\begin{equation}\label{eq:chi_zeta}
    \begin{bmatrix}
        \dot{\chi}(t)\\
        \dot{\hat{\zeta}}(t)
    \end{bmatrix}\! = \!\begin{bmatrix}
        F_0\! & 0_{\delta \times \mu}\\
        D\! & F + LH
    \end{bmatrix}\!\begin{bmatrix}
        \chi(t)\\
        \hat{\zeta}(t)
    \end{bmatrix} + \begin{bmatrix}
        0_{\delta \times m} \\ G
    \end{bmatrix}\! u(t), \;\; \begin{bmatrix}
        \chi(0)\\
        \hat{\zeta}(0)
    \end{bmatrix} \! = \! \begin{bmatrix}
        G_0 \\ 0
    \end{bmatrix}
\end{equation}
where the signals $(\chi(t), \hat{\zeta}(t))$, the derivative $\dot{\hat{\zeta}}(t)$, and the input $u(t)$ are available for measurement.

As a consequence, following the data-driven approaches for continuous-time systems \cite{bosso2024derivative}, we construct data batches of the form \eqref{eq:batches} by sampling the continuous-time signals at $N$ distinct time instants.
In Algorithm \ref{alg:stabilization}, we choose to sample with constant sampling time $\Ts$ only for simplicity, and non-uniform sampling could also be employed.
Note that the informativity properties of the batch may be significantly affected by the sampling strategy.

Once the batches \eqref{eq:batches} have been extracted from the continuous-time signals, we can compute the gain $K$ from the linear matrix inequality \eqref{eq:LMI} and equation \eqref{eq:K}, similar to \cite{hu2025data}.
The resulting stabilizing controller to be deployed online is \eqref{eq:controller}, which is the combination of the filters \eqref{eq:obs} and the feedback gain $u = K \hat{\zeta}$.
Again, we remark that this structure can be seen as the interconnection of an observer and a feedback based on the observer states.
Note that we denoted the controller state with $\xi = \hat{\zeta}$ to put \eqref{eq:controller} in the same form of \eqref{eq:ctrl_stabilization}, as described in the problem statement.

The following result provides formal guarantees for the effectiveness of Algorithm \ref{alg:stabilization}.
The proof is based on the arguments found in \cite{de2019formulas, hu2025data}.

\begin{theorem}\label{thm:stabilization}
    Consider Algorithm \ref{alg:stabilization} and let Assumption \ref{hyp:ctrb_obs} hold.
    Then:
    \begin{enumerate}
        \item The LMI \eqref{eq:LMI} is feasible if
        \begin{equation}\label{eq:E}
            \rank\begin{bmatrix}
                \mathcal{X} \\ \mathcal{Z} \\ \mathcal{U}
            \end{bmatrix} = 
            \delta + \mu + m.
        \end{equation}
        \item For any solution $\mathcal{P}$, $\mathcal{Q}$ of \eqref{eq:LMI}, the gain $K$ computed from \eqref{eq:K} is such that  $F + LH + GK$ is Hurwitz.
        Thus, the controller \eqref{eq:controller} solves the data-driven stabilization problem of Section \ref{sec:problem1}.
    \end{enumerate}
\end{theorem}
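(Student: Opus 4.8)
The plan is to treat the two claims separately, both resting on the exact data identity that the sampled batches inherit from \eqref{eq:chi_zeta}. First I would record that, evaluating the $\hat\zeta$-equation of \eqref{eq:chi_zeta} at the $N$ sampling instants and stacking columns, the batches in \eqref{eq:batches} satisfy $\dot{\mathcal{Z}} = D\mathcal{X} + (F + LH)\mathcal{Z} + G\mathcal{U}$, where $D$ is the (unknown) matrix of Lemma \ref{lem:chi}. This single relation links the measured quantities $\mathcal{X}, \mathcal{Z}, \dot{\mathcal{Z}}, \mathcal{U}$ to the unknown closed-loop-relevant block $F + LH$, and it is the only place where the plant enters.

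For claim 2, I would start from a feasible $(\mathcal{P}, \mathcal{Q})$ and read the third constraint of \eqref{eq:LMI} as the two identities $\mathcal{X}\mathcal{Q} = 0$ and $\mathcal{Z}\mathcal{Q} = \mathcal{P}$. Right-multiplying the data identity by $\mathcal{Q}$ and using $K\mathcal{P} = \mathcal{U}\mathcal{Q}$ from \eqref{eq:K} then collapses everything to $\dot{\mathcal{Z}}\mathcal{Q} = (F + LH + G K)\mathcal{P}$, because the $D\mathcal{X}\mathcal{Q}$ term vanishes. Substituting this into the second constraint $\dot{\mathcal{Z}}\mathcal{Q} + \mathcal{Q}^\top \dot{\mathcal{Z}}^\top \prec 0$ yields the Lyapunov inequality $(F + LH + GK)\mathcal{P} + \mathcal{P}(F + LH + GK)^\top \prec 0$ with $\mathcal{P} \succ 0$, so $F + LH + GK$ is Hurwitz. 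To conclude that \eqref{eq:controller} stabilizes the plant, I would set $u = K\hat\zeta$ in \eqref{eq:eps_zeta} (consistent with $\xi = \hat\zeta$), obtaining a block-lower-triangular closed loop with diagonal blocks $A - \Pi L C$ and $F + LH + GK$; the former is Hurwitz by Lemma \ref{lem:min_poly} and the latter was just shown Hurwitz, so the cascade is exponentially stable. Finally, the bijection $(x,\xi) \mapsto (\epsilon, \hat\zeta) = (x - \Pi\xi, \xi)$ transfers this to global exponential stability of $\col(x,\xi) = 0$.

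For claim 1, feasibility, I would exhibit a certificate built from a model-based design. By Corollary \ref{lem:st_det} the pair $(F+LH, G)$ is stabilizable, so there is $K_0$ with $F + LH + GK_0$ Hurwitz, and hence a $\mathcal{P} = \mathcal{P}^\top \succ 0$ solving $(F+LH+GK_0)\mathcal{P} + \mathcal{P}(F+LH+GK_0)^\top \prec 0$ (e.g. the Lyapunov solution for the right-hand side $-I$). Under the rank condition \eqref{eq:E}, the stacked matrix $[\mathcal{X}^\top\ \mathcal{Z}^\top\ \mathcal{U}^\top]^\top$ has full row rank $\delta + \mu + m$, so the linear map $\mathcal{Q} \mapsto [\mathcal{X}^\top\ \mathcal{Z}^\top\ \mathcal{U}^\top]^\top \mathcal{Q}$ is surjective and admits a $\mathcal{Q}$ realizing the target $\mathcal{X}\mathcal{Q} = 0$, $\mathcal{Z}\mathcal{Q} = \mathcal{P}$, $\mathcal{U}\mathcal{Q} = K_0\mathcal{P}$. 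This $\mathcal{Q}$ together with $\mathcal{P}$ satisfies the third constraint by construction, the first by the choice of $\mathcal{P}$, and, via the same computation as above now giving $\dot{\mathcal{Z}}\mathcal{Q} = (F+LH+GK_0)\mathcal{P}$, the second as well; hence the LMI is feasible, and this feasible point recovers $K = K_0$.

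The main obstacle is the reconciliation in claim 2 between the purely algebraic manipulation and the fact that $D$, $\Pi$, $H$, and $K = \mathcal{U}\mathcal{Q}\mathcal{P}^{-1}$ all involve unavailable quantities: the argument must show that the unknown $D\mathcal{X}$ term is annihilated by the constraint $\mathcal{X}\mathcal{Q} = 0$, which is exactly why the auxiliary signal $\chi$ and the block $\mathcal{X}$ were introduced. I would also verify that the sampled identity holds \emph{exactly} (not approximately) at the chosen instants, which is guaranteed by Lemma \ref{lem:chi} independently of the sampling times, and that the rank condition \eqref{eq:E} is precisely the surjectivity needed to hit the three targets simultaneously with a single $\mathcal{Q}$.
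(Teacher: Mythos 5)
Your proposal is correct and follows essentially the same route as the paper: both parts rest on the sampled identity $\dot{\mathcal{Z}} = D\mathcal{X} + (F+LH)\mathcal{Z} + G\mathcal{U}$, with feasibility certified by a model-based Lyapunov pair mapped through the rank condition, and Hurwitzness extracted by using $\mathcal{X}\mathcal{Q}=0$, $\mathcal{Z}\mathcal{Q}=\mathcal{P}$, $\mathcal{U}\mathcal{Q}=K\mathcal{P}$ to collapse the constraint into $(F+LH+GK)\mathcal{P} + \mathcal{P}(F+LH+GK)^\top \prec 0$, followed by the same cascade argument via Lemma \ref{lem:min_poly}. The only cosmetic difference is that the paper factors $\mathcal{Q} = \mathcal{M}\mathcal{P}$ (targets $0$, $I_\mu$, $K$) and works with the right inverse $\mathcal{Z}^\dagger = \mathcal{Q}\mathcal{P}^{-1}$, whereas you solve for $\mathcal{Q}$ directly; these are equivalent reparameterizations.
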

\begin{proof}
    \emph{1):} Since the pair $(F + LH, G)$ is stabilizable by Corollary \ref{lem:st_det}, there exist matrices $\mathcal{P}$ and $K$ such that
    \begin{equation}\label{eq:PK_LMI}
        \begin{cases}
            \mathcal{P} = \mathcal{P}^\top \succ 0\\
            (F + LH + GK) \mathcal{P} + \mathcal{P}^\top (F + LH + GK)^\top \prec 0.
        \end{cases}
    \end{equation}
    Given any $\mathcal{P}$ and $K$ satisfying \eqref{eq:PK_LMI}, condition \eqref{eq:E} implies that there exists a matrix $\mathcal{M}$ such that \cite[Thm. 2]{de2019formulas}, \cite[Thm. 2]{hu2025data}:
    \begin{equation}\label{eq:M}
        \begin{bmatrix}
            0_{\delta \times \mu}\\I_\mu \\ K
        \end{bmatrix} = \begin{bmatrix}
            \mathcal{X} \\ \mathcal{Z} \\ \mathcal{U}
        \end{bmatrix}\mathcal{M}.
    \end{equation}
    Notice that, using \eqref{eq:chi_zeta}, we can write
    \begin{equation}\label{dZ}
        \dot{\mathcal{Z}} = D\mathcal{X} + (F + LH)\mathcal{Z} + G\mathcal{U}.
    \end{equation}
    Using \eqref{eq:M} and \eqref{dZ}, it holds that:
    \begin{equation}\label{eq:F_LH_GK}
        \begin{split}
            F + LH + GK &= \begin{bmatrix}
                F + LH & G
            \end{bmatrix}\begin{bmatrix}
                \mathcal{Z} \\ \mathcal{U}
            \end{bmatrix}\mathcal{M}\\
            &= (\dot{\mathcal{Z}} - D\mathcal{X})\mathcal{M} = \dot{\mathcal{Z}}\mathcal{M}.
        \end{split}
    \end{equation}
    Then, combining \eqref{eq:PK_LMI} and \eqref{eq:F_LH_GK}, we obtain:
    \begin{equation}\label{eq:Lyapunov}
        \begin{cases}
            \mathcal{P} = \mathcal{P}^\top \succ 0\\
            \dot{\mathcal{Z}}\mathcal{M} \mathcal{P} + \mathcal{P}^\top \mathcal{M}^\top \dot{\mathcal{Z}}^\top \prec 0.
        \end{cases}
    \end{equation}
    Let $\mathcal{Q} = \mathcal{M}\mathcal{P}$ in \eqref{eq:Lyapunov} and post-multiply both sides of \eqref{eq:M} by $\mathcal{P}$ to obtain \eqref{eq:LMI}.

    \emph{2):} Suppose that there exist $\mathcal{P}$ and $\mathcal{Q}$ that satisfy \eqref{eq:LMI}.
    From the third equation of \eqref{eq:LMI}, we obtain that $\mathcal{Z}^\dagger \coloneqq \mathcal{Q}\mathcal{P}^{-1}$ is a right inverse of $\mathcal{Z}$.
    Using \eqref{dZ} and $\mathcal{Z}^\dagger$, it follows that
    \begin{equation}\label{eq:F_LH}
        F + LH = (\dot{\mathcal{Z}} - D\mathcal{X} - G\mathcal{U})\mathcal{Z}^\dagger.
    \end{equation}
    From \eqref{eq:K}, \eqref{eq:F_LH}, and $\mathcal{X}\mathcal{Q} = 0_{\delta \times \mu}$ due to \eqref{eq:LMI}, the following identity holds:
    \begin{equation}\label{eq:Zdot_Z+}
        F + LH + GK = (\dot{\mathcal{Z}} - D\mathcal{X})\mathcal{Z}^{\dagger} = \dot{\mathcal{Z}}\mathcal{Z}^{\dagger}.
    \end{equation}
    Then, using $\mathcal{P} = \mathcal{Z}\mathcal{Q}$ from the equality in \eqref{eq:LMI}, the first inequality of \eqref{eq:LMI} can be rewritten as
    \begin{equation}
        \dot{\mathcal{Z}}\mathcal{Z}^\dagger \mathcal{P} + \mathcal{P}(\mathcal{Z}^{\dagger})^\top\dot{\mathcal{Z}}^\top \prec 0,
    \end{equation}
    which implies from \eqref{eq:Zdot_Z+} that $F + LH + GK$ is Hurwitz.
                
    Finally, to show that the controller \eqref{eq:controller} solves the data-driven stabilization problem, we rewrite the closed-loop interconnection of \eqref{eq:plant} and \eqref{eq:controller} using \eqref{eq:eps_zeta} with $u = K\xi$:
    \begin{equation}\label{eq:cascade_compact}
        \begin{bmatrix}
            \dot{\epsilon}\\
            \dot{\xi}
        \end{bmatrix} = \begin{bmatrix}
            A - \Pi L C & 0_{n \times \mu}\\
            LC & F + LH + GK
        \end{bmatrix}\begin{bmatrix}
            \epsilon\\
            \xi
        \end{bmatrix}.
    \end{equation}
    Since $A - \Pi L C$ and $F + LH + GK$ are Hurwitz, global exponential stability of \eqref{eq:cascade_compact} follows from standard results for cascaded linear systems.
\end{proof}

\begin{remark}
    In this section, we have presented an LMI for the noise-free scenario.
    Future work will be devoted to formulating LMIs that address the case where \eqref{eq:dataset_1} is affected by bounded noise, similar to \cite{van2020noisy} and \cite{bisoffi2022data}.
    Also, it will be worth investigating the impact of the filter gains $F$, $G$, $L$ in case the noise has a bandwidth characterization.
\end{remark}

\section{Data-Driven Output Regulation}\label{sec:out_reg}
The data-driven output regulation problem of Section \ref{sec:problem2} is solved in this section by Algorithm \ref{alg:out_reg}, which 
uses the dataset \eqref{eq:dataset_w} (rewritten in \eqref{eq:dataset_2} for convenience) to design the controller \eqref{eq:controller_out_reg}.
We refer to Figure \ref{fig:out_reg_scheme} for a depiction of the interconnection of the controller with the plant \eqref{eq:plant_w} and the exosystem \eqref{eq:exo}.
Below, we illustrate the derivation of Algorithm \ref{alg:out_reg} and its theoretical guarantees under Assumptions \ref{hyp:ctrb_obs} and \ref{hyp:non-resonance}.

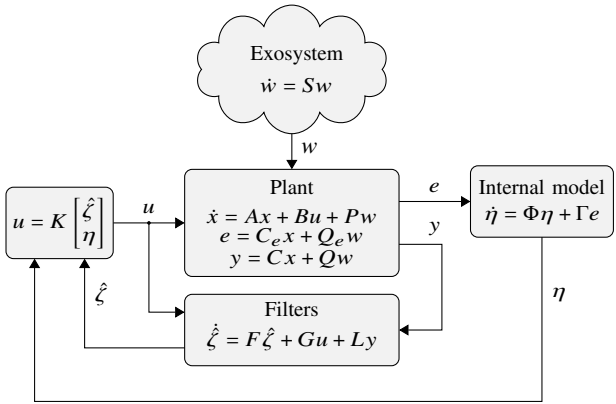
\begin{figure}[b!]

    \vspace{-10pt}
    
    \centering
    \begin{tikzpicture}[scale = 0.95]
		\draw[-{Triangle}] (-1, 0) -- (0, 0);
		\node[anchor = south] at (-0.5, 0) {\footnotesize $u$};
        \draw[-{Triangle}] (-0.5, 0) -- (-0.5, -1.25) -- (0, -1.25);
        \filldraw[fill=black] (-0.5, 0) circle (0.5pt);
		\filldraw[fill=gray!10, rounded corners=1mm](0, -0.75) rectangle (3, 0.75);
        \node[anchor = center] at (1.5, 0.5) {\footnotesize Plant};
        \node[anchor = center] at (1.5, 0.1) {\footnotesize $\dot{x} = Ax + Bu + Pw$};
        \node[anchor = center] at (1.5, -0.2) {\footnotesize $e = C_e x + Q_e w$};
        \node[anchor = center] at (1.5, -0.5) {\footnotesize $y = C x + Q w$};
		\draw[-{Triangle}] (3, 0.3) -- (4, 0.3);
		\node[anchor = south] at (3.5, 0.3) {\footnotesize $e$};
        \draw[-{Triangle}] (3, -0.3) -- (3.6, -0.3) -- (3.6, -1.5) -- (3, -1.5);
        \node[anchor = south] at (3.5, -0.3) {\footnotesize $y$};
        \filldraw[fill=gray!10, rounded corners=1mm](0, -2) rectangle (3, -1);
        \node[anchor = center] at (1.5, -1.2) {\footnotesize Filters};
        \node[anchor = center] at (1.5, -1.6) {\footnotesize $\dot{\hat{\zeta}} = F\hat{\zeta} + G u + Ly$};
        \filldraw[fill=gray!10, rounded corners=1mm](4, -0.2) rectangle (6, 0.8);
        \node[anchor = center] at (5, 0.5) {\footnotesize Internal model};
		\node[anchor = center] at (5, 0.1) {\footnotesize $\dot{\eta} = \Phi \eta + \Gamma e$};	
		\draw[-{Triangle}] (5, -0.2) -- (5, -2.5) -- (-2.1, -2.5) -- (-2.1, -0.5);
		\node[anchor = west] at (5, -1) {\footnotesize $\eta$};
        \node[anchor = west] at (-1.4, -1) {\footnotesize $\hat{\zeta}$};
        \filldraw[fill=gray!10, rounded corners=1mm](-2.5, -0.5) rectangle (-1, 0.5);
        \node[anchor = center] at (-1.75, 0) {\footnotesize $u = K\begin{bmatrix}
            \hat{\zeta} \\ \eta
        \end{bmatrix}$};
        \draw[-{Triangle}] (0, -1.75) -- (-1.4, -1.75) -- (-1.4, -0.5);
        \node[cloud,
		draw =black,
		text=black,
        text width = 1.1cm,
		fill = gray!10,
		aspect=2] (c) at (1.5, 2.15){\footnotesize Exosystem \\ $\;\,\dot{w} = Sw$};
        \draw[-{Triangle}] (1.5, 1.25) -- (1.5, 0.75);
		\node[anchor = west] at (1.5, 1.05) {\small{$w$}};
	\end{tikzpicture}
    
    \caption{Implementation of the controller \eqref{eq:controller_out_reg} designed in Algorithm \ref{alg:out_reg}.}
    \label{fig:out_reg_scheme}
    
\end{figure}

\begin{algorithm}[ht!]
    \caption{Data-Driven Output Regulation}\label{alg:out_reg}
    \begin{algorithmic}
    \State \hspace{-0.47cm} \textbf{Initialization} 
    \State \emph{Dataset:}
    \begin{equation}\label{eq:dataset_2}
        (u(t), e(t), y_r(t)), \qquad \forall t \in [0, \tau].
    \end{equation}
    \State \emph{Tuning:} $F$, $G$, $L$ such that \eqref{eq:non-minimal_plant} is a canonical non-minimal realization of \eqref{eq:plant}, with $L = [L_e\; L_{\text{r}}]$, $L_e \in \R^{\mu \times q}$, and $L_{\text{r}} \in \R^{\mu \times (p - q)}$; $(F_0, G_0)$ as in \eqref{eq:Delta}; $(S_0, \Gamma_0)$ as in \eqref{eq:Phi_Gamma}; number of samples $N \in \N$, $N \geq 1$, $\Ts \coloneqq \tau/N$.
    \State \hspace{-0.47cm} \textbf{Data Batches Construction} 
    \State \emph{Filter of the data:}
    \begin{equation}\label{eq:filter_out_reg}
        \dot{\hat{\zeta}}(t) = F\hat{\zeta}(t) + Gu(t) + L_e e(t) + L_{\text{r}} y_{\text{r}}(t),\quad \forall t \in [0, \tau].
    \end{equation}
    Initialization: $\hat{\zeta}(0) = 0 \in \R^{\mu}$.
    \State \emph{Internal model dynamics:}
    \begin{equation}\label{eq:IM_out_reg}
        \dot{\eta}(t) = \Phi \eta(t) + \Gamma e(t),\quad \forall t \in [0, \tau].
    \end{equation}
    Initialization: $\eta(0) = 0 \in \R^{dq}$.
    \State \emph{Auxiliary dynamics:}
    \begin{equation}\label{eq:aux_out_reg}
        \dot{\chi}(t) = \begin{bmatrix}
            S_0 & 0_{d \times \delta}\\
            0_{\delta \times d} & F_0
        \end{bmatrix}\chi(t),\quad \forall t \in [0, \tau].
    \end{equation}
    Initialization: $\chi(0) = \col(\Gamma_0, G_0) \in \R^{d + \delta}$.
    \State \emph{Sampled data batches:}
    \begin{equation}\label{eq:batches_out_reg}
        \begin{split}
            \mathcal{U}\! & \coloneqq  \!\begin{bmatrix}
                u(0) & u(\Ts) & \!\cdots\! & \,u((N - 1)\Ts)
            \end{bmatrix} \in \R^{m \times N}\\
            \mathcal{X}\! & \coloneqq\! \begin{bmatrix}
                \chi(0) & \!\chi(\Ts) & \!\cdots\! & \!\chi((N-1)\Ts)
            \end{bmatrix} \in \R^{(d + \delta) \times N}\\
            \mathcal{Z}\! & \coloneqq\! \begin{bmatrix}
                \hat{\zeta}(0) & \hat{\zeta}(\Ts) & \!\cdots\! & \hat{\zeta}((N-1)\Ts)\\
                \eta(0) & \eta(\Ts) & \!\cdots\! & \eta((N-1)\Ts)
            \end{bmatrix}\! \in \! \R^{(\mu + dq) \times N}\\
            \dot{\mathcal{Z}} \! & \coloneqq\! \begin{bmatrix}
                \dot{\hat{\zeta}}(0) & \dot{\hat{\zeta}}(\Ts) & \!\cdots\! & \dot{\hat{\zeta}}((N-1)\Ts)\\
                \dot{\eta}(0) & \dot{\eta}(\Ts) & \!\cdots\! & \dot{\eta}((N-1)\Ts)
            \end{bmatrix}\! \in \! \R^{(\mu + dq) \times N}.
        \end{split}
    \end{equation}
    \State \hspace{-0.47cm} \textbf{Stabilizing Gain Computation}
    \State \emph{LMI:} find $\mathcal{P} \in \R^{(\mu + dq) \times (\mu + dq)}$, $\mathcal{Q} \in \R^{N \times (\mu + dq)}$ such that:
    \begin{equation}\label{eq:LMI_out_reg}
        \begin{cases}
            \mathcal{P} = \mathcal{P}^\top \succ 0\\
            \dot{\mathcal{Z}}\mathcal{Q} + \mathcal{Q}^\top\dot{\mathcal{Z}}^\top \prec 0\\
            \begin{bmatrix}
                0_{\delta \times (\mu + dq)}\\
                \mathcal{P}
            \end{bmatrix} = \begin{bmatrix}
                \mathcal{X}\\
                \mathcal{Z}
            \end{bmatrix} \mathcal{Q}.
        \end{cases}
    \end{equation}
    \State \emph{Control gain:}
    \begin{equation}\label{eq:K_out_reg}
        K \! = \! \begin{bmatrix}
            K_\zeta\! & \!K_\eta
        \end{bmatrix}\! = \mathcal{U}\mathcal{Q}\mathcal{P}^{-1}, \quad K_\zeta \! \in \! \R^{m \times \mu}, K_\eta \! \in \! \R^{m \times dq}.
    \end{equation}
    \State \hspace{-0.47cm} \textbf{Control Deployment}
    \State \emph{Control law:}
    \begin{equation}\label{eq:controller_out_reg}
        \dot{\xi} = \begin{bmatrix}
            F \! + \!GK_\zeta & GK_\eta \\
            0_{dq \times \mu} & \Phi
        \end{bmatrix} \xi + \begin{bmatrix}
            L_e & L_{\text{r}}\\
            \Gamma & 0_{dq \times (p - q)}\!
        \end{bmatrix}\!\begin{bmatrix}
            e \\ y_{\text{r}}
        \end{bmatrix}\!,\;  u = K\xi.
    \end{equation}
    Initialization: $\xi(0) \in \R^{\mu + dq}$ arbitrary.
    \end{algorithmic}    
\end{algorithm}

Following the classical literature of output regulation \cite{davison1976robust, isidori2003robust, isidori2017lectures}, the problem of Section \ref{sec:problem2} is solved by a controller that embeds an internal model unit.
In particular, let
\begin{equation}
    m_s(s) \coloneqq s^d + \theta_{s,d-1} s^{d-1} + \ldots + \theta_{s, 1}s + \theta_{s, 0}
\end{equation}
be the minimal polynomial of $S$.
Then, define:
\begin{equation}\label{eq:Phi_Gamma}
    \Phi \! \coloneqq \! I_q \otimes \underbrace{\begin{bmatrix}
        0 & 1 & \cdots & 0\\
        \vdots & & \ddots & \vdots\\
        0 & 0 & \cdots & 1\\
        -\theta_{s, 0} &\! -\theta_{s,1} & \cdots &\! -\theta_{s, d-1}
    \end{bmatrix}}_{\eqqcolon S_0}, \;\; \Gamma \! \coloneqq I_q \otimes \underbrace{\begin{bmatrix}
        0\\
        \vdots\\
        0\\
        \omega_s
    \end{bmatrix}}_{\eqqcolon \Gamma_0}\!,
\end{equation}
where $\omega_s \neq 0$ is a scalar tuning gain.
A \emph{post-processing internal model} is the following system:
\begin{equation}\label{eq:IM}
    \dot{\eta} = \Phi \eta + \Gamma e,
\end{equation}
where $\eta \in \R^{dq}$.
It is known that, under Assumptions \ref{hyp:ctrb_obs} and \ref{hyp:non-resonance}, the augmented plant
\begin{equation}\label{eq:aug_plant}
    \begin{split}
        \begin{bmatrix}
            \dot{x}\\
            \dot{\eta}
        \end{bmatrix} &= \begin{bmatrix}
            A & 0_{n \times dq}\\
            \Gamma C_e & \Phi
        \end{bmatrix}\begin{bmatrix}
            x \\ \eta
        \end{bmatrix} + \begin{bmatrix}
            B \\ 0_{dq \times m}
        \end{bmatrix} u + \begin{bmatrix}
            P \\ \Gamma Q_e
        \end{bmatrix}w\\
        y_{\text{a}} &= \begin{bmatrix}
            C & 0_{p \times dq}\\
            0_{dq \times n} & I_{dq}
        \end{bmatrix}\begin{bmatrix}
            x \\ \eta
        \end{bmatrix} + \begin{bmatrix}
            Q \\ 0_{dq \times l}
        \end{bmatrix}w,
    \end{split}
\end{equation}
can be stabilized via output feedback when $w = 0$ \cite[Lem. 4.2]{isidori2017lectures}.
Let
\begin{equation}\label{eq:stabilizer}
    \begin{split}
        \dot{\xi}_{\text{s}} &= A_{\text{s}} \xi_{\text{s}} + B_{\text{s}} y_{\text{a}}\\
        u &= C_{\text{s}}\xi_{\text{s}} + D_{\text{s}} y_{\text{a}}
    \end{split}
\end{equation}
be a controller that makes the origin $(x, \eta, \xi_{\text{s}}) = 0$ of the interconnection \eqref{eq:aug_plant}, \eqref{eq:stabilizer} asymptotically stable when $w = 0$.
Then, it is known that the output regulation problem as stated in Section \ref{sec:problem2} is solved by a controller that combines \eqref{eq:IM} and \eqref{eq:stabilizer} \cite[Prop. 4.3]{isidori2017lectures}.
Therefore, the remainder of this section involves extending the stabilization approach of Section \ref{sec:stabilization} to the construction of \eqref{eq:stabilizer}.

\begin{remark}
    We give an intuition on how the approach is solved via \cite[Prop. 4.3]{isidori2017lectures}.
    Given the properties of the stabilizer \eqref{eq:stabilizer}, the closed-loop interconnection of the plant \eqref{eq:plant_w}, the exosystem \eqref{eq:exo}, and the controller \eqref{eq:IM}, \eqref{eq:stabilizer} has a stable and a center eigenspace.
    In particular, for some matrices $\Sigma_x$, $\Sigma_\eta$, $\Sigma_\xi$, the closed-loop solutions satisfy $x(t) \to \Sigma_x w(t)$, $\eta(t) \to \Sigma_\eta w(t)$, $\xi_{\text{s}}(t) \to \Sigma_\xi w(t)$.
    Then, given the special choice \eqref{eq:Phi_Gamma} for the matrices $(\Phi, \Gamma)$ used in the internal model \eqref{eq:IM}, the following property holds:
    \begin{equation}
        C_e\Sigma_x + Q_e = 0,
    \end{equation}
    which implies $e(t) \to C_e\Sigma_xw(t) + Q_e w(t) = 0$.
    We refer to \cite[Ch. 4]{isidori2017lectures}, for an in-depth presentation of these topics.
\end{remark}
Under Assumption \ref{hyp:ctrb_obs}, let $F$, $G$, and $L$ be such that \eqref{eq:non-minimal_plant} is a canonical non-minimal realization of \eqref{eq:plant}, for some $\Pi$ and $H$ such that the matrix equation \eqref{eq:Pi} holds.
Also, let $H_e \coloneqq C_e \Pi$.
Then, as before, introduce system \eqref{eq:obs} and the error $\epsilon$ in \eqref{eq:epsilon}.
Note that the system to be controlled and from which we collect the data is \eqref{eq:plant_w}, which is perturbed by the exogenous signal $w$.
In particular, using \eqref{eq:epsilon}, we can interconnect the plant \eqref{eq:plant_w}, the exosystem \eqref{eq:exo}, the filter \eqref{eq:obs}, and the internal model \eqref{eq:IM} to obtain the following dynamics:
\begin{equation}\label{eq:w_eps_zeta_eta}
    \begin{bmatrix}
        \dot{w} \\ \dot{\epsilon} \\ \dot{\hat{\zeta}} \\ \dot{\eta}
    \end{bmatrix} \! = \! \left[\begin{array}{c|c}
        \!\!\!\!\!\begin{array}{cc}
            S \! & \! 0_{l \times n} \\
            P \! - \! \Pi LQ \! & \! A \! - \! \Pi L C
        \end{array} \!\!\!\! & \!\!\!\! 0_{(l + n)\times(\mu + dq)} \!\!\!\!\!\\
        \hline
        \!\!\!\!\!\begin{array}{cc}
            0_{\mu \times l} \;\; & \;\; LC \\
            \Gamma Q_e \;\; & \;\; \Gamma C_e 
        \end{array} \!\!\!\! & \!\!\!\! \begin{array}{cc}
            F \! + \! LH \! & \! 0_{\mu \times dq} \\
            \Gamma H_e \! & \! \Phi
        \end{array}\!\!\!\!\!
    \end{array} \right]\!\begin{bmatrix}
        w \\ \epsilon \\ \hat{\zeta} \\ \eta
    \end{bmatrix} + \begin{bmatrix}
        0_{l \times m} \\ 0_{n \times m} \\ G \\ 0_{dq \times m}
    \end{bmatrix}\! u,
\end{equation}
which have the same structure of \eqref{eq:eps_zeta}, with $\epsilon$ replaced by $\col(w, \epsilon)$ and $\hat{\zeta}$ replaced by $\col(\hat{\zeta}, \eta)$.

We now study some structural properties of \eqref{eq:w_eps_zeta_eta}.
The first result is the extension of the non-resonance condition of Assumption \ref{hyp:non-resonance} from the original plant \eqref{eq:plant} to the non-minimal realization \eqref{eq:non-minimal_plant}.
\begin{lemma}\label{lem:non-resonance}
    Under Assumptions \ref{hyp:ctrb_obs} and \ref{hyp:non-resonance}, it holds that
    \begin{equation}\label{eq:non-min_non-resonance}
        \rank \begin{bmatrix}
            F + LH - sI_{\mu} & G\\
            H_e & 0_{q \times m}
        \end{bmatrix} = \mu + q,
    \end{equation}
    for all $s \in \sigma(S)$.
\end{lemma}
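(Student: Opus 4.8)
The plan is to transform the matrix pencil in \eqref{eq:non-min_non-resonance} into block-triangular form by means of the nonsingular transformation $T \coloneqq \col(\Xi, \Pi)$ from the Kalman decomposition \eqref{eq:Kalman_obs}--\eqref{eq:Kalman_decomposition}, so that the non-resonance test for the non-minimal realization reduces to the one for the original plant (Assumption \ref{hyp:non-resonance}) together with an extra, harmless diagonal block.

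First I would recall that $T$ is nonsingular and that, by the proof of Lemma \ref{lem:Pi} together with $H = C\Pi$ and $H_e = C_e\Pi$, one has $T(F+LH)T^{-1} = \bigl[\begin{smallmatrix} A_{\bar o} & A_\times \\ 0 & A\end{smallmatrix}\bigr]$, $TG = \col(B_{\bar o}, B)$, and $H_e T^{-1} = [\,0_{q\times(\mu-n)}\ \ C_e\,]$, the last identity following from $\Pi T^{-1} = [\,0_{n\times(\mu-n)}\ \ I_n\,]$. Since pre- and post-multiplication by the nonsingular matrices $\diag(T, I_q)$ and $\diag(T^{-1}, I_m)$ preserves rank, the pencil in \eqref{eq:non-min_non-resonance} is equivalent to
\[
\begin{bmatrix} A_{\bar o} - sI_{\mu-n} & A_\times & B_{\bar o} \\ 0 & A - sI_n & B \\ 0 & C_e & 0 \end{bmatrix},
\]
which is block upper triangular with diagonal blocks $A_{\bar o} - sI_{\mu-n}$ and $M_s \coloneqq \bigl[\begin{smallmatrix} A - sI_n & B \\ C_e & 0\end{smallmatrix}\bigr]$.

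Next I would verify that both diagonal blocks have full row rank for every $s \in \sigma(S)$. The block $M_s$ has full row rank $n+q$ precisely by Assumption \ref{hyp:non-resonance}. For $A_{\bar o} - sI_{\mu-n}$, I would invoke Lemma \ref{lem:eig_F+LH}, which gives $\sigma(A_{\bar o}) \subset \sigma(F)$; since $F$ is Hurwitz by Definition \ref{def:canonical_realization} whereas $S$ is neutrally stable, so that $\sigma(S) \subset i\R$, we have $s \notin \sigma(A_{\bar o})$ and hence $A_{\bar o} - sI_{\mu-n}$ is nonsingular. A short left-null-vector argument then closes the proof: if $[\,p^\top\ \ r^\top\,]$ annihilates the triangular matrix from the left, the first block column forces $p^\top(A_{\bar o}-sI_{\mu-n}) = 0$, hence $p = 0$, and the remaining columns then force $r^\top M_s = 0$, hence $r = 0$; thus the matrix has full row rank $\mu + q$, and so does \eqref{eq:non-min_non-resonance}.

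I expect the only genuine obstacle to be conceptual rather than computational: recognizing that the intertwining relations \eqref{eq:Pi} are exactly what triangularizes the pencil via $T$, thereby decoupling the true non-resonance condition inherited from the plant from the extra unobservable modes, which are innocuous because they lie in $\sigma(F) \subset \C_{<0}$ and cannot meet $\sigma(S)$ on the imaginary axis. The remaining bookkeeping (block sizes and the identity $\Pi T^{-1} = [\,0\ \ I_n\,]$) is routine once this decomposition is in place.
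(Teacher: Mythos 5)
Your proposal is correct and follows essentially the same route as the paper's proof: conjugating the pencil by $\bdiag(T, I_q)$ and $\bdiag(T^{-1}, I_m)$ with $T = \col(\Xi,\Pi)$ to reach the block-triangular form, then using Assumption \ref{hyp:non-resonance} for the $(A,B,C_e)$ block and the fact that $\sigma(A_{\bar o}) \subset \sigma(F)$ lies in the open left half-plane while $\sigma(S)$ lies on the imaginary axis. Your explicit verification that $H_e T^{-1} = [\,0_{q\times(\mu-n)}\ \ C_e\,]$ and the closing left-null-vector argument are details the paper leaves implicit, but the substance is identical.
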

\begin{proof}
    By Lemma \ref{lem:Pi}, there exists a similarity transformation $T$ such that system \eqref{eq:non-minimal_plant} is decomposed as in \eqref{eq:Kalman_decomposition}.
    Then,
    \begin{equation}
        \begin{split}
            \bdiag(T, I_q)\begin{bmatrix}
                F + LH - sI_{\mu} & G\\
                H_e & 0_{q \times m}
            \end{bmatrix} \bdiag(T^{-1}, I_m)\\ = \begin{bmatrix}
                A_{\bar{o}} - sI_{\mu - n} & A_{\times} & B_{\bar{o}}\\
                0_{n \times (\mu - n)} & A - sI_n & B\\
                0_{q \times (\mu - n)} & C_e & 0_{q \times m}
            \end{bmatrix},
        \end{split}
    \end{equation}
    where $A_{\bar{o}}$ is Hurwitz by Lemma \ref{lem:st_det}.
    The proof follows by recalling Assumption \ref{hyp:non-resonance} and noting that $\rank (A_{\bar{o}} - sI_{\mu - n}) = \mu - n$ for all $s \in \sigma(S)$.
\end{proof}

An important consequence of Lemma \ref{lem:non-resonance} is the following corollary, whose proof is omitted for brevity as it follows the same lines of the proof of \cite[Lemma 4.2]{isidori2017lectures}.
\begin{corollary}\label{cor:stab_ctrb}
    Under Assumptions \ref{hyp:ctrb_obs} and \ref{hyp:non-resonance}, let \eqref{eq:non-minimal_plant} be a weak (resp. strong) canonical non-minimal realization of \eqref{eq:plant}.
    Then, the pair
    \begin{equation}\label{eq:aug_pair}
        \left(\begin{bmatrix}
            F + LH & 0_{\mu \times dq}\\
            \Gamma H_e & \Phi
        \end{bmatrix}, \begin{bmatrix}
            G \\ 0_{dq \times m}
        \end{bmatrix} \right)
    \end{equation}
    is stabilizable (resp. controllable).
\end{corollary}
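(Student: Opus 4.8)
The plan is to verify the Popov--Belevitch--Hautus (PBH) rank condition for the pair \eqref{eq:aug_pair}, namely that the matrix
\[
    \Lambda(s) \coloneqq \begin{bmatrix}
        F + LH - sI_\mu & 0_{\mu \times dq} & G\\
        \Gamma H_e & \Phi - sI_{dq} & 0_{dq \times m}
    \end{bmatrix}
\]
has full row rank $\mu + dq$ for all $s \in \C$ (controllability, strong case) or for all $s$ with $\operatorname{Re}(s) \geq 0$ (stabilizability, weak case). A convenient structural fact I would exploit first is that $\sigma(\Phi) = \sigma(S_0) = \sigma(S)$ lies entirely on the imaginary axis, since $S$ is neutrally stable and $S_0$ is the companion matrix of its minimal polynomial; thus every eigenvalue of $\Phi$ satisfies $\operatorname{Re}(s) = 0$ and is therefore relevant to both rank tests.

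For the easy case $s \notin \sigma(\Phi)$, the block $\Phi - sI_{dq}$ is nonsingular, so elementary block-column operations (adding a right multiple of the second block column to the first to annihilate $\Gamma H_e$) reduce $\Lambda(s)$ to block-triangular form and give $\rank \Lambda(s) = dq + \rank\begin{bmatrix} F + LH - sI_\mu & G \end{bmatrix}$. By Corollary \ref{lem:st_det}, the pair $(F + LH, G)$ is stabilizable in the weak case and controllable in the strong case; hence the right-hand rank equals $\mu$ on $\{\operatorname{Re}(s) \geq 0\}$ (weak) or on all of $\C$ (strong), yielding $\rank \Lambda(s) = \mu + dq$ on the required set.

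The main obstacle is the case $s \in \sigma(\Phi) = \sigma(S)$, where $\Phi - sI_{dq}$ is singular and the non-resonance condition must be brought in. I would argue by contradiction: let $\begin{bmatrix} v^\top & w^\top \end{bmatrix} \neq 0$ be a left null vector of $\Lambda(s)$, so that reading the three block columns gives $v^\top(F + LH - sI_\mu) + w^\top \Gamma H_e = 0$, $w^\top(\Phi - sI_{dq}) = 0$, and $v^\top G = 0$. Introducing $z^\top \coloneqq w^\top \Gamma$, the first and third relations state that $\begin{bmatrix} v^\top & z^\top \end{bmatrix}$ annihilates the matrix of Lemma \ref{lem:non-resonance} from the left; since that matrix has full row rank $\mu + q$ for every $s \in \sigma(S)$, I obtain $v = 0$ and $w^\top \Gamma = 0$.

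It then remains to show $w = 0$, and here I would invoke controllability of the internal-model pair $(\Phi, \Gamma)$. Because $(S_0, \Gamma_0)$ is a controllable companion pair and the Kronecker structure $\Phi = I_q \otimes S_0$, $\Gamma = I_q \otimes \Gamma_0$ yields the controllability matrix $I_q \otimes \begin{bmatrix} \Gamma_0 & S_0\Gamma_0 & \cdots & S_0^{d-1}\Gamma_0 \end{bmatrix}$ of full rank $dq$, the PBH test for $(\Phi, \Gamma)$ forces any $w$ with $w^\top(\Phi - sI_{dq}) = 0$ and $w^\top \Gamma = 0$ to vanish. This contradicts $\begin{bmatrix} v^\top & w^\top \end{bmatrix} \neq 0$, so $\rank \Lambda(s) = \mu + dq$ at each $s \in \sigma(\Phi)$. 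Collecting both cases establishes the claim. The delicate point, mirroring \cite[Lem.~4.2]{isidori2017lectures}, is the bookkeeping by which the auxiliary vector $z^\top = w^\top \Gamma$ couples the plant-side non-resonance condition to the internal-model controllability; getting these two PBH arguments to interlock correctly is the crux.
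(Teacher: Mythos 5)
Your proposal is correct and follows essentially the same route as the paper: the paper omits this proof, deferring to \cite[Lem.~4.2]{isidori2017lectures}, and your argument is precisely that classical PBH argument transplanted to the non-minimal realization, splitting on $s \notin \sigma(\Phi)$ (where Corollary \ref{lem:st_det} handles the rank) versus $s \in \sigma(\Phi) = \sigma(S)$ (where the left-annihilator $[v^\top \; w^\top\Gamma]$ is killed by the full row rank in Lemma \ref{lem:non-resonance}, and then $w = 0$ follows from controllability of $(\Phi,\Gamma)$). The only cosmetic remark is that the controllability matrix of $(\Phi,\Gamma)$ equals $I_q \otimes [\Gamma_0 \; S_0\Gamma_0 \; \cdots \; S_0^{d-1}\Gamma_0]$ only up to a column permutation, which of course does not affect the rank conclusion.
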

Corollary \ref{cor:stab_ctrb} ensures that the stabilizer \eqref{eq:stabilizer} can be obtained by combining \eqref{eq:obs} and a feedback law of the form $u = K_\zeta\hat{\zeta} + K_\eta \eta$.

Therefore, the only remaining step is to design $K = [K_{\zeta}\;\; K_{\eta}]$ using the available dataset.
First, we simulate the filter and the internal model unit as in \eqref{eq:filter_out_reg} and \eqref{eq:IM_out_reg}.
Then, to be able to use \eqref{eq:w_eps_zeta_eta} for data-driven control, we need to overcome the limitation that the signals $\col(w(t), \epsilon(t))$ are not available for measurement.
To this aim, note that the matrix
\begin{equation}\label{eq:aug_exo}
    \begin{bmatrix}
        S & 0_{l \times n}\\
        P - \Pi L Q & A - \Pi L C
    \end{bmatrix}
\end{equation}
is such that $\sigma(S) \cap \sigma(A - \Pi L C) = \emptyset$ due to Lemma \ref{lem:min_poly}, which implies that the Jordan normal form of \eqref{eq:aug_exo} is block diagonal, with one block associated with $S$ and the other with $A - \Pi L C$.
Thus, we can extend Lemma \ref{lem:chi} to ensure that $\col(w(t), \epsilon(t)) = D\chi(t)$, where $\chi(t)$ is obtained by simulating \eqref{eq:aux_out_reg}.
The statement of such result and its proof are omitted for brevity as they are identical to Lemma \ref{lem:chi}.

Next, as before, we sample the available continuous-time signals in $N$ distinct time instants (with fixed sampling time $\Ts$ for simplicity) to obtain the data batches \eqref{eq:batches_out_reg}.
Finally, we employ the LMI \eqref{eq:LMI_out_reg} and equation \eqref{eq:K_out_reg} to compute the stabilizing gain $K = [K_\zeta \; K_\eta]$.
The overall controller combining the internal model and the stabilizer is given in \eqref{eq:controller_out_reg}, which we report in the same form as \eqref{eq:ctrl_w}.

The following result provides the guarantees for Algorithm \ref{alg:out_reg}.
Its proof is omitted as it is based on Theorem \ref{thm:stabilization} and the arguments above. See \cite[\S 4.4]{isidori2017lectures} for further details.
\begin{theorem}
    Consider Algorithm \ref{alg:out_reg} and let Assumptions \ref{hyp:ctrb_obs} and \ref{hyp:non-resonance} hold.
    Then:
    \begin{enumerate}
        \item The LMI \eqref{eq:LMI_out_reg} is feasible if
        \begin{equation}\label{eq:E_out_reg}
            \rank\begin{bmatrix}
                \mathcal{X} \\ \mathcal{Z} \\ \mathcal{U}
            \end{bmatrix} = 
            (d + \delta) + (\mu + dq) + m.
        \end{equation}
        \item For any solution $\mathcal{P}$, $\mathcal{Q}$ of \eqref{eq:LMI_out_reg}, the gain $K$ computed from \eqref{eq:K_out_reg} is such that
        \begin{equation}
            \begin{bmatrix}
            F + LH & 0_{\mu \times dq}\\
            \Gamma H_e & \Phi
        \end{bmatrix} + \begin{bmatrix}
            G \\ 0_{dq \times m}
        \end{bmatrix} K
        \end{equation}
        is Hurwitz.
        Thus, the controller \eqref{eq:controller_out_reg} solves the data-driven output regulation problem of Section \ref{sec:problem2}.
    \end{enumerate}
\end{theorem}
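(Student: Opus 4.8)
The plan is to mirror the proof of Theorem~\ref{thm:stabilization} almost verbatim, with the single-system data $(F+LH, G)$ replaced by the augmented pair
\begin{equation*}
    \mathcal{A} \coloneqq \begin{bmatrix} F + LH & 0_{\mu \times dq}\\ \Gamma H_e & \Phi \end{bmatrix}, \qquad \mathcal{B} \coloneqq \begin{bmatrix} G \\ 0_{dq \times m} \end{bmatrix},
\end{equation*}
and with the auxiliary signal $\chi$ now generated by the block-diagonal dynamics \eqref{eq:aux_out_reg}. The objective of Part~2 is precisely to show that $\mathcal{A} + \mathcal{B}K$ is Hurwitz.

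First I would establish the data-based identity that plays the role of \eqref{dZ}. Using the extension of Lemma~\ref{lem:chi} to the augmented exosystem \eqref{eq:aug_exo}---which is legitimate exactly because $\sigma(S)\cap\sigma(A-\Pi L C)=\emptyset$ by Lemma~\ref{lem:min_poly}, so \eqref{eq:aug_exo} admits a clean block-diagonal Jordan form with one $S$-block and one $(A-\Pi L C)$-block---one obtains $\col(w(t),\epsilon(t)) = D_0\chi(t)$ for some constant matrix $D_0$ and all $t\in[0,\tau]$. Substituting this into the lower block rows of \eqref{eq:w_eps_zeta_eta} and sampling yields
\begin{equation*}
    \dot{\mathcal{Z}} = \widetilde{D}\mathcal{X} + \mathcal{A}\mathcal{Z} + \mathcal{B}\mathcal{U}, \qquad \widetilde{D} \coloneqq \begin{bmatrix} 0_{\mu \times l} & LC\\ \Gamma Q_e & \Gamma C_e \end{bmatrix} D_0,
\end{equation*}
which is the only structurally new ingredient; everything downstream is formal.

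For Part~1, I would invoke Corollary~\ref{cor:stab_ctrb} to obtain $\mathcal{P}=\mathcal{P}^\top\succ 0$ and $K$ solving the model-based Lyapunov inequality for $\mathcal{A}+\mathcal{B}K$, then use the rank condition \eqref{eq:E_out_reg} together with \cite[Thm.~2]{de2019formulas}, \cite[Thm.~2]{hu2025data} to produce $\mathcal{M}$ with $\col(0_{(d+\delta)\times(\mu+dq)},I_{\mu+dq},K)=\col(\mathcal{X},\mathcal{Z},\mathcal{U})\mathcal{M}$. The data identity then gives $\mathcal{A}+\mathcal{B}K=(\dot{\mathcal{Z}}-\widetilde{D}\mathcal{X})\mathcal{M}=\dot{\mathcal{Z}}\mathcal{M}$, and setting $\mathcal{Q}=\mathcal{M}\mathcal{P}$ reproduces \eqref{eq:LMI_out_reg}. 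For Part~2, given any feasible $\mathcal{P},\mathcal{Q}$, the third line of \eqref{eq:LMI_out_reg} makes $\mathcal{Z}^\dagger\coloneqq\mathcal{Q}\mathcal{P}^{-1}$ a right inverse of $\mathcal{Z}$ with $\mathcal{X}\mathcal{Q}=0$; combining this with the data identity and \eqref{eq:K_out_reg} gives $\mathcal{A}+\mathcal{B}K=\dot{\mathcal{Z}}\mathcal{Z}^\dagger$, and the second LMI inequality certifies Hurwitzness exactly as in \eqref{eq:Zdot_Z+}.

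Finally, to conclude output regulation, I would set $w=0$ in \eqref{eq:w_eps_zeta_eta} and read off the cascade: $\epsilon$ evolves autonomously under $A-\Pi L C$ (Hurwitz by Lemma~\ref{lem:min_poly}) and drives the $(\hat\zeta,\eta)$-block governed by the now-Hurwitz $\mathcal{A}+\mathcal{B}K$, so the origin is globally exponentially stable. Since the controller \eqref{eq:controller_out_reg} is exactly the interconnection of the internal model \eqref{eq:IM} built from \eqref{eq:Phi_Gamma} with a stabilizer of the augmented plant \eqref{eq:aug_plant}, the convergence $e(t)\to 0$ for arbitrary $w$ follows from the classical internal-model argument in \cite[Prop.~4.3]{isidori2017lectures}. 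The main obstacle is not any single computation but the bookkeeping of that first step: confirming that the augmented auxiliary dynamics \eqref{eq:aux_out_reg} reproduce $\col(w,\epsilon)$ through a single constant matrix $D_0$, which rests entirely on the spectral separation ensuring a block-diagonal Jordan structure for \eqref{eq:aug_exo}.
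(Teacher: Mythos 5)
Your proposal is correct and follows exactly the route the paper intends: the paper omits this proof, stating it is based on Theorem \ref{thm:stabilization} and the preceding arguments (Corollary \ref{cor:stab_ctrb}, the extension of Lemma \ref{lem:chi} via the spectral separation $\sigma(S)\cap\sigma(A-\Pi LC)=\emptyset$ from Lemma \ref{lem:min_poly}, and the internal-model argument of \cite[Prop.~4.3]{isidori2017lectures}), which is precisely what you assembled, including the correct dimension $(d+\delta)$ for the zero block in the data consistency equation.
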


\section{Design of Canonical Non-Minimal Realizations}\label{sec:tuning}

In Sections \ref{sec:stabilization} and \ref{sec:out_reg}, we have shown that data-driven output-feedback control is feasible as long as $F$, $G$, $L$ are known such that \eqref{eq:non-minimal_plant} is a canonical non-minimal realization of the plant \eqref{eq:plant}, for some output matrix $H$.
It is therefore convenient to provide tuning techniques for such matrices.
In this section, we show that the design of $F$, $G$, and $L$ can be derived from some structural properties of the pair $(C, A)$.
Furthermore, we illustrate how such properties can be inferred from input-output data of the form \eqref{eq:dataset}, as long as a full-rank condition similar to \eqref{eq:E} holds.
We begin by illustrating the tuning in the two simplified scenarios of \cite{bosso2024derivative}.

\subsection{State-Feedback Scenario (C = I\texorpdfstring{\textsubscript{n}}{n})}
Let Assumption \ref{hyp:ctrb_obs} hold.
Also, suppose that the order $n$ of system \eqref{eq:plant} is known and that we have full access to its states, i.e., $y = x$.
Then, we can assign the matrices of \eqref{eq:non-minimal_plant} as follows:
\begin{equation}\label{eq:FGL_state}
    F = \begin{bmatrix}
        -\lambda I_n & 0_{n \times m}\\
        0_{m \times n} & -\lambda I_m
    \end{bmatrix}, \quad G = \begin{bmatrix}
        0_{n \times m} \\
        \gamma I_m
    \end{bmatrix}, \quad L = \begin{bmatrix}
        \gamma I_n\\
        0_{m \times n}
    \end{bmatrix},
\end{equation}
where $\lambda > 0$ and $\gamma \neq 0$ are scalar tuning gains.

Using $C = I_n$, we obtain that \eqref{eq:Pi} is solved by
\begin{equation}\label{eq:H_state}
    \Pi = H = \gamma^{-1}\begin{bmatrix}
        A + \lambda I_n & B
    \end{bmatrix},
\end{equation}
thus we conclude that \eqref{eq:non-minimal_plant} with $F$, $G$, $L$ in \eqref{eq:FGL_state} and $H$ in \eqref{eq:H_state} is a canonical non-minimal realization.
Also, note that
\begin{equation}
    F + LH = \begin{bmatrix}
        A & B\\
        0_{m \times n} & -\lambda I_m
    \end{bmatrix},
\end{equation}
and we can prove that $(F + LH, G)$ is controllable (i.e., the realization is strong) because, using the PBH test, the condition
\begin{equation}
    \rank\begin{bmatrix}
        sI_n - A & -B & 0_{n \times m}\\
        0_{m \times n} & sI_m + \lambda I_m & \gamma I_m
    \end{bmatrix} = n + m
\end{equation}
is verified for all $s \in \sigma(A) \cup \{-\lambda\}$.

\subsection{SISO Output-Feedback Scenario (m = p = 1)}
Suppose again that the order $n$ of system \eqref{eq:plant} is known.
Then, we follow the structure adopted in classical adaptive observer design (see \cite{kreisselmeier1977adaptive} or ``representation 2'' of \cite[Ch. 4]{narendra1989stable}) by letting
\begin{equation}\label{eq:FGL_siso}
    F = \begin{bmatrix}
        \Lambda & 0_{n \times n}\\
        0_{n \times n} & \Lambda
    \end{bmatrix}, \quad G = \begin{bmatrix}
        0_{n \times 1} \\
        \ell
    \end{bmatrix}, \quad L = \begin{bmatrix}
        \ell \\
        0_{n \times 1}
    \end{bmatrix},
\end{equation}
where $(\Lambda, \ell)$ is a controllable pair, and matrix $\Lambda \in \R^{n \times n}$ is Hurwitz and has $n$ distinct eigenvalues.
As stated in the following results, the tuning \eqref{eq:FGL_siso} implies the existence of a strong canonical non-minimal realization.
We postpone the proofs to Theorems \ref{thm:tuning_nu} and \ref{thm:ctrb_nu}, which generalize these statements.
\begin{proposition}
    Under Assumption \ref{hyp:ctrb_obs}, the matrices $F$, $G$, and $L$ in \eqref{eq:FGL_siso} are such that there exist matrices $\Pi$ and $H$ solving equation \eqref{eq:Pi}.
    As a consequence, for some $H$, \eqref{eq:non-minimal_plant} is a canonical non-minimal realization of system \eqref{eq:plant}.
\end{proposition}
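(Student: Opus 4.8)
The plan is to reduce the problem to scalar transfer-function matching and then recover $\Pi$ from realization theory. Note first that $F = \bdiag(\Lambda,\Lambda)$ is Hurwitz because $\Lambda$ is, so the second bullet of Definition \ref{def:canonical_realization} is immediate and the whole difficulty lies in exhibiting $\Pi$ and $H$ solving \eqref{eq:Pi}. Writing $H = [H_1\ \ H_2]$ with $H_1,H_2 \in \R^{1\times n}$, the tuning \eqref{eq:FGL_siso} gives the block-triangular matrix $F + LH = \begin{bmatrix} \Lambda + \ell H_1 & \ell H_2 \\ 0 & \Lambda \end{bmatrix}$, and a short computation with the Schur complement and the Sherman--Morrison identity reduces its transfer function to $H(sI_\mu - F - LH)^{-1}G = g_2(s)/(1 - g_1(s))$, where $g_i(s) \coloneqq H_i(sI_n - \Lambda)^{-1}\ell$. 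Since $(\Lambda,\ell)$ is controllable, the map $H_i \mapsto g_i$ is onto the strictly proper rational functions with denominator $d(s) \coloneqq \det(sI_n - \Lambda)$; writing $C(sI_n - A)^{-1}B = b(s)/a(s)$ with $a(s) \coloneqq \det(sI_n - A)$, I would therefore choose $H_1,H_2$ so that $g_1 = (d-a)/d$ and $g_2 = b/d$, which yields $g_2/(1-g_1) = b/a$. Hence the realization \eqref{eq:non-minimal_plant} and the plant \eqref{eq:plant} share the same transfer function, as anticipated by Remark \ref{rem:transfer}.

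Next I would convert this transfer-function identity into the matrix equation \eqref{eq:Pi}, essentially reversing Lemma \ref{lem:Pi}. By Assumption \ref{hyp:ctrb_obs} the triple $(C,A,B)$ is minimal, so $a$ and $b$ are coprime and the McMillan degree of $b/a$ equals $n$. If one shows that the observable part of $(H, F+LH, G)$ has dimension exactly $n$, then, being an $n$-dimensional observable realization of a degree-$n$ transfer function, it is minimal and hence related to $(C,A,B)$ by a similarity $R$. Taking the observability (Kalman) decomposition $V$ of $(H, F+LH)$ --- in which the unobservable coordinates are $(F+LH)$-invariant, so the upper-right block of $V^{-1}(F+LH)V$ vanishes and $H$ annihilates them --- and setting $\Pi \coloneqq R\,[\,I_n\ \ 0\,]\,V^{-1}$, one checks directly that $\Pi(F+LH) = A\Pi$, $\Pi G = B$, and $C\Pi = H$, while $\rank\Pi = n$ follows because $R$ and $V$ are invertible. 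This produces the required $\Pi$ and, together with $F$ Hurwitz, the canonical non-minimal realization.

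The crux --- and the step I expect to be hardest --- is showing that the observable part has dimension exactly $n$, equivalently that the $n$ redundant modes of the $2n$-dimensional realization are unobservable. By Lemma \ref{lem:eig_F+LH} these modes already have their eigenvalues in $\sigma(F) = \sigma(\Lambda)$, and the clean way to conclude is to prove that $(F+LH, G)$ is controllable (so the realization is in fact strong): then the uncontrollable part is empty, and since the McMillan degree is $n$ the unobservable part must carry the remaining $n$ dimensions. Controllability can be attacked via the PBH test on the block-triangular pair: using the determinantal identity $\det(sI_n - \Lambda - \ell H_1) = a(s)$, for $s \in \sigma(\Lambda + \ell H_1) = \sigma(A)$ the coprimality of $a$ and $b$ forces $b(s) \neq 0$, which together with controllability of the rank-one update $(\Lambda + \ell H_1, \ell)$ rules out uncontrollable modes, while for $s \in \sigma(\Lambda)$ controllability of $(\Lambda,\ell)$ does the same. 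The genuinely delicate case is the overlap $\sigma(A) \cap \sigma(\Lambda) \neq \emptyset$ --- which cannot be excluded a priori since $A$ is unknown --- where I would exploit the distinct eigenvalues of $\Lambda$ to disentangle the two eigenspaces. Alternatively, one may bypass controllability and establish directly, from the explicit $H_1,H_2$, that the observability matrix of $(H, F+LH)$ has rank exactly $n$; this is the computation that the general Theorems \ref{thm:tuning_nu} and \ref{thm:ctrb_nu} presumably carry out in the observability-index setting.
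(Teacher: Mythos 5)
Your proposal is correct, but it takes a genuinely different route from the paper's. The paper proves this proposition as the special case $p=m=1$, $\nu=n$ of Theorem \ref{thm:tuning_nu}, and it never leaves the state-space/matrix-equation setting: the plant is put in observer canonical form (Appendix \ref{sec:MIMO_obs_form}), the quadratic equation \eqref{eq:Pi} is converted into a linear one by adding the constraint $YL=\Psi$ --- which amounts to assigning $A-\Pi LC$ to be similar to $\Lambda$ --- and the resulting linear system decouples into commutation equations $X\Lambda = \Lambda X$, $X\ell=\phi$, each uniquely solvable by Lemma \ref{lem:X} because controllability of $(\Lambda,\ell)$ makes $\Lambda$ cyclic. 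Existence of $(\Pi,H)$ is therefore constructive and entirely independent of the controllability of $(F+LH,G)$, which the paper establishes separately (Theorem \ref{thm:ctrb_nu}, via polynomial matrix descriptions). Your route instead fixes $H$ first by transfer-function matching ($g_2/(1-g_1)=b/a$, the classical adaptive-observer parameterization) and recovers $\Pi$ from uniqueness of minimal realizations; the price is that existence of $\Pi$ then hinges on controllability of $(F+LH,G)$, i.e., on the hardest step, which you leave as a sketch precisely in the overlap case $\sigma(A)\cap\sigma(\Lambda)\neq\emptyset$. That case does close along the lines you indicate: if $\lambda_i\in\sigma(A)\cap\sigma(\Lambda)$, coprimality of $a$ and $b$ gives $b(\lambda_i)\neq 0$; since the eigenvalues of $\Lambda$ are simple, the residue of $g_2=b/d$ at $\lambda_i$ equals $(H_2 w_i)(\tilde{v}_i^\top\ell)$, with $w_i$, $\tilde{v}_i$ the right and left eigenvectors of $\Lambda$ at $\lambda_i$ and $\tilde{v}_i^\top\ell\neq 0$ by controllability, so $H_2 w_i\neq 0$; consequently the equation $v_2^\top(sI_n-\Lambda)=(v_1^\top\ell)H_2$ arising in your PBH analysis has no solution, because its right-hand side is not orthogonal to $\ker(sI_n-\Lambda)=\mathrm{span}\{w_i\}$, and no uncontrollable mode exists. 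As for what each approach buys: yours proves both SISO propositions simultaneously and stays close to classical frequency-domain intuition, but it is intrinsically SISO (scalar rational-function manipulations) and produces $\Pi$ only implicitly; the paper's construction yields $\Pi$ explicitly, delivers the additional fact that $A-\Pi LC$ is similar to $I_p\otimes\Lambda$ (needed downstream, e.g., in Lemma \ref{lem:chi} and in the proof of Theorem \ref{thm:nu_hat}), and generalizes verbatim to the MIMO setting of Assumption \ref{hyp:nu}.
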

\begin{proposition}
    Let Assumption \ref{hyp:ctrb_obs} hold.
    Consider $F$, $G$, $L$ as in \eqref{eq:FGL_siso}, and let $\Pi$ and $H$ be solutions to the matrix equation \eqref{eq:Pi}.
    Then, the pair $(F + LH, G)$ is controllable.
\end{proposition}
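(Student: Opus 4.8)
The plan is to certify controllability of $(F + LH, G)$ via the Popov--Belevitch--Hautus (PBH) test, i.e., to show that $[\,F + LH - sI_{2n}\ \ G\,]$ has full row rank $2n$ for every $s \in \C$, or equivalently that no nonzero left eigenvector $v$ of $F + LH$ satisfies $v^\top G = 0$. The key simplification is the block-triangular form induced by \eqref{eq:FGL_siso}: writing $H = [\,H_1\ \ H_2\,]$ with $H_1, H_2 \in \R^{1 \times n}$, one has
\[
    F + LH = \begin{bmatrix} \Lambda + \ell H_1 & \ell H_2 \\ 0 & \Lambda \end{bmatrix}, \qquad G = \begin{bmatrix} 0 \\ \ell \end{bmatrix}.
\]
Partitioning $v = \col(v_1, v_2)$ and setting the scalar $\alpha \coloneqq v_1^\top \ell$, the relations $v^\top(F+LH) = s\,v^\top$ and $v^\top G = v_2^\top \ell = 0$ reduce to $v_1^\top(\Lambda - sI_n) = -\alpha H_1$, $v_2^\top(\Lambda - sI_n) = -\alpha H_2$, and $v_2^\top \ell = 0$. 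When $\alpha = 0$, the first and third relations give $v_1^\top(\Lambda - sI_n) = 0$ with $v_1^\top \ell = 0$, so the PBH test applied to the controllable pair $(\Lambda, \ell)$ forces $v_1 = 0$; the remaining two relations then yield $v_2 = 0$ by the same argument.

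The substantive case is $\alpha \neq 0$, and my strategy is to connect these relations to the plant's scalar transfer function. By Remark \ref{rem:transfer} the realization reproduces $C(sI_n - A)^{-1}B = n(s)/d(s)$, which is a coprime fraction because $(A,B,C)$ is minimal under Assumption \ref{hyp:ctrb_obs}. Writing $a(s) \coloneqq H_1(sI_n - \Lambda)^{-1}\ell$ and $b(s) \coloneqq H_2(sI_n - \Lambda)^{-1}\ell$ over the common denominator $\lambda(s) \coloneqq \det(sI_n - \Lambda)$, a short computation on \eqref{eq:non-minimal_plant_filter} (where $\hat\zeta$ splits into a filter of $y$ and a filter of $u$) identifies the input-output map from $u$ to $y$ as $b(s)/(1 - a(s))$; matching this with $n/d$ and invoking coprimeness yields the polynomial identities $\lambda - n_a = d$ and $n_b = n$, with $a = n_a/\lambda$ and $b = n_b/\lambda$. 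For $s \notin \sigma(\Lambda)$ I would invert $(\Lambda - sI_n)$ in the two vector relations and substitute into $\alpha = v_1^\top\ell$ and $v_2^\top\ell = 0$, obtaining $\alpha(1 - a(s)) = 0$ and $\alpha\, b(s) = 0$; since $\alpha \neq 0$ this gives $a(s) = 1$ and $b(s) = 0$, i.e.\ $d(s) = 0$ and $n(s) = 0$, contradicting coprimeness.

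The main obstacle is the remaining subcase $s \in \sigma(\Lambda)$, where $(\Lambda - sI_n)$ is singular and the inversion above fails. Here I would exploit that $\Lambda$ has $n$ distinct eigenvalues, so such an $s$ is a simple eigenvalue with one-dimensional right eigenspace spanned by some $r \neq 0$; testing the solvability of $v_1^\top(\Lambda - sI_n) = -\alpha H_1$ and $v_2^\top(\Lambda - sI_n) = -\alpha H_2$ against $r$ forces $H_1 r = H_2 r = 0$ (as $\alpha \neq 0$). Computing the residues of $a$ and $b$ at this simple pole—which are nonzero precisely because controllability of $(\Lambda,\ell)$ guarantees $w^\top \ell \neq 0$ for the associated left eigenvector $w$—these vanishing conditions become exactly $d(s) = 0$ and $n(s) = 0$, once more contradicting coprimeness. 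Alternatively, to avoid the case split entirely, I could replace $(sI_n - \Lambda)^{-1}$ by the adjugate $\operatorname{adj}(sI_n - \Lambda)$ throughout, converting the relations into polynomial identities valid for all $s$ including $\sigma(\Lambda)$, at the cost of slightly heavier bookkeeping. Either route shows that the only admissible $v$ is $v = 0$, so the PBH test certifies that $(F + LH, G)$ is controllable.
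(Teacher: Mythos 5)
Your proof is correct, but it does not follow the paper's route: the paper never proves this proposition directly, instead deferring it to the MIMO result of Theorem \ref{thm:ctrb_nu}, whose argument runs through polynomial matrix descriptions. There, the proof (i) passes to controller canonical coordinates and writes \eqref{eq:non-minimal_plant} as $(\mathcal{D}_\lambda(s) I_p - \mathcal{N}_y(s))y = \mathcal{N}_u(s)u$, (ii) uses the transfer-matrix equality of Lemma \ref{lem:Pi}/Remark \ref{rem:transfer} together with a determinant-degree count to show that the factor $\mathcal{R}(s)$ relating this description to a left-coprime description of the plant is unimodular, hence $\rank[\,\mathcal{D}_\lambda(s)I_p - \mathcal{N}_y(s)\;\;\mathcal{N}_u(s)\,] = p$ for all $s$, and (iii) converts that rank condition into the PBH condition for $(F+LH,G)$ via unimodular row/column operations on the pencil. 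Your first stage—identifying the loop map $b/(1-a)$ with $n/d$ and concluding $\lambda - n_a = d$, $n_b = n$ from coprimeness, degree, and monicity—is exactly the SISO specialization of (i)--(ii), since in the scalar case left coprimeness reduces to ``$d$ and $n$ have no common root.'' Where you genuinely depart is stage (iii): instead of polynomial operations you run a left-eigenvector PBH argument on the block-triangular form, splitting into $\alpha = 0$ versus $\alpha \neq 0$ and $s \notin \sigma(\Lambda)$ versus $s \in \sigma(\Lambda)$, and resolving the eigenvalue case by residues of $a$ and $b$ at a simple pole (your implication only needs $H_1 r = H_2 r = 0 \Rightarrow$ vanishing residues $\Rightarrow n_a(s) = n_b(s) = 0$, so the aside about $w^\top \ell \neq 0$ is not actually load-bearing). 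This buys a shorter, self-contained, and more transparent proof at the fixed SISO structure, but it is the one place where your argument consumes the hypothesis that $\Lambda$ has distinct eigenvalues, which the paper's unimodular-operation argument never uses—and which is what lets that argument scale to the MIMO setting of Theorem \ref{thm:ctrb_nu}. Your proposed adjugate variant, multiplying the eigenvector relations by $\operatorname{adj}(sI_n - \Lambda)$ to obtain polynomial identities valid for all $s$ at once, would eliminate both the case split and the reliance on distinct eigenvalues, and is in effect a scalar analogue of the paper's manipulation with the unimodular matrix $\mathcal{T}$.
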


\subsection{MIMO Output-Feedback Scenario}\label{sec:mimo}
In the previous examples, we have shown that a strong non-minimal realization can be achieved if the order of the plant is known.
It turns out that the structures in \eqref{eq:FGL_state} and \eqref{eq:FGL_siso} are more precisely related to the observability indices of system \eqref{eq:plant}, for which we briefly recall the basic notions.
For further details, we refer to \cite[Ch. 3]{antsaklis1997linear}.
In \eqref{eq:plant}, let
\begin{equation}
    C = \begin{bmatrix}
        c_1 &
        \cdots &
        c_p
    \end{bmatrix}^{\top}\!\!,
\end{equation}
and suppose that $\rank C = p$, so that $c_1, \ldots, c_p \in \R^n$ are linearly independent.
Then, consider the following sequence:
\begin{equation}
    c_1, \ldots, c_p, A^\top\! c_1, \ldots, A^\top\! c_p, \ldots, (A^{\top})^n c_1, \ldots, (A^{\top})^n c_p
\end{equation}
and select, starting from the left and moving to the right, the first appearing $n$ linearly independent vectors (which exist due to Assumption \ref{hyp:ctrb_obs}).
Such vectors can be reordered as follows:
\begin{equation}
    c_1, A^\top c_1, \ldots, (A^{\top})^{\nu_1 - 1}c_1, \ldots, c_p, A^\top c_p, \ldots, (A^{\top})^{\nu_p - 1}c_p.
\end{equation}
The integers $\nu_1, \ldots, \nu_p$ are called the \emph{observability indices} of system \eqref{eq:plant}, while $\nu \coloneqq \max_i \nu_i$ is the \emph{observability index} of \eqref{eq:plant}.
Note that $\nu_i \geq 1$ due to $\rank C = p$ and, by construction,
\begin{equation}
    \sum_{i = 1}^{p}\nu_i = n \leq \nu p.
\end{equation}
In the remainder of Section \ref{sec:mimo} and in Section \ref{sec:index}, we obtain results under the following assumption, which imposes a uniform observability index across all outputs.
\begin{assumption}\label{hyp:nu}
    It holds that $\nu_1 = \nu_2 = \ldots = \nu_p = \nu$.
\end{assumption}
Assumption \ref{hyp:nu} covers the previous state-feedback ($\nu_i = 1$, for all $i \in \{1, \ldots, n\}$) and SISO output-feedback ($\nu_1 = \nu = n$) cases.
Another relevant example is given by multi-input single-output (MISO) systems.
We also provide MIMO examples in Section \ref{sec:simulations}.
Future work will be dedicated to studying the general case of arbitrary observability indices.

Under Assumption \ref{hyp:nu}, we obtain the following two theorems, which generalize and unify the state-feedback and SISO output-feedback scenarios.
Since the proofs are quite lengthy, they are deferred to Appendix \ref{sec:proofs}.
\begin{theorem}\label{thm:tuning_nu}
    Let Assumptions \ref{hyp:ctrb_obs} and \ref{hyp:nu} hold.
    Pick $\Lambda \in \R^{\nu \times \nu}$ and $\ell \in \R^{\nu}$ such that $(\Lambda, \ell)$ is controllable and $\Lambda$ is Hurwitz and has $\nu$ distinct eigenvalues.
    Choose
    \begin{equation}\label{eq:FGL_nu}
        F = I_{p+m} \otimes \Lambda, \qquad G = \begin{bmatrix}
            0_{p\nu \times m}\\
            I_m \otimes \ell
        \end{bmatrix},\qquad L = \begin{bmatrix}
            I_p \otimes \ell\\
            0_{m\nu \times p}
        \end{bmatrix}.
    \end{equation}
    Then, there exist matrices $\Pi$ and $H$ such that:
    \begin{itemize}
        \item Equation \eqref{eq:Pi} holds.
        \item $A - \Pi LC$ is similar to $I_p \otimes \Lambda$.
    \end{itemize}
    As a consequence $F$, $G$, and $L$ are such that \eqref{eq:non-minimal_plant}, for some $H$, is a canonical non-minimal realization of \eqref{eq:plant}.
\end{theorem}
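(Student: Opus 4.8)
The plan is to solve the matrix equation \eqref{eq:Pi} by exploiting the block structure of the gains \eqref{eq:FGL_nu}. I would partition the unknown as $\Pi = [\,\Pi_y \;\; \Pi_u\,]$ conformably with $\zeta = \col(\zeta_y, \zeta_u)$, so that $\Pi_y \in \R^{n \times p\nu}$ and $\Pi_u \in \R^{n \times m\nu}$, and impose $H = C\Pi$ as required. Since Assumption \ref{hyp:nu} forces $n = p\nu$, the block $\Pi_y$ is \emph{square}. Substituting $L = \col(I_p \otimes \ell,\, 0)$ and $H = C\Pi$ into $F + LH$ and expanding $\Pi(F + LH) = A\Pi$ column-block by column-block, the equation \eqref{eq:Pi} decouples, upon introducing $\bar{A} \coloneqq A - \Pi_y(I_p \otimes \ell)C = A - \Pi L C$, into the three relations $\bar{A}\Pi_y = \Pi_y(I_p \otimes \Lambda)$, $\bar{A}\Pi_u = \Pi_u(I_m \otimes \Lambda)$, and $\Pi_u(I_m \otimes \ell) = B$ (the last being exactly $\Pi G = B$). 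The key observation is that if $\Pi_y$ is nonsingular and solves the first relation, then $\bar{A} = \Pi_y(I_p \otimes \Lambda)\Pi_y^{-1}$, which establishes the second bullet of the statement for free.

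The central step is therefore to construct a nonsingular $\Pi_y$. I would linearize the (a priori quadratic) first relation by passing to $\Psi \coloneqq \Pi_y^{-1}$: using $\bar{A} = A - \Psi^{-1}(I_p \otimes \ell)C$, the relation $\bar{A}\Pi_y = \Pi_y(I_p \otimes \Lambda)$ becomes the Sylvester equation $\Psi A - (I_p \otimes \Lambda)\Psi = (I_p \otimes \ell)C$. Now $(C, A)$ is observable with all observability indices equal to $\nu$ by Assumption \ref{hyp:nu}, while $(I_p \otimes \Lambda,\, I_p \otimes \ell)$ is controllable (being $p$ decoupled copies of the controllable pair $(\Lambda, \ell)$) with all controllability indices equal to $\nu$. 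I would then invoke a standard result on nonsingular solutions of Sylvester equations, whose operative condition is precisely that the observability indices of $(C, A)$ match the controllability indices of $(I_p \otimes \Lambda,\, I_p \otimes \ell)$ — which is exactly what the uniform-index hypothesis supplies. This nonsingularity is the \emph{main obstacle}: it is the only place where Assumption \ref{hyp:nu} is genuinely consumed, and some care is needed because $A$ is unknown, so $\sigma(A)$ and $\sigma(\Lambda)$ cannot be assumed disjoint a priori. I would handle the overlapping case either by a continuity/genericity argument on the eigenvalues of $\Lambda$ or, more robustly, by carrying out the construction in an observer canonical form induced by the uniform index, which is insensitive to spectral coincidences. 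I note that Lemma \ref{lem:min_poly} already yields, for \emph{any} solution of \eqref{eq:Pi}, that $\bar{A}$ is diagonalizable with $\sigma(\bar{A}) \subseteq \sigma(\Lambda)$, so the remaining content of the similarity claim is only that each eigenvalue of $\Lambda$ occurs with multiplicity exactly $p$ — equivalent to nonsingularity of $\Pi_y$.

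With $\Pi_y$ nonsingular in hand, I would recover $\Pi_u$ column-block by column-block. For the $j$-th input channel I seek $X_j \in \R^{n \times \nu}$ with $\bar{A}X_j = X_j\Lambda$ and $X_j\ell = B_j$, the $j$-th column of $B$. Since $\sigma(\bar{A}) = \sigma(\Lambda)$, the homogeneous Sylvester equation $\bar{A}X_j = X_j\Lambda$ has a large solution space: substituting $X_j = \Pi_y Y_j$ turns it into $(I_p \otimes \Lambda)Y_j = Y_j\Lambda$, whose solution space has dimension $n$ because $\Lambda$ has $\nu$ distinct eigenvalues. On this space the linear map $Y_j \mapsto Y_j\ell$ is a bijection onto $\R^{n}$, since controllability of $(\Lambda, \ell)$ makes every eigencomponent of $\ell$ nonzero; hence there is a unique admissible $X_j$, and assembling the blocks yields $\Pi_u$ with $\Pi_u(I_m \otimes \ell) = B$, i.e. $\Pi G = B$. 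This completes the solution of \eqref{eq:Pi}, so the first bullet holds and $\Pi$ has full row rank by Lemma \ref{lem:Pi}.

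Finally, I would close by noting that $\sigma(F) = \sigma(I_{p+m} \otimes \Lambda) = \sigma(\Lambda)$ lies in the open left half-plane since $\Lambda$ is Hurwitz, so $F$ is Hurwitz; together with the solvability of \eqref{eq:Pi}, Definition \ref{def:canonical_realization} then certifies that \eqref{eq:non-minimal_plant} is a canonical non-minimal realization of \eqref{eq:plant}. To summarize, the only delicate point is the nonsingularity of $\Pi_y$: once the uniform observability index is translated into the matching of indices in the Sylvester equation (modulo the spectral-overlap technicality), the similarity claim and the construction of $\Pi_u$ follow constructively from controllability of $(\Lambda, \ell)$ and the distinctness of its eigenvalues.
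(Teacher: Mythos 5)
Your block decomposition of \eqref{eq:Pi} into the three relations for $(\Pi_y,\Pi_u)$ is correct, and so is the recovery of $\Pi_u$ once an invertible $\Pi_y$ is in hand. The genuine gap is the central step: the existence of a nonsingular solution $\Psi$ of
\begin{equation*}
    \Psi A - (I_p\otimes\Lambda)\Psi = (I_p\otimes\ell)\,C .
\end{equation*}
First, there is no standard citable theorem that delivers this from index matching: in the MIMO case, observability of $(C,A)$ and controllability of $(I_p\otimes\Lambda, I_p\otimes\ell)$ are necessary but \emph{not} sufficient for nonsingularity, so the index-matching sufficiency you invoke would itself require a proof. Second, and fatally, the claim is false without a spectral-separation hypothesis that you are not entitled to: $\Lambda$ is a \emph{given}, arbitrary admissible matrix and $A$ is unknown, so $\sigma(A)\cap\sigma(\Lambda)\neq\emptyset$ cannot be excluded. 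Take $p=m=1$ with the characteristic polynomial of $A$ equal to $\mathcal{D}_\lambda(s)$ in \eqref{eq:D_lambda}. In the SISO case the output-injection gain realizing a prescribed characteristic polynomial for $A-\tilde{L}C$ is unique, and realizing $\mathcal{D}_\lambda$ forces $\tilde{L}=0$; hence any solution of \eqref{eq:Pi} for which $A-\Pi LC$ is similar to $\Lambda$ must satisfy $\Pi_y\ell=\Pi L=0$, so $\Pi_y$ is necessarily singular, and your Sylvester equation has no solution at all (one checks directly that solvability would require $\tilde{\ell}_k\tilde{c}_k=0$ in the common eigenbasis, contradicting controllability and observability). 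Yet the theorem remains true in this case: the paper's construction returns $\Pi_y=0$ and $\Pi=[\,0\;\;\Pi_u\,]$. This same example refutes your parenthetical claim that the similarity assertion is ``equivalent to nonsingularity of $\Pi_y$'': nonsingularity implies similarity, but not conversely.

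Your two fallback fixes do not close this gap. A genericity or continuity argument on the eigenvalues of $\Lambda$ is unavailable because the statement quantifies over \emph{every} admissible $\Lambda$ and $A$ is unknown, so no choice can be certified to avoid $\sigma(A)$; moreover, the unique Sylvester solution diverges as the spectra merge, so no limiting argument recovers the overlapping case. As for ``carrying out the construction in an observer canonical form'': this is indeed what the paper does, but it works only because the paper abandons the invertible-$\Pi_y$ route altogether. Writing $Y=T_o\Pi$ in the canonical form of Appendix~\ref{sec:MIMO_obs_form}, the paper imposes the \emph{additional linear constraint} $YL = A_m + I_p\otimes\theta_\lambda$, which pins $A-\Pi LC$ to the known matrix $T_o^{-1}(I_p\otimes\Lambda_o)T_o$ (similar to $I_p\otimes\Lambda$ by construction) and thereby converts the quadratic equation \eqref{eq:Pi} into a linear one; the block structure of \eqref{eq:FGL_nu} then decouples it into equations of the form $X\Lambda=\Lambda X$, $X\ell=\phi$, which Lemma~\ref{lem:X} solves for \emph{any} relative position of $\sigma(A)$ and $\sigma(\Lambda)$, with no invertibility of the $y$-block ever required. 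To repair your proof you would have to replace the nonsingular-Sylvester step by this (or an equivalent) linearization; as written, the argument cannot establish the theorem for all admissible $\Lambda$.
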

\begin{remark}
    The proof of Theorem \ref{thm:tuning_nu} involves two steps.
    First, using the multivariable observer canonical form (see Appendix \ref{sec:MIMO_obs_form}), we show there exists a linear equation whose solutions are also solutions to the quadratic equation \eqref{eq:Pi}.
    Then, we exploit the tuning \eqref{eq:FGL_nu} to decouple the linear equation into elementary blocks that can be solved via Lemma \ref{lem:X}, given in Appendix \ref{sec:X}.
\end{remark}
\begin{theorem}\label{thm:ctrb_nu}
    Given the assumptions and matrices $F$, $G$ and $L$ of Theorem \ref{thm:tuning_nu}, let $\Pi$ and $H$ be solutions to the matrix equation \eqref{eq:Pi}.
    Then, the pair $(F + LH, G)$ is controllable.
\end{theorem}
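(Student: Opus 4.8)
The plan is to verify controllability of $(F + LH, G)$ through the Popov--Belevitch--Hautus (PBH) test, exploiting the block-triangular structure induced by the Kronecker tuning \eqref{eq:FGL_nu}. Partitioning $\zeta = \col(\zeta_1, \zeta_2)$ with $\zeta_1 \in \R^{p\nu}$, $\zeta_2 \in \R^{m\nu}$, and writing $H = \begin{bmatrix} H_1 & H_2 \end{bmatrix}$, $\Pi = \begin{bmatrix} \Pi_1 & \Pi_2 \end{bmatrix}$ accordingly, the tuning gives
\[
    F + LH = \begin{bmatrix} I_p \otimes \Lambda + (I_p \otimes \ell)H_1 & (I_p \otimes \ell)H_2 \\ 0 & I_m \otimes \Lambda \end{bmatrix}, \quad G = \begin{bmatrix} 0 \\ I_m \otimes \ell \end{bmatrix}.
\]
Denoting the top-left block by $M_{11}$, the matrix equation \eqref{eq:Pi} then yields the relations $\Pi_1 M_{11} = A \Pi_1$, $\Pi_2(I_m \otimes \ell) = B$, and $H_i = C\Pi_i$, which I would record at the outset for later use.

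First I would dispose of the case $w_1 = 0$ for a left eigenvector $w = \col(w_1, w_2)$ satisfying $w^\top(F + LH) = s w^\top$ and $w^\top G = 0$, the goal being $w = 0$. The pair $(I_m \otimes \Lambda, I_m \otimes \ell)$ is controllable: any left eigenvector decomposes into $m$ eigenvectors of $\Lambda$ annihilated by $\ell$, which is impossible since $(\Lambda, \ell)$ is controllable. Hence if $w_1 = 0$, the conditions $w_2^\top(sI - I_m \otimes \Lambda) = 0$ and $w_2^\top(I_m \otimes \ell) = 0$ force $w_2 = 0$. It then remains to exclude $w_1 \neq 0$, in which case $s \in \sigma(M_{11}) \subseteq \sigma(A) \cup \sigma(\Lambda)$ by Lemma \ref{lem:eig_F+LH}. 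As a guiding observation, the transposed intertwining $(F+LH)^\top \Pi^\top = \Pi^\top A^\top$ shows that each left eigenvector $\psi^\top A = s\psi^\top$ lifts to a left eigenvector $\psi^\top \Pi$ of $F + LH$ with $(\psi^\top\Pi) G = \psi^\top B \neq 0$, using controllability of $(A,B)$ and the full row rank of $\Pi$; this confirms that the plant modes carry no uncontrollable directions of that form, although it does not by itself exhaust all eigenvectors at a given $s$.

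The main obstacle is precisely the complete exclusion of $w_1 \neq 0$, which covers both the unobservable eigenvalues $s \in \sigma(A_{\bar o}) \subseteq \sigma(\Lambda)$ (not captured by the lifted eigenvectors $\psi^\top\Pi$) and the plant modes when $\Pi_1$ is singular. For such $s$ one must show directly that no $w_1 \neq 0$ satisfies simultaneously $w_1^\top M_{11} = s w_1^\top$, $w_2^\top(sI - I_m \otimes \Lambda) = w_1^\top (I_p \otimes \ell) H_2$, and $w_2^\top(I_m \otimes \ell) = 0$. Here the block-triangular reduction alone is insufficient, because $I_m \otimes \ell$ is not surjective and $w_2$ is only partially constrained by the coupling term. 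I expect this to be the crux, and I would resolve it by passing to the multivariable observer canonical coordinates built in the proof of Theorem \ref{thm:tuning_nu}, where $F + LH$, $G$, and $H$ decouple into elementary blocks; in those coordinates the distinctness of the eigenvalues of $\Lambda$ together with the controllability of $(\Lambda, \ell)$ prevents any such coupling, forcing $w_1 = 0$ and hence, by the easy direction, $w = 0$.
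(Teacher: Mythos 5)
Your setup is sound: the block-triangular form of $F+LH$ under the tuning \eqref{eq:FGL_nu}, the disposal of the case $w_1=0$ via controllability of $(I_m\otimes\Lambda, I_m\otimes\ell)$, and the lifting of left eigenvectors of $A$ through $\Pi$ (with $(\psi^\top\Pi)G=\psi^\top B\neq 0$ by controllability of $(A,B)$) are all correct. But the part you defer is the entire theorem, and the mechanism you propose for it does not work. The modes of the upper block $M_{11}=I_p\otimes\Lambda+(I_p\otimes\ell)H_1$ are reached from the input only through the coupling term $(I_p\otimes\ell)H_2$, and $H_2=C\Pi_2$ is \emph{plant-dependent} data: it encodes the numerator of the plant's transfer matrix. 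Whether those modes (in particular the unobservable ones in $\sigma(A_{\bar o})\subset\sigma(\Lambda)$) are controllable is therefore not a property of $(\Lambda,\ell)$ at all; distinctness of the eigenvalues of $\Lambda$ and controllability of $(\Lambda,\ell)$ are insensitive to the one thing that decides the question, namely whether the plant's ``denominator'' and ``numerator'' share a root. Moreover, the coordinates built in the proof of Theorem \ref{thm:tuning_nu} decouple the \emph{linear equation for $\Pi$} into elementary Sylvester blocks solvable by Lemma \ref{lem:X}; they do not decouple $F+LH$ itself, whose coupling block $(I_p\otimes\ell)H_2$ remains full in those coordinates. So the claim ``the decoupling forces $w_1=0$'' has no support.

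What is actually needed — and what the paper's proof supplies — is a coprimeness argument that uses both halves of Assumption \ref{hyp:ctrb_obs}. The paper passes to controller canonical coordinates for each $\Lambda$-block, uses the structure theorem to write \eqref{eq:non-minimal_plant} as the polynomial matrix description $(\mathcal{D}_\lambda(s)I_p-\mathcal{N}_y(s))y=\mathcal{N}_u(s)u$, and compares it with a left-coprime description $\mathcal{D}(s)y=\mathcal{N}(s)u$ of the plant (left coprimeness following from controllability of $(A,B)$; observability and Assumption \ref{hyp:nu} fixing $\deg\det\mathcal{D}(s)=n$). Equality of transfer matrices (Remark \ref{rem:transfer}) plus the degree match forces the connecting matrix $\mathcal{R}(s)$ in \eqref{eq:i_o_matching} to be unimodular, giving the everywhere-full-rank condition \eqref{eq:rank:D_N}; a unimodular reduction of the PBH matrix \eqref{eq:PBH_nu} then shows that controllability of $(F+LH,G)$ is exactly condition \eqref{eq:rank:D_N}. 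In short: the crux of your sketch is precisely the point where plant minimality must enter, your proposed resolution never invokes it, and no argument confined to properties of $(\Lambda,\ell)$ can close that gap.
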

\begin{remark}
    Also the proof of this result is divided into two steps.
    First, we relate the input-output models of \eqref{eq:plant} and \eqref{eq:non-minimal_plant} using Lemma \ref{lem:Pi} to obtain a rank condition for the polynomial matrices of \eqref{eq:non-minimal_plant}.
    Then, we show that this rank condition implies controllability of $(F + LH, G)$ using the PBH test.
\end{remark}
In view of the tuning \eqref{eq:FGL_nu}, the systems employed in \eqref{eq:filter} and \eqref{eq:filter_out_reg} have the following structure:
\begin{equation}
    \dot{\hat{\zeta}} = \begin{bmatrix}
        \Lambda & & \\
        &\ddots & \\
        & & \Lambda
    \end{bmatrix}\hat{\zeta} + \begin{bmatrix}
        \ell & & \\
        &\ddots & \\
        & & \ell
    \end{bmatrix}\begin{bmatrix}
        y \\ u
    \end{bmatrix},
\end{equation}
which consists of $p + m$ parallel asymptotically stable filters of dimension $\nu$, one for each entry of $\col(y, u)$.
It is notable that this form is conceptually the same of the internal model \eqref{eq:IM}, which can be written more explicitly as:
\begin{equation}
    \dot{\eta} = \begin{bmatrix}
        S_0 & & \\
        &\ddots & \\
        & & S_0
    \end{bmatrix}\eta + \begin{bmatrix}
        \Gamma_0 & & \\
        &\ddots & \\
        & & \Gamma_0
    \end{bmatrix}e,
\end{equation}
and consists of $q$ parallel neutrally stable filters of dimension $d$, one for each entry of the regulated output $e$.

We also note that the minimal polynomial of $F$ coincides with the characteristic polynomial of $\Lambda$, hence we can replace \eqref{eq:aux} in Algorithm \ref{alg:stabilization} with
\begin{equation}\label{eq:aux_Lambda}
    \dot{\chi}(t) = \Lambda \chi(t), \qquad \chi(0) = \ell.
\end{equation}
Similarly, \eqref{eq:aux_out_reg} in Algorithm \ref{alg:out_reg} can be replaced with
\begin{equation}
    \dot{\chi}(t) = \begin{bmatrix}
        S_0 & 0_{d \times \nu}\\
        0_{\nu \times d} & \Lambda
    \end{bmatrix}\chi(t), \qquad \chi(0) = \begin{bmatrix}
        \Gamma_0\\
        \ell
    \end{bmatrix}.
\end{equation}

\subsection{Observability Index Estimation}\label{sec:index}

Under Assumption \ref{hyp:nu}, we have seen that a strong canonical non-minimal realization can be constructed if $\nu$ is known.
However, $\nu$ is usually not available from prior information.
Therefore, we now propose an approach to infer it from the data.
The procedure is summarized in Algorithm \ref{alg:index} and described below for the data-driven stabilization problem of Section \ref{sec:problem1}.
A similar procedure can be derived for data-driven output regulation but is omitted for brevity.

\begin{algorithm}[b!]
    \caption{Observability Index Estimation (Stabilization)}\label{alg:index}
    \begin{algorithmic}
    \State \hspace{-0.47cm} \textbf{Initialization} 
    \State \emph{Dataset:}
    \begin{equation}\label{eq:dataset_nu_hat}
        (u(t), y(t)), \qquad \forall t \in [0, \tau].
    \end{equation}
    \State \emph{Tuning:} scalars $0 < \lambda_1 < \ldots < \lambda_{\nu_{\max}}$ and $\gamma_1 \neq 0, \ldots, \gamma_{\nu_{\max}} \neq 0$, with $\nu_{\max} \in \N$, $\nu_{\max} > \nu$; number of samples $N \in \N$, $N \geq 1$, $\Ts \coloneqq \tau/N$; initial guess $\hat{\nu} = 1$.
    \State \hspace{-0.47cm} \textbf{Index Estimation}
    \While{\emph{index not found} {\bf and} $\hat{\nu} < \nu_{\max}$}
        \State $\hat{\nu} \gets \hat{\nu} + 1$
        \State \emph{Matrix gains:}
            \begin{equation}\label{eq:Lambda_ell_hat}
                \quad\;\;\; \hat{\Lambda} \coloneqq -\diag(\lambda_1, \ldots, \lambda_{\hat{\nu}}),\quad \hat{\ell} \coloneqq \col(\gamma_1, \ldots, \gamma_{\hat{\nu}}).
            \end{equation}
        \State \emph{Simulation for $t \in [0, \tau]$:}
            \begin{equation}\label{eq:filter_nu_hat}
                \begin{split}
                    \dot{\chi}(t) &= \hat{\Lambda}\chi(t)\\
                    \dot{\hat{\zeta}}(t) &= (I_{p+m} \otimes \hat{\Lambda})\hat{\zeta}(t) + (I_{p + m} \otimes \hat{\ell})\begin{bmatrix}
                    y(t) \\ u(t)
                \end{bmatrix}.
                \end{split}
            \end{equation}
        \State Initialization: $\hat{\chi}(0) = \hat{\ell}$, $\hat{\zeta}(0) = 0_{\hat{\nu}(p + m) \times 1}$.
        \State \emph{Sampled data batch:}
        \begin{equation}\label{eq:batches_nu_hat}
            \mathcal{B} \coloneqq \begin{bmatrix}
            \chi(0) & \chi(\Ts) & \!\cdots\! & \chi((N-1)\Ts)\\
            \hat{\zeta}(0) & \hat{\zeta}(\Ts) & \!\cdots\! & \hat{\zeta}((N-1)\Ts)\\
            \end{bmatrix}.
        \end{equation}
        \If{$\rank \mathcal{B} < \hat{\nu}(p + m + 1)$}
            \State \emph{index found} \Comment{Assumes $\mathcal{B}$ full rank if $\hat{\nu}= \nu$. \hspace{3pt} }
        \EndIf
    \EndWhile
    \State $\hat{\nu} \gets \hat{\nu} - 1$ \Comment{Observability index estimate. \hspace{5pt} }
    \end{algorithmic}
\end{algorithm}

Suppose that a dataset of the form \eqref{eq:dataset} (reported also in \eqref{eq:dataset_nu_hat} for convenience) is available.
Then, given some estimate $\hat{\nu}$ of $\nu$, consider the matrices $\hat{\Lambda} \in \R^{\hat{\nu}\times\hat{\nu}}$ and $\hat{\ell} \in \R^{\hat{\nu}}$ as in \eqref{eq:Lambda_ell_hat}, and use them to replace $\Lambda$ and $\ell$ of \eqref{eq:FGL_nu}.
By Theorem \ref{thm:tuning_nu}, $\hat{\nu} = \nu$ implies the existence of a non-minimal realization.
It turns out that the same holds for any $\hat{\nu} \geq \nu$ in view of the next lemma, whose proof is obtained via direct verification.
\begin{lemma}\label{lem:redundancy}
    Let $\Pi$ and $H$ solve equation \eqref{eq:Pi}, for some $A$, $B$, $C$, and $F$, $G$, $L$.
    Suppose that $F$, $G$, and $L$ are replaced by
    \begin{equation}
        \hat{F} \coloneqq T\bdiag(F, \tilde{F})T^{-1}, \quad \hat{G} \coloneqq T\begin{bmatrix}
            G \\ \tilde{G}
        \end{bmatrix}, \quad \hat{L} \coloneqq T\begin{bmatrix}
            L \\ \tilde{L}
        \end{bmatrix},
    \end{equation}
    with $\tilde{F}$, $\tilde{G}$, $\tilde{L}$ arbitrary matrices and $T$ an invertible matrix.
    Then, equation \eqref{eq:Pi} is solved by $\hat{\Pi} \coloneqq [\Pi\;\; 0_{n \times \tilde{\mu}}]T^{-1}$ and $\hat{H} \coloneqq C\hat{\Pi} = [H\;\; 0_{p \times \tilde{\mu}}]T^{-1}$, where $\tilde{\mu}$ is the dimension of $\tilde{F}$.
\end{lemma}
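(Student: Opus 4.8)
The plan is to verify the three identities in \eqref{eq:Pi} directly for the candidate pair $(\hat\Pi, \hat H)$, exploiting two structural facts: the outer factor $T^{-1}$ in $\hat\Pi$ cancels the factor $T$ appearing in $\hat F$, $\hat G$, $\hat L$, while the zero block $0_{n\times\tilde\mu}$ padding $\hat\Pi$ on the right annihilates every contribution coming from the arbitrary blocks $\tilde F$, $\tilde G$, $\tilde L$. Thus the original solution $(\Pi, H)$ is expected to survive the lifting untouched.

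First I would dispose of the two easy equations. The output relation holds by the very definition $\hat H \coloneqq C\hat\Pi$, and its claimed closed form follows from $C\hat\Pi = C[\Pi\;\; 0_{n\times\tilde\mu}]T^{-1} = [C\Pi\;\; 0_{p\times\tilde\mu}]T^{-1} = [H\;\; 0_{p\times\tilde\mu}]T^{-1}$, using $H = C\Pi$. For the input relation I would compute $\hat\Pi\hat G = [\Pi\;\; 0_{n\times\tilde\mu}]T^{-1}T\,\col(G,\tilde G) = [\Pi\;\; 0_{n\times\tilde\mu}]\col(G,\tilde G) = \Pi G = B$, where $T^{-1}T = I$ collapses and the zero block discards the arbitrary $\tilde G$.

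The only step requiring a short computation is the Sylvester-type equation $\hat\Pi(\hat F + \hat L\hat H) = A\hat\Pi$. I would first assemble $\hat F + \hat L\hat H$ in block form: since $\hat L\hat H = T\,\col(L,\tilde L)[H\;\; 0_{p\times\tilde\mu}]T^{-1}$ and $\hat F = T\bdiag(F,\tilde F)T^{-1}$, one obtains
\begin{equation}
    \hat F + \hat L\hat H = T\begin{bmatrix} F + LH & 0 \\ \tilde L H & \tilde F\end{bmatrix}T^{-1},
\end{equation}
i.e., an inner matrix that is lower block-triangular with top-left block $F+LH$ and bottom-right block $\tilde F$. Pre-multiplying by $\hat\Pi$, the factor $T^{-1}T$ again collapses and the zero block selects only the top block-row, leaving $[\Pi(F+LH)\;\; 0_{n\times\tilde\mu}]T^{-1}$; invoking $\Pi(F+LH) = A\Pi$ then yields $[A\Pi\;\; 0_{n\times\tilde\mu}]T^{-1} = A\hat\Pi$, as required.

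In summary, there is no genuine analytical obstacle: the result is a matter of careful block bookkeeping. The point worth emphasizing is structural rather than computational—the zero padding in $\hat\Pi$, equivalently the block-triangular structure engineered by the similarity $T$, guarantees that the freely chosen $\tilde F$, $\tilde G$, $\tilde L$ never couple into the top block, so that any redundant enlargement of the filter order preserves a valid solution of \eqref{eq:Pi}.
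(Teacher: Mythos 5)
Your proof is correct and follows exactly the route the paper intends: the paper omits the proof, stating only that it is ``obtained via direct verification,'' and your block-wise verification of the three identities in \eqref{eq:Pi}—including the key computation $\hat F + \hat L\hat H = T\begin{bmatrix} F + LH & 0 \\ \tilde L H & \tilde F\end{bmatrix}T^{-1}$—is precisely that verification, carried out without error.
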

For a given $\hat{\nu}$, Algorithm \ref{alg:index} involves simulating \eqref{eq:filter_nu_hat} and collecting a batch of sampled data of the form \eqref{eq:batches_nu_hat}.

Suppose that, for $\hat{\nu} = \nu$, $\mathcal{B}$ has full rank (which is necessary for \eqref{eq:E}).
Then, the same is true for $\hat{\nu} \in \{1, \ldots, \nu\}$.
On the other hand, the rank is lost for $\hat{\nu} \geq \nu + 1$, as stated in the next result.
The proof is quite long and thus given in Appendix \ref{sec:proofs}.
\begin{theorem}\label{thm:nu_hat}
    Consider Algorithm \ref{alg:index} and let Assumptions \ref{hyp:ctrb_obs} and \ref{hyp:nu} hold.
    Then, for all $\hat{\nu} \geq \nu + 1$, the data in \eqref{eq:batches_nu_hat} satisfy:
    \begin{equation}
        \rank \mathcal{B} < \hat{\nu}(p + m + 1).
    \end{equation}
\end{theorem}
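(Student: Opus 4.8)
The plan is to exhibit, for every $\hat\nu \ge \nu+1$, a single nonzero linear combination of the rows of $\mathcal{B}$ that vanishes identically in $t$ on $[0,\tau]$; since such a combination in particular annihilates the sampled columns in \eqref{eq:batches_nu_hat}, it certifies $\rank\mathcal{B} < \hat\nu(p+m+1)$ regardless of $N$ or the sampling instants. First I would pass to the Laplace domain. Because $\hat\Lambda = -\diag(\lambda_1,\ldots,\lambda_{\hat\nu})$ is diagonal with distinct entries and $\hat\ell = \col(\gamma_1,\ldots,\gamma_{\hat\nu})$, the auxiliary state has components $\chi_i(t) = \gamma_i e^{-\lambda_i t}$, i.e. $\chi_i \leftrightarrow \gamma_i/(s+\lambda_i)$, while the Kronecker structure $I_{p+m}\otimes\hat\Lambda$, $I_{p+m}\otimes\hat\ell$ makes \eqref{eq:filter_nu_hat} a bank of decoupled scalar filters: the $i$-th component of the $j$-th block of $\hat\zeta$ is $r_j$ passed through $\gamma_i/(s+\lambda_i)$, where $r_j$ is the $j$-th entry of $\col(y,u)$. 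Consequently, a row combination of $\mathcal{B}$ corresponds to a choice of real polynomials $R(s)$ and $P_1(s),\ldots,P_{p+m}(s)$, each of degree at most $\hat\nu-1$ (via partial fractions over the distinct poles $-\lambda_i$), and the combined signal has Laplace transform $\frac{1}{\prod_i(s+\lambda_i)}\big[R(s) + \mathbf P_y(s)Y(s) + \mathbf P_u(s)U(s)\big]$, where $\mathbf P_y = [P_1\cdots P_p]$ and $\mathbf P_u = [P_{p+1}\cdots P_{p+m}]$. The combination is the zero signal iff the bracket is the zero rational function.

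Next I would eliminate $Y$ using the plant. Writing $W(s) \coloneqq C(sI_n - A)^{-1}B$ and $Y(s) = C(sI_n-A)^{-1}x(0) + W(s)U(s)$, the bracket splits into a part multiplying $U$ and a part depending only on $x(0)$, so it vanishes provided (A) $\mathbf P_y(s)W(s) + \mathbf P_u(s) = 0$ and (B) $R(s) + \mathbf P_y(s)C(sI_n-A)^{-1}x(0) = 0$. The heart of the argument is then the uniform-index structure of Assumption \ref{hyp:nu}: it yields a left-coprime, row-reduced matrix fraction description in which all row degrees equal $\nu$, namely a nonsingular $D_L(s)$ with $\deg_{\mathrm{row}} D_L = \nu$ together with polynomials $N_L(s) = D_L(s)W(s)$ and $\tilde N(s) = D_L(s)C(sI_n-A)^{-1}$ satisfying $\deg N_L,\deg\tilde N \le \nu-1$ (this is exactly the multivariable observer canonical form of Appendix \ref{sec:MIMO_obs_form}). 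Choosing any nonzero constant row $\mathbf a \in \R^{1\times p}$ and setting $\mathbf P_y \coloneqq \mathbf a D_L$, $\mathbf P_u \coloneqq -\mathbf a N_L$, and $R \coloneqq -\mathbf a\tilde N\, x(0)$ solves (A) and (B) identically, and a direct degree count gives $\deg \mathbf P_y = \nu$ and $\deg\mathbf P_u,\deg R \le \nu-1$. Since $\hat\nu \ge \nu+1$ means $\hat\nu - 1 \ge \nu$, all three polynomials respect the degree budget, and $\mathbf P_y = \mathbf a D_L \neq 0$ (as $D_L$ is nonsingular), so the induced row-combination vector of $\mathcal{B}$ is nonzero. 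This produces the desired identically-vanishing combination and hence the rank drop.

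The main obstacle I anticipate is the degree and coprimeness bookkeeping that makes $\hat\nu\ge\nu+1$ sharp. Concretely, one must show that any polynomial row $\mathbf P_y$ with $\mathbf P_y W$ polynomial is necessarily a left multiple $\mathbf a D_L$ (via the Bézout identity of the coprime MFD) and that row-reducedness forbids degree cancellation, so the minimal degree of a nonzero admissible $\mathbf P_y$ is exactly $\nu$. This is what excludes a nontrivial combination at $\hat\nu=\nu$ (consistent with the full-rank premise used by Algorithm \ref{alg:index}) while guaranteeing one at $\hat\nu = \nu+1$. The remaining steps are routine: verifying that the Laplace identities translate back to an exact identity on $[0,\tau]$ (immediate, since a zero transform gives a zero time signal for all $t \ge 0$), and tracking the initial-condition term through the MFD so that (B) is genuinely solved by a polynomial $R$ of degree $\le\nu-1$. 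I would lean on Appendix \ref{sec:MIMO_obs_form} to supply $D_L$, $N_L$, $\tilde N$ and their degree properties rather than re-deriving them.
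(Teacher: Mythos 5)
Your proposal is correct, and it proves the theorem by a genuinely different route than the paper. The paper's proof is entirely state-space: it splits the filters into a block of size $\nu$ and an extra block of size $\hat{\nu}-\nu$, uses Theorem \ref{thm:tuning_nu} together with the redundancy property of Lemma \ref{lem:redundancy} to extend $(\Pi,H)$ to the enlarged realization, shows by a PBH test on \eqref{eq:PBH_nu_hat} that the extra eigenvalues $\sigma(\tilde{\Lambda})$ are uncontrollable, and observes that—since $A-\Pi LC$ is similar to $I_p\otimes\Lambda$, so $\epsilon(t)$ excites no $\tilde{\Lambda}$ modes—each mode $e^{-\lambda_i t}$ with $i>\nu$ appears in two decoupled autonomous blocks of \eqref{eq:chi_zeta_nu_hat}; hence, after a change of coordinates, two rows of the data are proportional to $[1\;\, e^{-\lambda_i\Ts}\; \cdots\; e^{-\lambda_i(N-1)\Ts}]$ and the rank drops. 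You instead work in the frequency domain: you identify left-null vectors of $\mathcal{B}$ with polynomial tuples $(R,\mathbf{P}_y,\mathbf{P}_u)$ of degree at most $\hat{\nu}-1$ over the common denominator $\prod_i(s+\lambda_i)$ (a bijection because the $\lambda_i$ are distinct and the $\gamma_i$ nonzero), and you build an explicit annihilator from the observer-form left matrix fraction description $C(sI_n-A)^{-1}=D_L^{-1}(s)\tilde{N}(s)$, $W=D_L^{-1}N_L$, whose row degrees all equal $\nu$ under Assumption \ref{hyp:nu}; the hypothesis $\hat{\nu}\geq\nu+1$ enters precisely as the degree budget $\deg(\mathbf{a}D_L)=\nu\leq\hat{\nu}-1$. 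Both arguments are sound; the MFD facts you lean on are exactly the dual structure theorem \cite[Ch. 3, Thm. 4.11]{antsaklis1997linear} that the paper itself invokes in proving Theorem \ref{thm:ctrb_nu}, so this reliance is legitimate (the only extra care is Laplace transformability of the data, harmless by causality after zero-extending $u$). What your route buys: an explicit nonzero left-null vector valid for any $N$ and any sampling instants, a stronger conclusion ($p$ independent annihilators, one per row of $I_p$, so in fact $\rank\mathcal{B}\leq\hat{\nu}(p+m+1)-p$), and a transparent explanation of the threshold $\nu$ via row degrees; what the paper's route buys is that it reuses machinery already established (Lemmas \ref{lem:chi} and \ref{lem:redundancy}, the PBH test) and stays uniformly within the state-space framework of the rest of the article. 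One clarification on your anticipated ``main obstacle'': the sharpness at $\hat{\nu}=\nu$ (coprimeness and row-reducedness ruling out a nontrivial combination) is not needed for this theorem—full rank of $\mathcal{B}$ for $\hat{\nu}\leq\nu$ is an informativity assumption of Algorithm \ref{alg:index}, not a structural fact one could prove, since it also depends on the richness of the input $u$.
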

We conclude that the observability index can be computed by increasing $\hat{\nu}$ from $1$ until the rank of $\mathcal{B}$ is lost.
Note that the upper bound $\nu_{\max}$ is only used as stopping criterion in case the search fails.
Also note that, in each iteration, it is sufficient to simulate only the new filters $\dot{\chi}_{\hat{\nu}}(t) = -\lambda_{\hat{\nu}}\chi_{\hat{\nu}}(t)$ and $\dot{\hat{\zeta}}_{\hat{\nu}}(t) = -\lambda_{\hat{\nu}}\hat{\zeta}_{\hat{\nu}}(t) + \gamma_{\hat{\nu}}\col(y(t), u(t))$.
\begin{remark}
    A similar algorithm can be obtained in the data-driven output regulation setting.
    In particular, in \eqref{eq:filter_nu_hat}, the dynamics of $\chi$ are augmented as in \eqref{eq:aug_exo} and the internal model \eqref{eq:IM_out_reg} is included.
    A result corresponding to Theorem \ref{thm:nu_hat} can then be obtained by following, mutatis mutandis, the same proof.
\end{remark}

\section{Numerical Examples}\label{sec:simulations}
To illustrate the approaches of the previous sections, we present some numerical examples developed in MATLAB.
In particular, Algorithms \ref{alg:stabilization} and \ref{alg:out_reg} have been implemented with YALMIP \cite{Lofberg2004} and MOSEK \cite{mosek} to solve the LMIs.
The code is available at the linked repository\footnote{\url{https://github.com/IMPACT4Mech/continuous-time_data-driven_control}}.

\subsection{Unstable Batch Reactor Control}
We address the data-driven stabilization problem for the continuous-time linearized model of a batch reactor given in \cite{walsh2001scheduling}.
The matrices corresponding to system \eqref{eq:plant} are
\begin{equation}
    \begin{split}
        A &= \begin{bmatrix*}[r]
            1.38&\; -0.2077&\; 6.715&\; -5.676\\
            -0.5814&\; -4.29&\; 0&\; 0.675\\
            1.067&\; 4.273&\; -6.654&\; 5.893\\
            0.048&\; 4.273&\; 1.343&\; -2.104
        \end{bmatrix*}\\
        B &= \begin{bmatrix*}[r]
            0& 0\\
            5.679& 0\\
            1.136& -3.146\\
            1.136& 0
        \end{bmatrix*}\!,\quad\;\, C=\begin{bmatrix*}[r]
            1 & 0 & 1 & -1\\
            0 & 1 & 0 & 0
        \end{bmatrix*}\!.
    \end{split}
\end{equation}
It can be verified that $A$, $B$, and $C$ satisfy Assumptions \ref{hyp:ctrb_obs} and \ref{hyp:nu}, with observability index $\nu = 2$.

First, we acquire a dataset of the form $(u(t), y(t))$ by simulating the system for $\tau = 2$ s.
The applied input $u$ is the sum of $4$ sinusoids in both entries, with $8$ distinct frequencies.
The initial condition $x(0)$ is chosen randomly, with each entry extracted from the uniform distribution $U(-1, 1)$.
In the following, we describe a procedure that has been extensively tested with different initial conditions $x(0)$.

Given a dataset of the form \eqref{eq:dataset}, we run Algorithm \ref{alg:index} with $N = 50$ (so that $\Ts = 40$ ms) and, for each $j \in \N$ with $j \geq 1$, $\lambda_j = \gamma_j = j$.
For each tested dataset, we obtain $\hat{\nu} = \nu = 2$.

Next, we design a controller of the form \eqref{eq:controller} by running Algorithm \ref{alg:stabilization} with $N = 50$ and $F$, $G$, and $L$ as in \eqref{eq:FGL_nu} with $\Lambda = \diag(-4, -8)$ and $\ell = \col(1, 2)$.
Also, we use \eqref{eq:aux_Lambda} in place of \eqref{eq:aux}.
For each randomly generated dataset, the procedure succeeds in stabilizing the system.
For instance, with $x(0) = [-0.149\;\; 0.2225\;\; 0.7115\;\; 0.3416]^\top$, we obtain the gain
\begin{equation}
    K = \begin{bsmallmatrix*}[r]
        25.317&\;   -5.217&\;   12.549&\;   19.378&\;  -90.498&\;   49.304&\;    6.599&\;  -18.314\\
        14.095&\;    0.289&\;   15.612&\;   16.678&\;  -81.084&\;   56.432&\;   -3.329&\;    3.954
    \end{bsmallmatrix*},
\end{equation}
which places the eigenvalues of the closed-loop interconnection of \eqref{eq:plant} and \eqref{eq:controller} in $\{-0.901, -1.546 \pm 2.833\mathrm{i}, -2.106 \pm 32.492\mathrm{i}, -2.164, -4, -4, -4.261, -8.349, -8, -8\}$.

Using the same dataset, we also develop a controller with integral action.
In particular, this design corresponds to solving the output regulation problem with $S = 0$, $p = q = 2$, and the previously acquired output trajectory $y(t)$ treated as regulated output $e(t)$ (i.e., $w(t) = 0$).
It can be verified that, for the given plant and exosystem matrices, Assumption \ref{hyp:non-resonance} holds.
Therefore, we run Algorithm \ref{alg:out_reg} with $N$, $\Lambda$, and $\ell$ as in the previous case, while for the internal model we let $S_0 = 0$, $\Gamma_0 = 5$, obtaining $\Phi = 0_{2 \times 2}$ and $\Gamma = 5I_2$.
For each dataset, the procedure solves the output regulation problem.
With the initial condition $x(0) = [-0.149\;\; 0.2225\;\; 0.7115\;\; 0.3416]^\top$, we obtain the gains
\begin{equation}
    \begin{split}
        K_{\zeta}\! &= \! \begin{bsmallmatrix*}[r]
            \!135.73&\;  -28.239&\;   37.946&\;   89.622&\; -338.361&\;  169.546&\;   10.634&  -47.635\!\\
            \!12.709&\;    8.242&\;   18.999&\;   29.492&\; -116.075&\;  111.063&\;   -3.653&\;    3.838\!
        \end{bsmallmatrix*}\\
        K_{\eta}\!&=\!\begin{bsmallmatrix*}[r]
            8.09&\;   -4.716\\
            2.241&\;    7.497
        \end{bsmallmatrix*},
    \end{split}
\end{equation}
which place the eigenvalues of the interconnection of the plant and \eqref{eq:controller_out_reg} in $\{-1.199, -2.238 \pm 4.018\mathrm{i}, -2.434, -2.487 \pm 1.838\mathrm{i}, -2.782 \pm 96.497\mathrm{i}, -4, -4, -4.701, -7.567, -8, -8\}$.
We finally show the regulation performance of the controller in the closed-loop simulation run of Fig. \ref{fig:batch_reactor}.
In particular, we feed the controller with $e \coloneqq y_p - y^\star$, where $y_p \coloneqq Cx = \col(y_{p1}, y_{p2})$ is the plant output and $y^\star \coloneqq \col(y_1^\star, y_2^\star)$ is a piecewise-constant reference.
\begin{figure}[t!]
    \centering
    \includegraphics[width = \columnwidth]{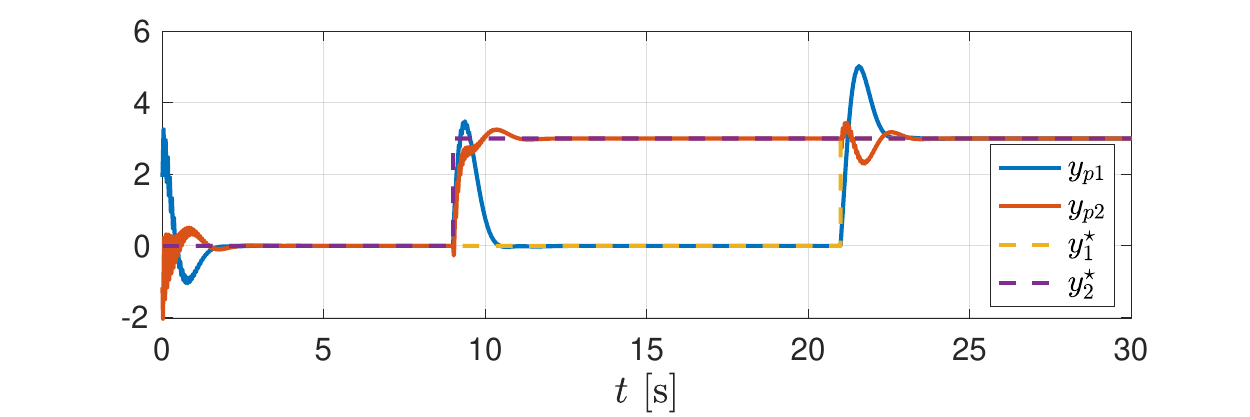}

    \vspace{-5pt}
    
    \caption{Simulation run for the batch reactor (control with integral action).}\label{fig:batch_reactor}

    \vspace{-10pt}
    
\end{figure}

\subsection{Motion Control of a Surface Vessel}
As a second example, we address the output regulation problem for the surface vessel dynamics given in \cite[scenario B]{pyrkin2019adaptive}.
The matrices of system \eqref{eq:plant_w} are
\begin{equation}
    \begin{split}
        A\! &=\! \begin{bmatrix*}[r]
            -0.1 &\; 0.012 &\; 0.015 &\;\;\;\; 0 &\;\;\; 0 &\; 0.01\\
            0.01 &\; -0.0333 &\; -0.05 &\;\;\; \;0 &\;\;\; 0 &\; -0.014\\
            0.02 &\; 0.03 & -0.18 &\;\;\;\; 0 &\;\;\; 0 &\; 0\\
            1 &\; 0 &\; 0 &\;\;\;\; 0 &\;\;\; 0 &\; 0\\
            0 &\; 1 &\; 0 &\;\;\;\; 0 &\;\;\; 0 &\; 0\\
            0 &\; 0 &\; 1 &\;\;\;\; 0 &\;\;\; 0 &\; 0
        \end{bmatrix*}\\
        B\! &=\! \begin{bmatrix*}[r]
            0 & 0.03 & 0.025\\
            0 & 0.21 & -0.2\\
            0.1 & 0.03 & 0.02\\
            0 & 0 & 0\\
            0 & 0 & 0\\
            0 & 0 & 0
        \end{bmatrix*}\!,\; P\! =\! \begin{bmatrix*}[r]
            -0.001 & 0 & 0.002\\
            0.02 & 0.01 & -0.02\\
            0 & 0 & 0\\
            0 & 0 & 0\\
            0.1 & 0 & 0\\
            0.1 & 0.1 & -0.1
        \end{bmatrix*}\\   
        C\! &=\!\begin{bmatrix*}[r]
            0 & 0 & 0 & 1 & 0 & 0\\
            0 & 0 & 0 & 0 & 1 & 0\\
            0 & 0 & 0 & 0 & 0 & 1
        \end{bmatrix*}\!, \qquad S = \begin{bmatrix}
            0 & 1 & 0\\
            0 & 0 & 1\\
            0 & -(\pi/5)^2 & 0
        \end{bmatrix}\!,
    \end{split}
\end{equation}
where $S$ generates a bias and a sinusoidal term at $\pi/5$ rad/s.
We also suppose that a bias affects the first output, namely:
\begin{equation}
    Q = 2\begin{bmatrix}
            1 & 0 & (5/\pi)^2\\
            0 & 0 & 0\\
            0 & 0 & 0
        \end{bmatrix}\!,
\end{equation}
and we let $e \in \R^2$ be the first two components of the output.
Note that Assumptions \ref{hyp:ctrb_obs}, \ref{hyp:non-resonance}, and \ref{hyp:nu} hold, with $\nu = 2$.

The dataset $(u(t), e(t), y_r(t))$ is obtained by simulating the plant for $\tau = 35$ s.
Each input channel is composed of $4$ distinct sinusoids.
We let $w(0) = [1\; 1\; 1]^\top$, while each component of $x(0)$ is extracted from the uniform distribution $U(-1, 1)$.

\begin{figure}[t!]
    \centering
    \includegraphics[width = \columnwidth]{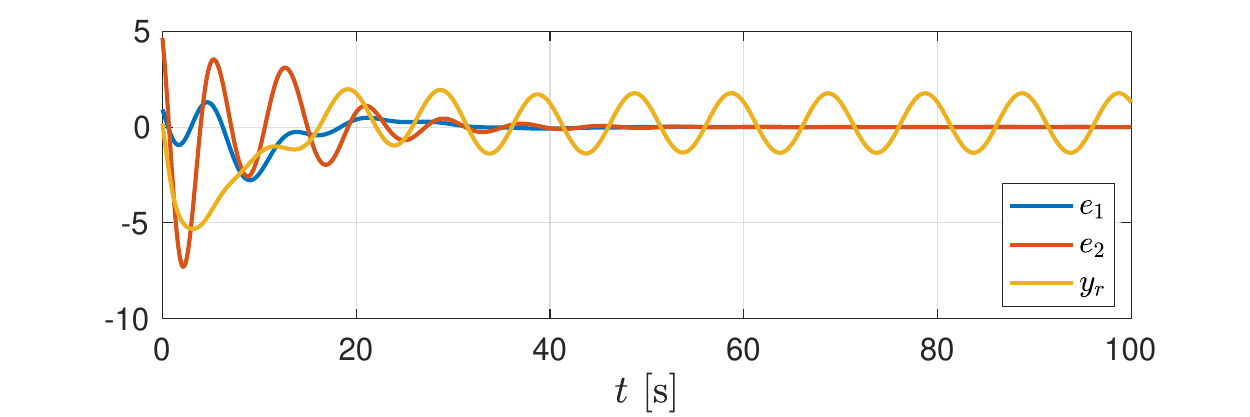}

    \vspace{-5pt}
    
    \caption{Simulation run for the surface vessel example.}\label{fig:surface_vessel}

    \vspace{-10pt}
    
\end{figure}

In this example, we assume that $\nu$ is already available.
Accordingly, we design a regulator of the form \eqref{eq:controller_out_reg} using Algorithm \ref{alg:out_reg}, with $N = 80$ samples and filter matrices $F$, $G$, and $L$ as in \eqref{eq:FGL_nu} with
\begin{equation}
    \Lambda = \begin{bmatrix*}[r]
        0 & 1\\
        -2 & -2
    \end{bmatrix*}, \qquad \ell = \begin{bmatrix}
        0 \\ 1/2
    \end{bmatrix},
\end{equation}
so that $\sigma(\Lambda) = \{-1 \pm \mathrm{i}\}$.
Also, the internal model is chosen with $\Phi$ and $\Gamma$ as in \eqref{eq:IM}, with $S_0 = S$ and $\Gamma_0 = [0\;\; 0\;\; 1/10]^\top$.
For each random initial condition $x(0)$, the resulting controller solves the output regulation.
For the dataset generated with $x(0) = [0.7297\;\; -0.7195\;\; 0.3143\;\; 0.186\;\; 0.0267\;\; -0.5108]^\top$, we obtain the following closed-loop system eigenvalues: $\{-0.11 \pm 0.2\mathrm{i}, -0.11 \pm 0.76\mathrm{i}, -0.15 \pm 0.79\mathrm{i}, -0.22 \pm 0.21\mathrm{i}, -0.27 \pm 0.53\mathrm{i}, -0.3, -0.31 \pm 0.16\mathrm{i}, -0.48 \pm 1.53\mathrm{i}, -0.86, -1 \pm 1\mathrm{i}, -1 \pm 1\mathrm{i}, -1 \pm 1\mathrm{i}, -1.56 \pm 316.19\mathrm{i}\}$.
We omit the control gains $K_{\zeta}$ and $K_{\eta}$, which can be found at the linked repository.
Also, the regulated output converges to zero as shown in Fig. \ref{fig:surface_vessel}, obtained by setting random initial conditions for the plant and the controller, and letting $w(0) = [1\;\; -3\;\; 0]^\top$.

\section{Conclusion}\label{sec:conclusion}
We presented a data-driven control framework for continuous-time LTI systems based on canonical non-minimal realizations.
The approach was applied to the stabilization and output regulation of MIMO systems.
We also proposed tuning strategies for the realizations under Assumption \ref{hyp:nu}, requiring a uniform observability index across all outputs.
Several research directions arise from this work.
First, future efforts will address robustness to noise, building on techniques such as \cite{van2020noisy, bisoffi2022data}.
While we make here the important assumption of noise-free data, we observe that this is a standard simplifying assumption in the emerging research field of continuous-time data-driven control \cite{lopez2024input, hu2025data, rapisarda2023orthogonal, ohta2024sampling}.
Another direction is to extend the framework to general MIMO systems by relaxing Assumption \ref{hyp:nu}.
In this regard, the methods in \cite{li2024controller, alsalti2025notes} suggest promising developments for the non-minimal realization framework.
Finally, it will be valuable to explore extensions of the framework to nonlinear systems.

\appendices

\section{Multivariable Observer Canonical Form}\label{sec:MIMO_obs_form}

Consider system \eqref{eq:plant} and suppose that $(C, A)$ is observable and $\rank C = p$.
Let $\nu_1, \ldots, \nu_p$ be the observability indices of the system and define $\nu_{ij} \coloneqq \min\{\nu_i, \nu_j\}$, for $i \neq j$.
Then, there exists a similarity transformation $T_o$ that puts \eqref{eq:plant} in the observer canonical form \cite[Ch. 3]{antsaklis1997linear}, \cite{denham1974canonical}:
\begin{equation}\label{eq:MIMO_obs_form}
    \begin{split}
        T_o A T_o^{-1} = A_o &\coloneqq \begin{bmatrix}
            A_{11} & \cdots & A_{1p}\\
            \vdots & & \vdots\\
            A_{p1} & \cdots & A_{pp}
        \end{bmatrix}, \quad T_o B = B_o\\
        C T_o^{-1} = C_o &\coloneqq \begin{bmatrix}
            C_1 & \cdots & C_p
        \end{bmatrix},
    \end{split}
\end{equation}
where $B_o$ has no particular form, while
\begin{equation}\label{eq:MIMO_obs_AC}
    \begin{split}
        A_{ii} \coloneqq& \left[\begin{array}{c|c}
            \; \begin{array}{c}
                0_{1\times(\nu_i - 1)} \\
                \hline
                I_{\nu_i - 1}
            \end{array}\;\,
                & \; \alpha_{ii} \;
            \end{array}\right] \! \in \! \R^{\nu_i \times \nu_i}\!, \quad i \in \{1, \ldots, p\}\\
            A_{ij} \coloneqq& \left[\begin{array}{c|c}
        \!\!\!0_{\nu_i \times (\nu_j - 1)}\!\!\! & \!\!\!\! \begin{array}{c}
                \alpha_{ij}\\
                0_{(\nu_i - \nu_{ij})\times 1}
            \end{array}\!\!\!\!\!
        \end{array}
    \right]\!\! \in \! \R^{\nu_i \times \nu_j}\!, \;\; i, j \! \in \! \{1, \ldots, p\}, i \neq j\\
    C_i \coloneqq&\left[\begin{array}{c}
         0_{(i-1)\times \nu_i}  \\
         \hline
         \begin{array}{c|c}
             0_{(p - i + 1)\times (\nu_i - 1)} & \begin{array}{c}
                  1\\
                  c_i 
             \end{array}\!\!\!
         \end{array}
    \end{array}
    \right]\! \in \! \R^{p \times \nu_i}\!, \; i \in \{1, \ldots, p\},
    \end{split}
\end{equation}
with $\alpha_{ii} \in \R^{\nu_i}$, $\alpha_{ij} \in \R^{\nu_{ij}}$ and $c_i \in \R^{p-i}$ depending on $(C, A)$.

Matrices $A_o$ and $C_o$ can also be written as
\begin{equation}\label{eq:AC_decomposition}
    A_o = \bar{A} + A_m\bar{C}, \qquad C_o = C_m\bar{C},
\end{equation}
where $A_m \in \R^{n \times p}$ and $C_m \in \R^{p \times p}$ contain the coefficients of $\alpha_{ii}$, $\alpha_{ij}$, and $c_i$, while $\bar{A}$ and $\bar{C}$ are defined as
\begin{equation}\label{eq:bar_AC}
    \begin{split}
        \bar{A} &\coloneqq \bdiag(\bar{A}_i, \ldots, \bar{A}_p), \qquad \bar{C} \coloneqq \bdiag(\bar{c}_i^\top, \ldots, \bar{c}_p^\top)\\
        \bar{A}_i &\coloneqq \begin{bmatrix}
            0_{1 \times (\nu_i-1)} & 0 \\
            I_{\nu_i-1} & 0_{(\nu_i-1)\times 1}
        \end{bmatrix}\!, \quad\; \bar{c}_i^\top \coloneqq \begin{bmatrix}
            0_{1\times (\nu_i - 1)} \!& 1
        \end{bmatrix}\!.
    \end{split}
\end{equation}
Finally, if the observability indices satisfy $\nu_1 \leq \nu_2 \leq \ldots \leq \nu_p$, it holds that $C_m = I_p$, thus $C_o = \bar{C}$ \cite[Ch. 3, Lemma 4.9]{antsaklis1997linear}.

\section{A Useful Linear Matrix Equation}\label{sec:X}
Some technical proofs of this article rely on this result.
\begin{lemma}\label{lem:X}
    Consider any matrix $\Theta \in \R^{r \times r}$ and any vectors $\beta \in \R^{r}$ and $\phi \in \R^{r}$ such that the pair $(\Theta, \beta)$ is controllable.
    Then, there exists a unique solution $X \in \R^{r \times r}$ to the equation:
    \begin{equation}\label{eq:X}
        \Theta X = X \Theta, \qquad X\beta = \phi.
    \end{equation}
\end{lemma}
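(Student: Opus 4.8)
The plan is to exploit the fact that controllability of the single-vector pair $(\Theta, \beta)$ forces $\Theta$ to be cyclic, so that the vectors $\beta, \Theta\beta, \ldots, \Theta^{r-1}\beta$ form a basis of $\R^r$. Concretely, the hypothesis means that the controllability matrix $\mathcal{C} \coloneqq [\,\beta \;\; \Theta\beta \;\; \cdots \;\; \Theta^{r-1}\beta\,]$ is nonsingular. This basis is the central object of the argument: it lets me convert both the commutation constraint $\Theta X = X\Theta$ and the interpolation constraint $X\beta = \phi$ in \eqref{eq:X} into statements about how $X$ acts on a basis, where existence and uniqueness become transparent.

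For uniqueness I would argue directly from the two constraints. If $X$ solves \eqref{eq:X}, then commutation gives $X\Theta^k\beta = \Theta^k X\beta = \Theta^k\phi$ for every $k \ge 0$; in particular $X$ is prescribed on the basis vectors, namely $X(\Theta^k\beta) = \Theta^k\phi$ for $k = 0,\ldots,r-1$. Since a linear map is determined by its action on a basis, at most one such $X$ can exist.

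For existence I would promote this observation into a definition: let $X$ be the unique linear map satisfying $X(\Theta^k\beta) = \Theta^k\phi$ for $k = 0,\ldots,r-1$, which is well defined precisely because $\mathcal{C}$ is invertible. By construction $X\beta = \phi$, so it only remains to verify $\Theta X = X\Theta$, and it suffices to check this on the basis $\{\Theta^k\beta\}_{k=0}^{r-1}$. For the indices $k \le r-2$ this is a one-line check, since both $X\Theta(\Theta^k\beta)$ and $\Theta X(\Theta^k\beta)$ reduce to $\Theta^{k+1}\phi$.

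The step I expect to be the main obstacle is the boundary index $k = r-1$, where $\Theta^r\beta$ is no longer one of the chosen basis vectors and the defining relation does not apply directly. Here I would invoke the Cayley--Hamilton theorem: writing the characteristic polynomial of $\Theta$ as $s^r + c_{r-1}s^{r-1} + \cdots + c_0$, one has $\Theta^r = -\sum_{j=0}^{r-1} c_j \Theta^j$. Applying this to $\beta$ yields $X\Theta(\Theta^{r-1}\beta) = X(\Theta^r\beta) = -\sum_{j=0}^{r-1} c_j \Theta^j\phi$, whereas $\Theta X(\Theta^{r-1}\beta) = \Theta^r\phi = -\sum_{j=0}^{r-1} c_j \Theta^j\phi$, where the last equality uses Cayley--Hamilton a second time, now applied to $\phi$. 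The two sides coincide, so $\Theta X = X\Theta$ holds on the whole basis and hence on all of $\R^r$, completing the existence part and, together with the uniqueness argument, the proof.
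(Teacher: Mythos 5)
Your proof is correct, but it takes a genuinely different route from the paper. The paper's argument leans on a classical structural result: since $(\Theta,\beta)$ controllable with a single input vector forces $\Theta$ to be cyclic, it invokes Gantmacher's theorem that every matrix commuting with a cyclic matrix is a polynomial in that matrix, writes $X = \sum_{i=0}^{r-1}\varrho_i\Theta^i$, and then reduces $X\beta = \phi$ to the linear system $R\varrho = \phi$ with $R$ the (invertible) controllability matrix, which settles existence and uniqueness simultaneously and yields the closed form $\varrho = R^{-1}\phi$. You instead work directly on the Krylov basis $\{\Theta^k\beta\}_{k=0}^{r-1}$: uniqueness falls out immediately from $X\Theta^k\beta = \Theta^k\phi$, and existence is obtained by defining $X$ through this interpolation and verifying commutation on the basis, with Cayley--Hamilton handling the boundary index $k = r-1$. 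Your version is more elementary and self-contained---it needs nothing beyond Cayley--Hamilton and the invertibility of the controllability matrix, whereas the paper outsources the hardest step (characterizing the commutant of $\Theta$) to a citation. What the paper's route buys in exchange is the explicit polynomial representation of $X$ in powers of $\Theta$, which is a slightly stronger structural statement than existence and uniqueness alone, though nothing else in the article actually relies on that extra structure.
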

\begin{proof}
    We use the arguments of \cite[Prop. 4.1]{serrani2000semiglobal} to compute an explicit solution to \eqref{eq:X}.
    Since $(\Theta, \beta)$ is controllable, $\Theta$ is cyclic, i.e., its minimal and characteristic polynomials coincide.
    Then, by \cite[Pag. $222$]{gantmakher2000theory}, any matrix $X$ that satisfies equation $X\Theta = \Theta X$ of \eqref{eq:X}, can be expressed as
    \begin{equation}\label{eq:polynomial}
        X = \sum_{i = 0}^{r - 1}\varrho_{i} \Theta^i,
    \end{equation}
    with free parameters $\varrho_{0}, \ldots, \varrho_{r-1}$.
    By replacing \eqref{eq:polynomial} in the second equation of \eqref{eq:X}, it holds that
    \begin{equation}
        \sum_{i = 0}^{r - 1}\varrho_{i} \Theta^i \beta = R\varrho = \phi,
    \end{equation}
    where $R \coloneqq [\beta \; \Theta\beta \; \cdots \; \Theta^{r - 1}\beta]$ and $\varrho \coloneqq \col(\varrho_{0}, \ldots, \varrho_{r-1})$.
    Since $(\Theta, \beta)$ is controllable, $R$ is invertible and we obtain $\varrho = R^{-1}\phi$, so $X$ is uniquely determined.
\end{proof}

\section{Technical Proofs}\label{sec:proofs}

\subsection{Proof of Lemma \ref{lem:min_poly}}
Rewrite the first equation of \eqref{eq:Pi} as $\Pi F = (A - \Pi L C)\Pi$, then note that, post-multiplying both sides by $F$, we obtain
\begin{equation}
    \Pi F^2 = (A - \Pi L C) \Pi F = (A - \Pi L C)^2 \Pi.
\end{equation}
This operation can be repeated for arbitrary powers.
Therefore, for any polynomial $\rho(\cdot)$, the following identity holds:
\begin{equation}\label{eq:Pi_poly}
    \Pi \rho(F) = \rho(A - \Pi L C) \Pi.
\end{equation}
Let $v \in \C^n$ be any generalized left eigenvector of $A - \Pi L C$ with associated eigenvalue $s$, i.e., $v \neq 0$ such that, for some $i \in \N$ with $i \geq 1$, $v^\top (A - \Pi L C - s I_n)^i = 0$ and $v^\top (A - \Pi L C - s I_n)^j \neq 0$, for all $j \in \{0, \ldots, i-1\}$.
Using \eqref{eq:Pi_poly}, we obtain
\begin{equation}\label{eq:Pi_powers}
    \Pi(F - s I_\mu)^j = (A - \Pi L C - s I_n)^j\Pi, \quad j \in \{1, \ldots, i\}.
\end{equation}
Note that, due to $\rank \Pi = n$, any vector $\phi \in \C^n$, $\phi \neq 0$, is such that $\Pi^\top \phi \neq 0$.
Then, pre-multiplying both sides of \eqref{eq:Pi_powers} by the generalized left eigenvector $v^\top$, it follows that
\begin{equation}
    \begin{split}
        v^\top \Pi(F - s I_\mu)^j \neq 0,& \quad j \in \{1, \ldots, i - 1\}\\
        v^\top \Pi(F - s I_\mu)^i = 0,&
    \end{split}
\end{equation}
where $\Pi^\top v \neq 0$.
As a consequence, $\Pi^\top v$ is a generalized left eigenvector of $F$, with associated eigenvalue $s$.
Also, if some vectors $v_1 \neq 0$ and $v_2 \neq 0$ are linearly independent, then so are $\Pi^\top v_1$ and $\Pi^\top v_2$ because $\Pi^\top (v_1 - c v_2) = 0$, for some $c$, would contradict $\rank \Pi = n$.
Since we can find $n$ linearly independent generalized eigenvectors of $A - \Pi L C$, we can find $n$ corresponding linearly independent generalized eigenvectors of $F$ having the same eigenvalues.
This matching property between the eigenspaces implies the statement.
\hfill\QED

\subsection{Proof of Lemma \ref{lem:chi}}
By Lemma \ref{lem:min_poly} and the construction of $F_0$ in \eqref{eq:Delta}, all the modes of $A - \Pi L C$ are also modes of $F_0$.
Therefore, for any $\epsilon(0) = x(0)$, there exists a matrix $E \in \R^{\mu \times \delta}$ and a vector $\tilde{\epsilon}_0 \in \R^{\delta}$ such that $LC\epsilon(0) = E\tilde{\epsilon}_0$ and, for all $t\in[0, \tau]$,
\begin{equation}
    LC \epsilon(t) = Ee^{F_0 t}\tilde{\epsilon}_0 = E\tilde{\epsilon}(t),
\end{equation}
where $\tilde{\epsilon}(t) \in \R^\delta$ satisfies $\dot{\tilde{\epsilon}}(t) = F_0 \tilde{\epsilon}(t)$ and $\tilde{\epsilon}(0) = \tilde{\epsilon}_0$.
Since $(F_0, G_0)$ is controllable, it follows that $\tilde{\epsilon}(t) = Y\chi(t)$, where $Y$ is computed from
\begin{equation}
    Y F_0 = F_0 Y, \quad YG_0 = \tilde{\epsilon}_0,
\end{equation}
whose solution exists and is unique due to Lemma \ref{lem:X} given in Appendix \ref{sec:X}.
The proof is concluded by letting $D = EY$.
\hfill\QED

\subsection{Proof of Theorem \ref{thm:tuning_nu}}
\subsubsection{From a quadratic to a linear equation}
By rearranging the first equation of \eqref{eq:Pi} and using $H = C\Pi$, we obtain the following quadratic equation
\begin{equation}\label{eq:Pi_reduced}
    \Pi F = (A - \Pi L C)\Pi, \qquad \Pi G = B.
\end{equation}
Thus, we can find $\Pi$ with \eqref{eq:Pi_reduced} and then compute $H = C\Pi$.

Since pair $(C, A)$ is observable by Assumption \ref{hyp:ctrb_obs}, there exists a non-singular matrix $T_o \in \R^{n \times n}$ such that the matrices $A_o \coloneqq T_o A T_o^{-1}$, $B_o \coloneqq T_o B$, $C_o \coloneqq CT_o^{-1}$ are in observer canonical form, see Appendix \ref{sec:MIMO_obs_form}.
In particular, from \eqref{eq:AC_decomposition}, $A_o = \bar{A} + A_m \bar{C}$.
Also, since all observability indices are equal, $C_o = \bar{C}$.
Defining $Y \coloneqq T_o\Pi$, we can transform \eqref{eq:Pi_reduced} into:
\begin{equation}\label{eq:Y_quad}
    Y F = (\bar{A} + (A_m - Y L) \bar{C})Y, \quad Y G = B_o.
\end{equation}
Let
\begin{equation}\label{eq:D_lambda}
    \mathcal{D}_\lambda(s) \coloneqq \det(sI_\nu - \Lambda) = s^\nu + \theta_{\lambda, \nu -1}s^{\nu-1} + \ldots + \theta_{\lambda, 0},
\end{equation}
then define
\begin{equation}\label{eq:Lambda_o}
    \Lambda_o \coloneqq  \begin{bmatrix}
        0 & \cdots & 0 & -\theta_{\lambda, 0}\\
        1 & \cdots & 0 & -\theta_{\lambda, 1}\\
        \vdots & \ddots & \vdots & \vdots\\
        0 & \cdots & 1 & -\theta_{\lambda, \nu-1}
    \end{bmatrix}\!.
\end{equation}
Also, let $\theta_{\lambda} \coloneqq \col(\theta_{\lambda, 0}, \ldots, \theta_{\lambda, \nu - 1})$ and $\Psi \coloneqq A_m + I_p \otimes \theta_\lambda \in \R^{n \times p}$.
Recalling the structure of $\bar{A}$ and $\bar{C}$ in \eqref{eq:bar_AC}, we obtain
\begin{equation}\label{eq:Psi_assignment}
    \bar{A} + (A_m - \Psi) \bar{C} = I_p \otimes \Lambda_o.
\end{equation}
To solve the quadratic equation \eqref{eq:Y_quad}, we introduce the additional constraint $Y L = \Psi$, so that from \eqref{eq:Psi_assignment} we obtain
\begin{equation}\label{eq:Y_lin}
    Y F = (I_p \otimes \Lambda_o)Y, \qquad Y G = B_o, \qquad Y L = \Psi.
\end{equation}
Any solution to \eqref{eq:Y_lin} is also a solution to \eqref{eq:Y_quad} and ensures that $I_p \otimes \Lambda$ and $A - \Pi L C = T_o^{-1}(I_p \otimes \Lambda_o) T_o$ are similar.
\subsubsection{Equation decoupling}
The rest of the proof involves exploiting the tuning \eqref{eq:FGL_nu} to decouple \eqref{eq:Y_lin} into blocks that can be solved with Lemma \ref{lem:X}.
Let $Y = [Y_y\; Y_u]$, with $Y_y \in \R^{n \times n}$ and $Y_u \in \R^{n \times m\nu}$.
Then, we can decouple \eqref{eq:Y_lin} into
\begin{subequations}
    \begin{eqnarray}
        \label{eq:Y_y}
        Y_y (I_p \otimes \Lambda) = (I_p \otimes \Lambda_o)Y_y,& \quad Y_y (I_p \otimes \ell) = \Psi,\\
        \label{eq:Y_u}
        Y_u (I_m \otimes \Lambda) = (I_p \otimes \Lambda_o)Y_u,& \quad Y_u (I_m \otimes \ell) = B_o.
    \end{eqnarray}
\end{subequations}
By letting $Y_y = [Y_{y, ij}]_{1 \leq i, j \leq p}$, $Y_{u} = [Y_{u, ij}]_{1 \leq i \leq p, 1 \leq j \leq m}$, with blocks $Y_{y, ij}$, $Y_{u, ij} \in \R^{\nu \times \nu}$, and $\Psi = [\psi_{ij}]_{1 \leq i, j \leq p}$, $B_o = [b_{ij}]_{1 \leq i \leq p, 1 \leq j \leq m}$, with vectors $\psi_{ij}$, $b_{ij} \in \R^\nu$, \eqref{eq:Y_y} and \eqref{eq:Y_u} can be further split into equations of the form
\begin{subequations}
    \begin{eqnarray}
        Y_{y, ij} \Lambda = \Lambda_{o} Y_{y, ij},& \qquad Y_{y, ij} \ell = \psi_{ij},\\
        Y_{u, ij} \Lambda = \Lambda_{o} Y_{u, ij},& \qquad Y_{u, ij} \ell = b_{ij}.
    \end{eqnarray}
\end{subequations}
Finally, by letting $T_\lambda$ be the similarity transformation such that $\Lambda = T_\lambda\Lambda_o T_\lambda^{-1}$, we obtain
\begin{subequations}\label{eq:X_ij}
    \begin{eqnarray}
        X_{y, ij} \Lambda \! = \! \Lambda X_{y, ij},& \!\!\!X_{y, ij} \ell \! = \! T_\lambda \psi_{ij}, \;\;\;\; i, j \! \in \! \{1, \ldots, p\}\\
        X_{u, ij} \Lambda \! = \! \Lambda X_{u, ij},& X_{u, ij} \ell \! = \! T_\lambda b_{ij}, \;\;\;\; \left\{\!\!\!\begin{array}{c}
            i \! \in \! \{1, \ldots, p\}\\
            j \! \in \! \{1, \ldots, m\},
        \end{array}\right.
    \end{eqnarray}
\end{subequations}
where $X_{y, ij} \coloneqq T_\lambda Y_{y, ij}$, $X_{u, ij} \coloneqq T_\lambda Y_{u, ij}$.
We conclude the proof by invoking Lemma \ref{lem:X} to solve each equation in \eqref{eq:X_ij}.
\hfill\QED

\subsection{Proof of Theorem \ref{thm:ctrb_nu}}
\subsubsection{Input-output model of the non-minimal realization}
Let $V$ be the similarity transformation such that pair $(\Lambda_c, \ell_c) \coloneqq (V \Lambda V^{-1}, V \ell)$ is in controller canonical form:
\begin{equation}\label{eq:Lambda_c}
    \Lambda_c \coloneqq \begin{bmatrix}
        0 & 1 & \cdots & 0\\
        \vdots & \vdots & \ddots & \vdots\\
        0 & 0 & \cdots & 1\\
        -\theta_{\lambda, 0} & -\theta_{\lambda, 1} & \cdots & -\theta_{\lambda, \nu - 1}
    \end{bmatrix}\!, \quad \ell_c \coloneqq \begin{bmatrix}
        0 \\ \vdots \\ 0 \\  1
        \end{bmatrix}\!,
\end{equation}
where we used the parameters $\theta_\lambda$ defined in \eqref{eq:D_lambda}.
Then, using the definition of $F$, $G$, and $L$ in \eqref{eq:FGL_nu}, the pair $((I_{p + m} \otimes V)(F + LH)(I_{p + m} \otimes V)^{-1}, (I_{p + m} \otimes V)G)$ can be written as:
\begin{equation}
    \left(\begin{bmatrix}
        I_p \otimes \Lambda_c + (I_p \otimes \ell_c)\Theta_y & (I_p \otimes \ell_c)\Theta_u\\
        0_{m\nu \times n} & I_m \otimes \Lambda_c
    \end{bmatrix}, \begin{bmatrix}
        0_{n \times m}\\
        I_m \otimes \ell_c
    \end{bmatrix}\right),
\end{equation}
where $\Theta_y \in \R^{p \times n}$ and $\Theta_u \in \R^{p \times m\nu}$ are given by $\begin{bmatrix}
        \Theta_y & \Theta_u
    \end{bmatrix} \coloneqq H (I_{p + m}\otimes V^{-1})$.
In the coordinates $\zeta_c \coloneqq (I_{p + m} \otimes V)\zeta$, system \eqref{eq:non-minimal_plant_filter} becomes
\begin{equation}
    \begin{split}
        \dot{\zeta}_c &= \begin{bmatrix}
            I_p \otimes \Lambda_c & 0_{n \times m\nu}\\
            0_{m\nu \times n} & I_m \otimes \Lambda_c
        \end{bmatrix}\zeta_c + \begin{bmatrix}
            I_p \otimes \ell_c & 0_{n \times m}\\
            0_{m\nu \times p} & I_m \otimes \ell_c
        \end{bmatrix}\begin{bmatrix}
            y \\ u
        \end{bmatrix}\\
        y &= \begin{bmatrix}
        \Theta_y & \Theta_u
    \end{bmatrix} \zeta_c.
    \end{split}
\end{equation}
Applying the Laplace transform and the properties of the Kronecker product to the above equations, we obtain
\begin{equation}
    \begin{split}
        y(s) =\;& \Theta_y(I_p \otimes (sI_\nu - \Lambda_c)^{-1}\ell_c)y(s)\; +\\
        &+ \Theta_u(I_m \otimes (sI_\nu - \Lambda_c)^{-1}\ell_c)u(s),\\
    \end{split}
\end{equation}
where $u(s)$ and $y(s)$ are the transforms of $u(t)$ and $y(t)$.

Then, exploiting the controllable version of the structure theorem \cite[Ch. 3, Thm. 4.10]{antsaklis1997linear}, we obtain
\begin{equation}
    \begin{split}
        y(s) &= \frac{1}{\mathcal{D}_\lambda(s)}\begin{bmatrix}
            \Theta_y & \Theta_u
        \end{bmatrix} \left(I_{p + m} \otimes \begin{bmatrix}
            1 \\ \vdots \\ s^{\nu-1}
        \end{bmatrix}\right)\begin{bmatrix}
            y(s)\\ u(s)
        \end{bmatrix}\\
        &=\frac{\mathcal{N}_y(s)}{\mathcal{D}_\lambda(s)}y(s) + \frac{\mathcal{N}_u(s)}{\mathcal{D}_\lambda(s)}u(s),
    \end{split}
\end{equation}
where $\mathcal{D}_\lambda(s) \in \C$ is defined in \eqref{eq:D_lambda}, while $\mathcal{N}_y(s) \in \C^{p \times p}$ and $\mathcal{N}_u(s) \in \C^{p \times m}$ are polynomial matrices.

Using $s$ with some abuse of notation to also denote the differential operator $d/dt(\cdot)$, system \eqref{eq:non-minimal_plant} can be represented with the following polynomial matrix description:
\begin{equation}
    (\mathcal{D}_\lambda(s) I_p - \mathcal{N}_y(s))y(t) = \mathcal{N}_u(s)u(t).
\end{equation}
See \cite[Ch. 7]{antsaklis1997linear} for definitions and results related to polynomial matrix descriptions of LTI systems.
Note that the polynomial matrix $\mathcal{D}_\lambda(s) I_p - \mathcal{N}_y(s)$ is such that the matrix of coefficients of degree $\nu$ is $I_p$ since each entry of $\mathcal{N}_y(s)$ is a polynomial of degree at most $\nu  - 1$.
As a consequence, $\deg \det(\mathcal{D}_\lambda(s) I_p - \mathcal{N}_y(s)) = n$.
On the other hand, exploiting the observable version of the structure theorem \cite[Ch. 3, Thm. 4.11]{antsaklis1997linear}, there exists a polynomial matrix description of plant \eqref{eq:plant} of the form
\begin{equation}\label{eq:obs_structure}
    \mathcal{D}(s)y(t) = \mathcal{N}(s)u(t),
\end{equation}
where $\mathcal{D}(s)$ and $\mathcal{N}(s)$ are left coprime because \eqref{eq:obs_structure} is equivalent to system \eqref{eq:plant}, which is controllable by Assumption \ref{hyp:ctrb_obs} (see \cite[Ch. 7, Pag. 560 and Thm. 3.4]{antsaklis1997linear} for further details).
Also, by construction as in \cite[Ch. 3, Thm. 4.11]{antsaklis1997linear} and Assumption \ref{hyp:nu}, the matrix of coefficients of degree $\nu$ of $\mathcal{D}(s)$ is $I_p$, thus $\deg\det(\mathcal{D}(s)) = n$.
    
Moreover, by Lemma \ref{lem:Pi} and Remark \ref{rem:transfer}, systems \eqref{eq:plant} and \eqref{eq:non-minimal_plant} have the same transfer matrix, i.e., $\mathcal{D}^{-1}(s)\mathcal{N}(s)= (\mathcal{D}_\lambda(s) I_p - \mathcal{N}_y(s))^{-1}\mathcal{N}_u(s)$.
Then, there exists a matrix $\mathcal{R}(s)$ such that
\begin{equation}\label{eq:i_o_matching}
    \begin{bmatrix}
        \mathcal{D}_\lambda(s) I_p - \mathcal{N}_y(s) & \mathcal{N}_u(s)
    \end{bmatrix} = \mathcal{R}(s)\begin{bmatrix}
        \mathcal{D}(s) & \mathcal{N}(s) 
    \end{bmatrix}.
\end{equation}
However, from $\det(\mathcal{D}_\lambda(s) I_p - \mathcal{N}_y(s)) = \det(\mathcal{R}(s)\det(\mathcal{D}(s))$ and $\deg \det(\mathcal{D}_\lambda(s) I_p - \mathcal{N}_y(s)) = \deg\det(\mathcal{D}(s)) = n$, we obtain $\deg\det(\mathcal{R}(s)) = 0$ and, as a consequence, $\det(\mathcal{R}(s)) = \alpha \neq 0$, for some $\alpha \in \R$.
In other words, $\mathcal{R}(s)$ is unimodular.
Thus, by \cite[Ch. 7, Thm 2.5]{antsaklis1997linear}, we conclude that
\begin{equation}\label{eq:rank:D_N}
    \rank\begin{bmatrix}
        \mathcal{D}_\lambda(s) I_p - \mathcal{N}_y(s) & \mathcal{N}_u(s)
    \end{bmatrix} = p.
\end{equation}

\subsubsection{PBH test} We now prove the statement exploiting \eqref{eq:rank:D_N}.
In particular, using the PBH test, we verify that
\begin{equation}\label{eq:PBH_nu}
    \begin{bmatrix}
        sI_n - I_p \! \otimes \! \Lambda_c - (I_p \! \otimes \ell_c)\Theta_y & -(I_p \! \otimes \ell_c)\Theta_u & 0_{n \times m}\\
        0_{m\nu \times n} & sI_{m\nu} - I_m \! \otimes \! \Lambda_c & I_m \! \otimes \ell_c
    \end{bmatrix}
\end{equation}
has rank $\mu = n + m\nu$, for all $s \in \sigma (I_p \otimes \Lambda_c + (I_p \otimes \ell_c)\Theta_y) \cup \sigma (\Lambda)$.
Define the unimodular matrix
\begin{equation}
    \mathcal{T} \coloneqq \begin{bmatrix}
        1 & 0 &  & \cdots & 0\\
        s & 1 & 0 & & \vdots\\
        s^2 & s & 1 & \ddots& \\
        \vdots & \vdots & \ddots  & \ddots & 0\\
        s^{\nu-1} & s^{\nu-2} & \cdots & s & 1
    \end{bmatrix} \in \C^{\nu \times \nu},
\end{equation}
then post-multiply \eqref{eq:PBH_nu} by matrix $\bdiag(I_{p + m} \otimes \mathcal{T}, I_m) \in \C^{(\mu + m)\times(\mu + m)}$.
This operation is justified by the fact that
\begin{equation}
    (sI_\nu - \Lambda_c) \mathcal{T}= \begin{bmatrix}
        0 & -1 & \cdots & 0\\
        \vdots & \vdots & \ddots & \vdots\\
        0 & 0 & \cdots & -1\\
        \mathcal{D}_{\lambda}(s) & \star & \cdots & \star
    \end{bmatrix},
\end{equation}
where $\star$ denotes polynomials of $s$, and similarly, for any $\theta \coloneqq \col(\theta_0, \ldots, \theta_{\nu-1})$,
\begin{equation}
    -\ell_c\theta^\top \mathcal{T} = \begin{bmatrix}
        0 & 0 & \cdots & 0\\
        \vdots & \vdots & & \vdots\\
        0 & 0 & \cdots & 0\\
        -\mathcal{N}_\theta(s) & \star & \cdots & \star
    \end{bmatrix},
\end{equation}
where $\mathcal{N}_\theta(s) \coloneqq \theta_{\nu-1}s^{\nu - 1} + \ldots + \theta_0$.
A similar operation is performed in the proof of \cite[Lem. 4.2]{isidori2017lectures}.
After applying a simple permutation of the rows to group the polynomials in the last row of each $\nu \times \nu$ block, we obtain the following matrix 
(where blank terms represent zeros, $\star$ again indicate polynomials of $s$, and lines highlights the same blocks appearing in \eqref{eq:PBH_nu}):
\begin{equation}
    \left[\begin{array}{cc|cc|c}
        & -I_{n-p} & & & \\
        \mathcal{D}_{\lambda}(s)I_p - \mathcal{N}_y(s) & \star & -\mathcal{N}_u(s)  & \star & \\
        \hline
        & & & -I_{m(\nu - 1)} & \\
        & & \mathcal{D}_{\lambda}(s)I_{m} & \star & I_m
    \end{array}\right].
\end{equation}
By performing row and column operations with unimodular matrices, we can use the identity matrices to eliminate the $\star$ terms and $\mathcal{D}_{\lambda}(s)I_m$ in the last block row.
After reordering the rows and the columns, we obtain that the PBH test with \eqref{eq:PBH_nu} is equivalent to verifying
\begin{equation}
    \rank\begin{bmatrix}
        I_{\mu - p} & 0_{(\mu - p) \times p} &0_{(\mu - p) \times m}\\
        0_{p \times (\mu - p)} & \mathcal{D}_\lambda(s) I_p - \mathcal{N}_y(s) & \mathcal{N}_u(s)
    \end{bmatrix} = \mu,
\end{equation}
which is true for all $s \in \C$ due to \eqref{eq:rank:D_N}.
\hfill\QED

\subsection{Proof of Theorem \ref{thm:nu_hat}}
Suppose that $\hat{\nu} \geq \nu + 1$, then define the matrix gains $\Lambda \coloneqq -\diag(\lambda_1, \ldots, \lambda_{\nu})$, $\tilde{\Lambda} \coloneqq -\diag(\lambda_{\nu + 1}, \ldots, \lambda_{\hat{\nu}})$, $\ell \coloneqq \col(\gamma_1, \ldots, \gamma_\nu)$, $\tilde{\ell} \coloneqq (\gamma_{\nu + 1}, \ldots, \gamma_{\hat{\nu}})$, and the dimensions $\mu_0 \coloneqq \nu(p + m)$, $\tilde{\mu} \coloneqq (\hat{\nu} - \nu)(p + m)$.
Note that the dynamics of $\hat{\zeta}$ in \eqref{eq:filter_nu_hat} can be decoupled into
\begin{equation}\label{eq:zeta_0_tilde}
    \begin{split}
        \dot{\hat{\zeta}}_0(t) &= (I_{p+m} \otimes \Lambda)\hat{\zeta}_0(t) + (I_{p+m} \otimes \ell)\begin{bmatrix}
            y(t) \\ u(t)
        \end{bmatrix}\\
        &= F\hat{\zeta}_0(t) + Gu(t) + Ly(t)\\
        \dot{\tilde{\zeta}}(t) &= (I_{p+m} \otimes \tilde{\Lambda})\tilde{\zeta}(t) + (I_{p+m} \otimes \tilde{\ell})\begin{bmatrix}
            y(t) \\ u(t)
        \end{bmatrix}\\
        &= \tilde{F}\tilde{\zeta}(t) + \tilde{G}u(t) + \tilde{L}y(t),
    \end{split}
\end{equation}
with $\hat{\zeta}_0 \in \R^{\mu_0}$ and $\tilde{\zeta} \in \R^{\tilde{\mu}}$.
By Theorem \ref{thm:tuning_nu}, there exist $\Pi \in \R^{n \times \mu_0}$ and $H \in \R^{p \times \mu_0}$ such that
\begin{itemize}
    \item \eqref{eq:Pi} holds, with $F$, $G$, and $L$ given in \eqref{eq:zeta_0_tilde}.
    \item $A - \Pi L C$ is similar to $I_p \otimes \Lambda$.
\end{itemize}
Then, applying Lemma \ref{lem:redundancy}, we obtain that $\hat{\Pi} \coloneqq [\Pi\;\; 0_{n \times \tilde{\mu}}]$ and $\hat{H} \coloneqq C\hat{\Pi}$ solve \eqref{eq:Pi}, with $F$, $G$, and $L$ replaced by
\begin{equation}
    \hat{F} \coloneqq \bdiag(F, \tilde{F}), \quad \hat{G} \coloneqq \begin{bmatrix}
        G \\ \tilde{G}
    \end{bmatrix}, \quad \hat{L} \coloneqq \begin{bmatrix}
        L \\ \tilde{L}
    \end{bmatrix}.
\end{equation}
In particular, note that
\begin{equation}
    \hat{F} + \hat{L}\hat{H} = \begin{bmatrix}
        F + LH & 0_{\mu\times\tilde{\mu}}\\
        \tilde{L}H & I_{p + m}\otimes \tilde{\Lambda}
    \end{bmatrix}.
\end{equation}
We can show that the pair $(\hat{F} + \hat{L}\hat{H}, \hat{G})$ is not controllable.
Indeed, the matrix
\begin{equation}\label{eq:PBH_nu_hat}
    \begin{bmatrix}
        sI_\mu - (F + LH) & 0_{\mu\times\tilde{\mu}} & G\\
        -\tilde{L}H & sI_{\tilde{\mu}} - (I_{p + m}\otimes \tilde{\Lambda}) & \tilde{G}
    \end{bmatrix}
\end{equation}
is such that, for any $s \in \sigma(\tilde{\Lambda})$, $p + m$ columns become zero due to the term $sI_{\tilde{\mu}} - (I_{p + m}\otimes \tilde{\Lambda})$.
As a consequence, for $s \in \sigma(\tilde{\Lambda})$, \eqref{eq:PBH_nu_hat} has $\hat{\nu}(p + m)$ rows and at most $(\hat{\nu} - 1)(p + m) + m = \hat{\nu}(p + m) - p < \hat{\nu}(p + m)$ linearly independent columns.
We infer that each eigenvalue of $\tilde{\Lambda}$ has an eigenvector belonging to the uncontrollable subspace.

Combining the previous results and following the procedure of Section \ref{sec:stabilization}, we obtain that the data $\chi(t)$, $\hat{\zeta}(t)$ in \eqref{eq:filter_nu_hat} satisfy a differential equation of the form \eqref{eq:chi_zeta}, which can be written explicitly as follows (where blank terms represent zeros):
\begin{equation}\label{eq:chi_zeta_nu_hat}
    \begin{bmatrix}
        \dot{\chi}_0(t)\\
        \dot{\tilde{\chi}}(t)\\
        \dot{\hat{\zeta}}_0(t)\\
        \dot{\tilde{\zeta}}(t)
    \end{bmatrix} = \begin{bmatrix}
        \Lambda &  &  & \\
         & \tilde{\Lambda} &  & \\
        D &  & F + LH & \\
        \tilde{D} &  & \tilde{L}H & I_{p + m} \otimes \tilde{\Lambda}
    \end{bmatrix}\begin{bmatrix}
        \chi_0(t)\\
        \tilde{\chi}(t)\\
        \hat{\zeta}_0(t)\\
        \tilde{\zeta}(t)
    \end{bmatrix} + \begin{bmatrix}
         \\ \\ G \\ \tilde{G}
    \end{bmatrix} u(t),
\end{equation}
with $\col(\chi_0(0), \tilde{\chi}(0)) = \col(\ell, \tilde{\ell})$ and $\col(\hat{\zeta}_0(0), \tilde{\zeta}(0)) = 0$.
Note that there is no coupling between $\tilde{\chi}(t)$ and $\col(\hat{\zeta}_0(t), \tilde{\zeta}(t))$ because \eqref{eq:chi_zeta_nu_hat} is obtained by applying Lemma \ref{lem:chi} to transform $\epsilon(t)$ into $\chi(t)$, and no modes of $\tilde{\Lambda}$ are contained in $\epsilon(t)$ since $A - \Pi L C$ is similar to $I_p \otimes \Lambda$.

The uncontrollable subsystem of \eqref{eq:chi_zeta_nu_hat} can be written as
\begin{equation}
    \begin{bmatrix}
        \dot{\tilde{\chi}}(t)\\
        \dot{\chi}_0(t)\\
        \dot{\hat{x}}_{\bar{c}}
    \end{bmatrix} = \begin{bmatrix}
        \tilde{\Lambda} &  & \\
         & \Lambda & \\
         & D_{\bar{c}} & A_{\bar{c}}
    \end{bmatrix}\begin{bmatrix}
        \tilde{\chi}(t)\\
        \chi_0(t)\\
        \hat{x}_{\bar{c}}(t)
    \end{bmatrix},
\end{equation}
for some matrices $D_{\bar{c}}$ and $A_{\bar{c}}$.
Note that $\sigma(\tilde{\Lambda}) \subset \sigma(A_{\bar{c}})$ as we have proved above using \eqref{eq:PBH_nu_hat}.
We conclude the proof by noticing that each eigenvalue $-\lambda_i$, $i \in \{\nu + 1, \ldots, \hat{\nu}\}$ appears both in $\sigma(\tilde{\Lambda})$ and in $\sigma\Bigl(\begin{bsmallmatrix}
    \Lambda & \\
    D_{\bar{c}} & A_{\bar{c}}
\end{bsmallmatrix}\Bigr)$.
Thus, $\mathcal{B}$ in \eqref{eq:batches_nu_hat} loses rank due to the fact that, after a change of coordinates, two rows of the data are equal to $[1\; e^{-\lambda_i \Ts}\; \cdots e^{-\lambda_i(N-1)\Ts}]$.
\hfill\QED

\section*{Acknowledgment}
We thank Prof. Andrea Serrani for the insightful discussions inspiring the results of this article.

\section*{References}

\vspace{-10pt}

\bibliographystyle{IEEEtran}
\bibliography{data-driven_bib}

\newpage

\begin{IEEEbiography}[{\includegraphics[width=1in,height=1.25in,clip,keepaspectratio]{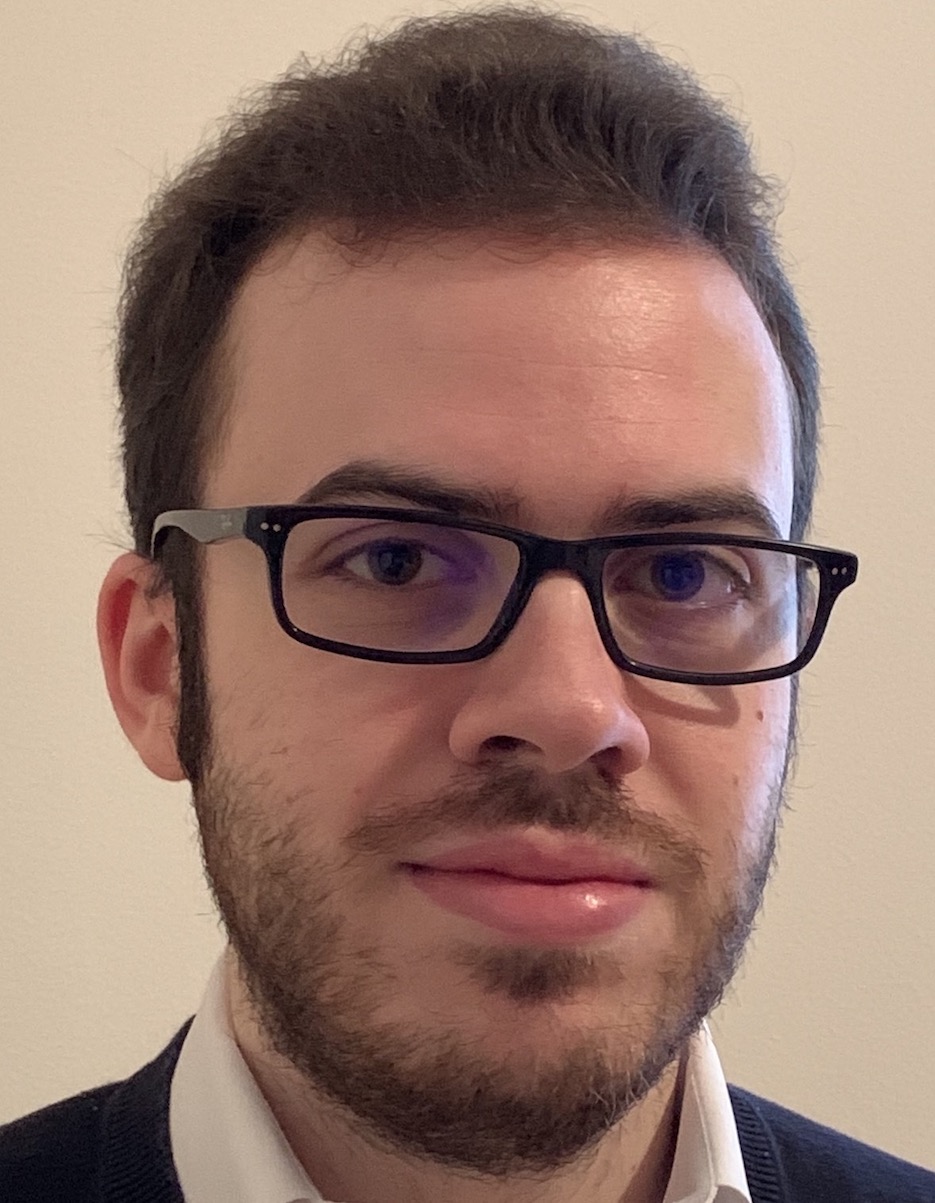}}]{Alessandro Bosso} (Member, IEEE),
    received the Ph.D. degree in Automatic Control from the University of Bologna, Bologna, Italy, in 2020.
    Currently, he is a Tenure-Track Researcher at the Department of Electrical, Electronic, and Information Engineering (DEI), University of Bologna.
    He has been a visiting scholar at The Ohio State University and at the University of California, Santa Barbara.
    His research interests include nonlinear adaptive control, hybrid dynamical systems, and the control of mechatronic systems.
    He is the recipient of a Marie Sk{\l}odowska-Curie Postdoctoral Fellowship.
\end{IEEEbiography}

\begin{IEEEbiography}[{\includegraphics[scale=2, width=1in,height=1.25in,clip,keepaspectratio]{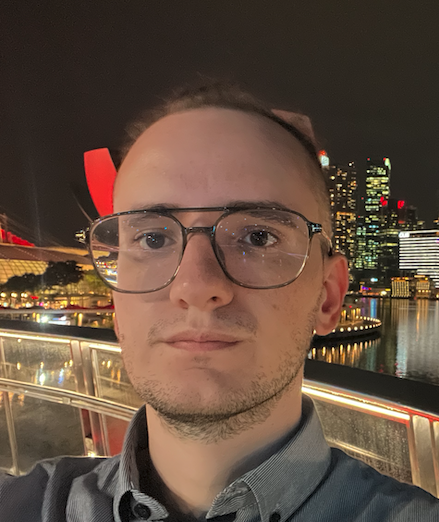}}]{Marco Borghesi} (Student Member, IEEE),
    received the master's degree in Automation Engineering from the University of Bologna in 2021 and the Ph.D. degree in Biomedical, Electrical, and Systems Engineering from the same institution in 2025.
	Currently, he is a postdoc at CASY, Department of Electrical, Electronic, and Information Engineering, University of Bologna. 
	He was a visiting scholar at the University of Stuttgart in 2024.
	His research interests include adaptive control, system theory, optimal control, and reinforcement learning.
\end{IEEEbiography}

\begin{IEEEbiography}[{\includegraphics[width=1in,height=1.25in,clip,keepaspectratio]{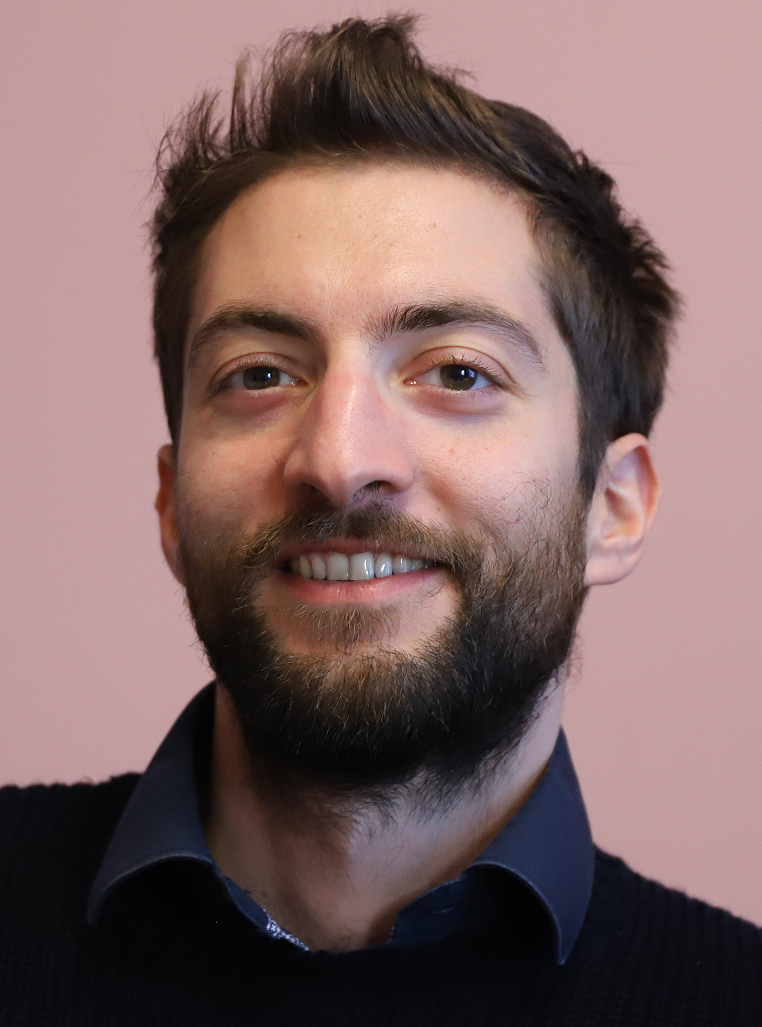}}] {Andrea Iannelli} (Member, IEEE)
    is an Assistant Professor in the Institute for Systems Theory and Automatic Control at the University of Stuttgart. He completed his B.Sc. and M.Sc. degrees in Aerospace Engineering at the University of Pisa and received his PhD from the University of Bristol. He was also a postdoctoral researcher in the Automatic Control Laboratory at ETH Zurich. His main research interests are centered around robust and adaptive control, uncertainty quantification, and sequential decision-making.  He serves the community as Associated Editor for the International Journal of Robust and Nonlinear Control and as IPC member of international conferences in the areas of control, optimization, and learning.
\end{IEEEbiography}

\begin{IEEEbiography}[{\includegraphics[width=1in,height=1.25in,clip,keepaspectratio]{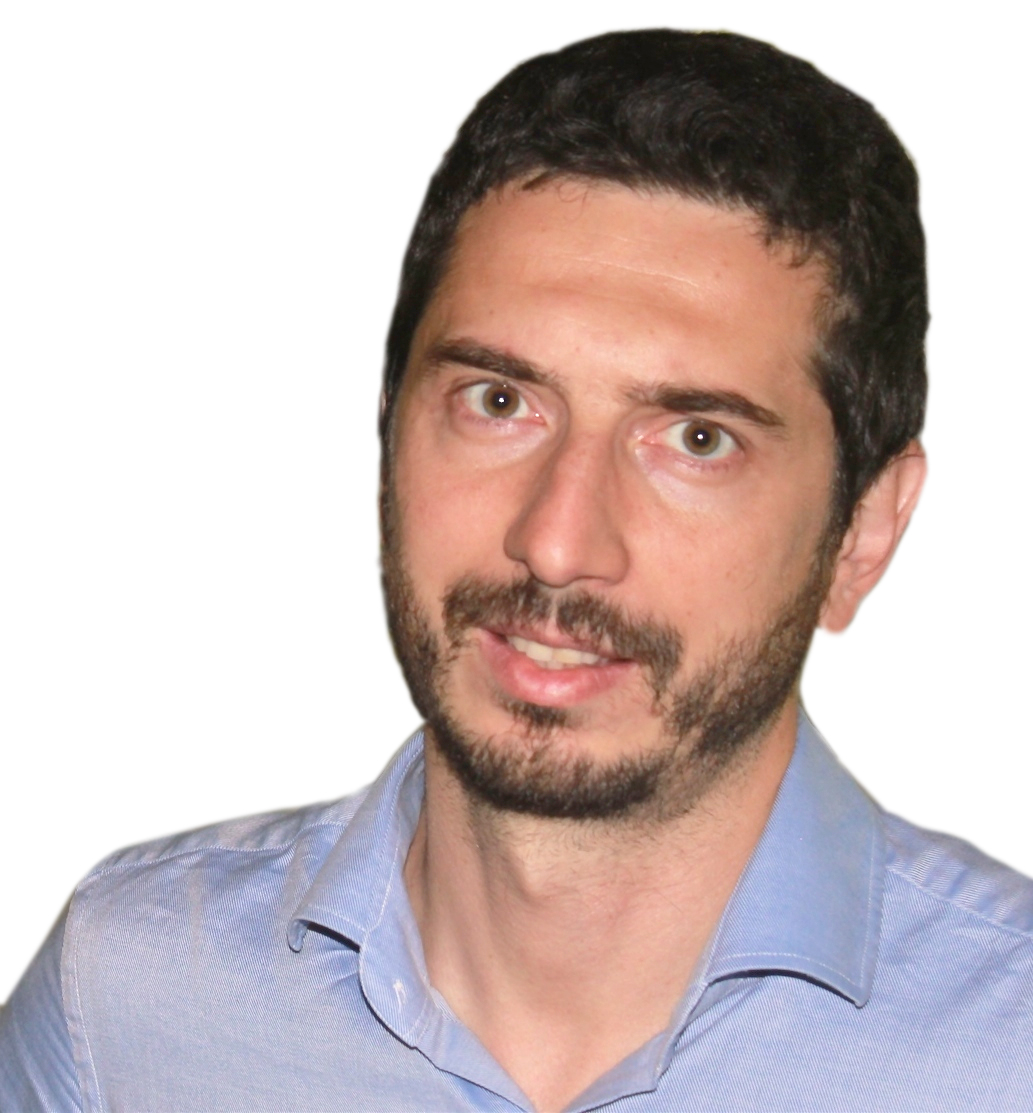}}]{Giuseppe Notarstefano} (Member, IEEE),
    is a Professor at the Department of Electrical, Electronic, and Information Engineering G. Marconi at Alma Mater Studiorum Universit\`a di Bologna. 
    He was Associate Professor (June ‘16 – June ‘18) and previously Assistant Professor, Ricercatore, (February ‘07 – June ‘16) at the Universit\`a del Salento, Lecce, Italy. 
    He received the Laurea degree “summa cum laude” in Electronics Engineering from the Universit\`a di Pisa in 2003 and the Ph.D. degree in Automation and Operation Research from the Universit\`a di Padova in 2007. 
    He has been visiting scholar at the University of Stuttgart, University of California Santa Barbara, and University of Colorado Boulder. 
	His research interests include distributed optimization, cooperative control in complex networks, applied nonlinear optimal control, and trajectory optimization and maneuvering of aerial and car vehicles. 
	He serves as an Associate Editor for IEEE Transactions on Automatic control, IEEE Transactions on Control Systems Technology and IEEE Control Systems Letters. 
	He is also part of the Conference Editorial Board of IEEE Control Systems Society and EUCA. 
	He is recipient of an ERC Starting Grant.
\end{IEEEbiography}

\begin{IEEEbiography}[{\includegraphics[width=1in,height=1.25in,clip,keepaspectratio]{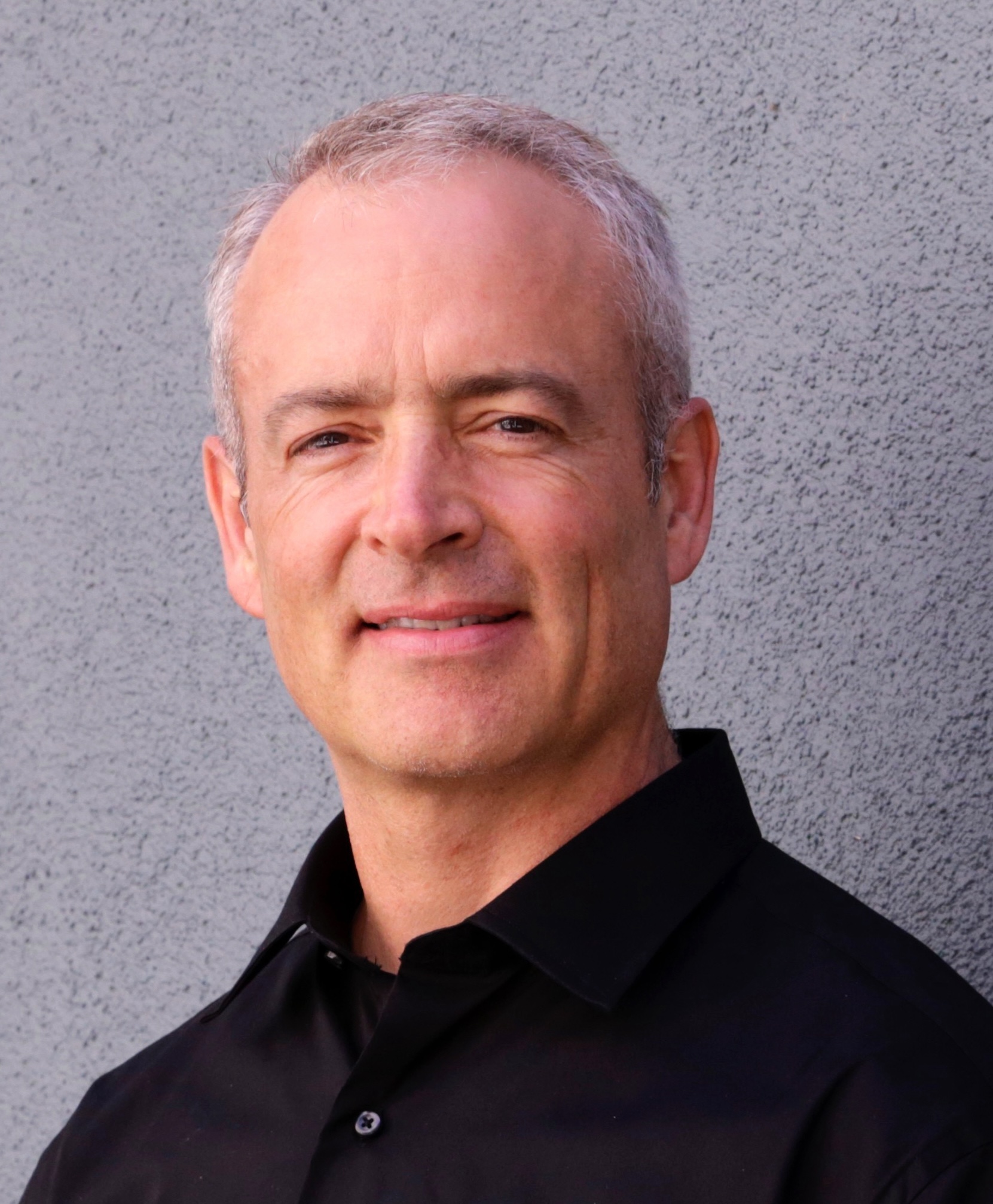}}]{Andrew R. Teel} (Fellow, IEEE),
    received his A.B. degree in Engineering Sciences from Dartmouth College in Hanover, New Hampshire, in 1987, and his M.S. and Ph.D. degrees in Electrical Engineering from the University of California, Berkeley, in 1989 and 1992, respectively. After receiving his Ph.D., he was a postdoctoral fellow at the Ecole des Mines de Paris in Fontainebleau, France. In 1992 he joined the faculty of the Electrical Engineering Department at the University of Minnesota, where he was an assistant professor until 1997. Subsequently, he joined the faculty of the Electrical and Computer Engineering Department at the University of California, Santa Barbara, where he is currently a Distinguished Professor and director of the Center for Control, Dynamical systems, and Computation.  His research interests are in nonlinear and hybrid dynamical systems, with a focus on stability analysis and control design. He has received NSF Research Initiation and CAREER Awards, the 1998 IEEE Leon K. Kirchmayer Prize Paper Award, the 1998 George S. Axelby Outstanding Paper Award, and was the recipient of the first SIAM Control and Systems Theory Prize in 1998. He was the recipient of the 1999 Donald P. Eckman Award and the 2001 O. Hugo Schuck Best Paper Award, both given by the American Automatic Control Council, and also received the 2010 IEEE Control Systems Magazine Outstanding Paper Award.  In 2016, he received the Certificate of Excellent Achievements from the IFAC Technical Committee on Nonlinear Control Systems.  In 2020, he and his co-authors received the HSCC Test-of-Time Award. He is Editor-in-Chief for Automatica, and a Fellow of the IEEE and of IFAC.
\end{IEEEbiography}

\end{document}